\newcommand{\removelatexerror}{\let\@latex@error\@gobble}
\newcommand{\specialcell}[2][c]{%
  {\renewcommand{\arraystretch}{1.2}%
   \begin{tabular}[#1]{@{}l@{}}#2\end{tabular}}}
\definecolor{LightGray}{rgb}{.8,.8,.8}
\def\Big#1{\makebox(0,0){{\Large#1}}}
\def\mathcolor#1#{\@mathcolor{#1}}
\def\@mathcolor#1#2#3{%
  \protect\leavevmode
  \begingroup
    \color#1{#2}#3%
  \endgroup
}
\newcommand{\ie}{\textit{i.e.,~}}
\newcommand{\eg}{\textit{e.g.,~}}
\newcommand{\Resp}{\textit{resp.~}}
\newcommand{\wrt}{\textit{w.r.t.~}}
\newcommand{\etal}{\textit{et~al.~}}
\newcommand{\IncUSR}{\mbox{\textsf{Inc-uSR}}}
\newcommand{\IncUSRone}{\mbox{\textsf{Inc-uSR}}}
\newcommand{\IncUSRtwo}{\mbox{\textsf{Inc-uSR-C1}}}
\newcommand{\IncUSRthree}{\mbox{\textsf{Inc-uSR-C2}}}
\newcommand{\IncUSRfour}{\mbox{\textsf{Inc-uSR-C3}}}
\newcommand{\IncSRAll}{\mbox{\textsf{Inc-SR-All}}}
\newcommand{\IncSRAllP}{\mbox{\textsf{Inc-SR-All-P}}}
\newcommand{\LTSF}{\mbox{\textsf{L-TSF}}}
\newcommand{\IncBSR}{\mbox{\textsf{Inc-bSR}}}
\newcommand{\PartialSim}{\mbox{\textsf{PartialSim}}}
\newcommand{\IncSR}{\mbox{\textsf{Inc-SR}}}
\newcommand{\IncSVD}{\mbox{\textsf{Inc-SVD}}}
\newcommand{\Batch}{\mbox{\textsf{Batch}}}
\newcommand{\YOUTU}{\mbox{\textsc{YouTu}}}
\newcommand{\CITH}{\mbox{\textsc{CitH}}}
\newcommand{\DBLP}{\mbox{\textsc{DBLP}}}
\newcommand{\WEBB}{\mbox{\textsc{WebB}}}
\newcommand{\WEBG}{\mbox{\textsc{WebG}}}
\newcommand{\CITP}{\mbox{\textsc{CitP}}}
\newcommand{\SOCL}{\mbox{\textsc{SocL}}}
\newcommand{\UK}{\mbox{\textsc{UK05}}}
\newcommand{\IT}{\mbox{\textsc{IT04}}}
\newcommand{\AFF}{\mbox{\textsf{AFF}}}
\newcommand{\op}{\mbox{\textsf{op}}}
\begin{document}

\title{Dynamical SimRank Search on Time-Varying Networks
}


\author{}
\institute{}

\author{Weiren Yu     \and
        Xuemin Lin    \and
        Wenjie Zhang  \and
        Julie A. McCann
}


\institute{W. Yu \at
              School of Engineering and Applied Science,
              Aston University, \\
              \email{w.yu3@aston.ac.uk}           
           \and
           X. Lin \and W. Zhang \at
              School of Computer Science and Engineering,\\
              University of New South Wales, \\
             \email{\{lxue, zhangw\}@cse.unsw.edu.au}
           \and
           J. A. McCann \at
              Department of Computing,
              Imperial College London, \\
              \email{j.mccann@imperial.ac.uk}           
}

\date{Received: date / Accepted: date}

\maketitle

\begin{abstract}
SimRank is an appealing pair-wise similarity measure based on graph structure.
It iteratively follows the intuition that two nodes are assessed as similar if they are pointed to by similar nodes.
Many real graphs are large, and links are constantly subject to minor changes.
In this article, we study the efficient dynamical computation of all-pairs SimRanks on time-varying graphs.
Existing methods for the dynamical SimRank computation (\eg L-TSF \cite{Shao2015} and READS \cite{Zhang2017}) mainly focus on top-$k$ search with respect to a given query.
For all-pairs dynamical SimRank search, Li \etal\!\!'s approach \cite{Li2010} was proposed for this problem.
It first factorizes the graph via a singular value decomposition (SVD),
and then incrementally maintains such a factorization in response to link updates at the expense of exactness.
As a result, all pairs of SimRanks are updated approximately,
yielding $O({r}^{4}n^2)$ time and $O({r}^{2}n^2)$ memory in a graph with $n$ nodes,
where $r$ is the target rank of the low-rank SVD.

Our solution to the dynamical computation of SimRank comprises of five ingredients:
(1) We first consider edge update that does not accompany new node insertions.
We show that the SimRank update $\mathbf{\Delta S}$ in response to every link update is expressible as a rank-one Sylvester matrix equation.
This provides an incremental method requiring $O(Kn^2)$ time and $O(n^2)$ memory in the worst case to update $n^2$ pairs of similarities for $K$ iterations.
(2) To speed up the computation further,
we propose a lossless pruning strategy that captures the ``affected areas'' of $\mathbf{\Delta S}$ to eliminate unnecessary retrieval.
This reduces the time of the incremental SimRank to $O(K(m+|\AFF|))$,
where $m$ is the number of edges in the old graph, and $|\AFF| \ (\le n^2)$ is the size of ``affected areas'' in $\mathbf{\Delta S}$,
and in practice, $|\AFF| \ll n^2$.
(3) We also consider edge updates that accompany node insertions, and categorize them into three cases,
according to which end of the inserted edge is a new node.
For each case, we devise an efficient incremental algorithm that can support new node insertions and accurately update the affected SimRanks.
(4) We next study batch updates for dynamical SimRank computation,
and design an efficient batch incremental method that handles ``similar sink edges'' simultaneously and eliminates redundant edge updates.
(5) To achieve linear memory,
we 
devise a memory-efficient strategy that dynamically updates all pairs of SimRanks column by column in just $O(Kn+m)$ memory,
without the need to store all $(n^2)$ pairs of old SimRank scores.
%
%
%
%
%
%
Experimental studies on various datasets demonstrate that our solution substantially outperforms the existing incremental SimRank methods,
and is faster and more memory-efficient than its competitors on million-scale graphs. 
%
%
%
\keywords{similarity search \and SimRank computation \and dynamical networks \and optimization}
\end{abstract}
\begin{figure*}[h!t] \centering
  \includegraphics[width=\linewidth]{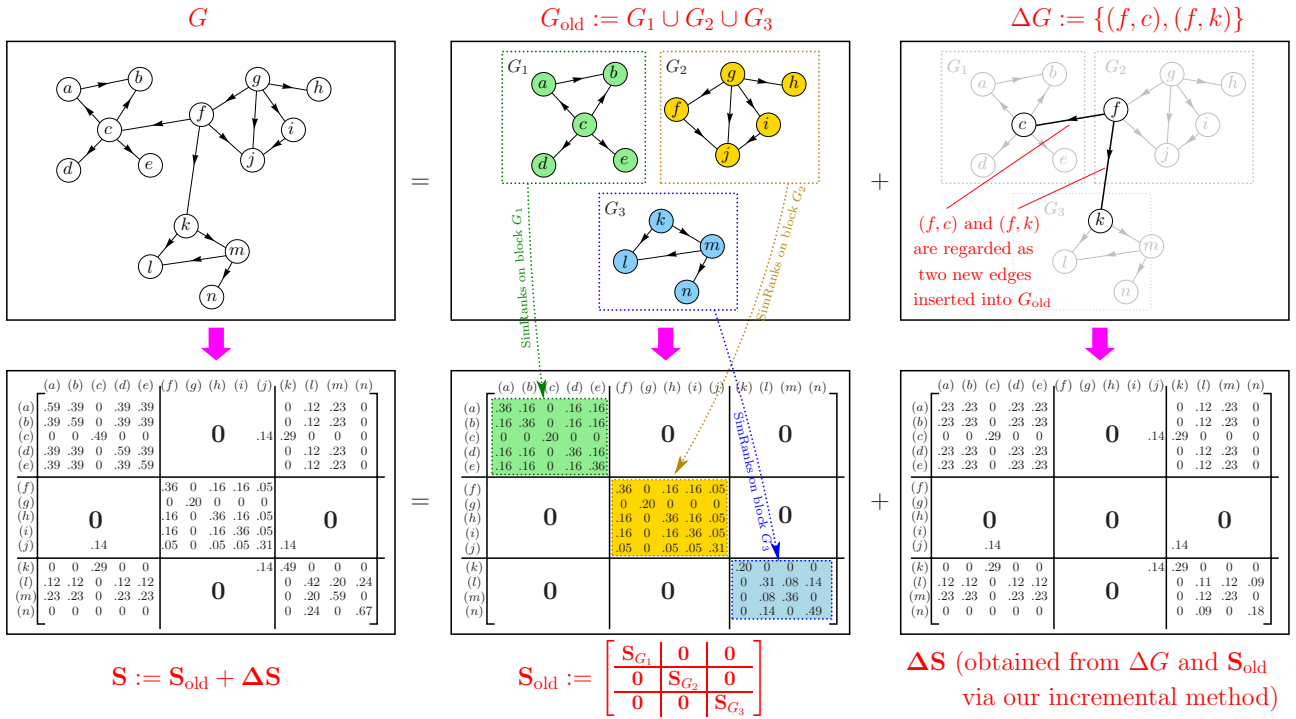}
  \caption{Incremental SimRank problem can decentralise large-scale SimRank retrieval over $G$} \label{fig:10} 
\end{figure*}
\section{Introduction}
Recent rapid advances in web data management reveal that
link analysis is becoming an important tool for similarity assessment.
Due to the growing number of applications in
\eg social networks, recommender systems, citation analysis, and link prediction \cite{Jeh2002},
a surge of graph-based similarity measures have surfaced over the past decade.
For instance, Brin and Page \cite{Berkhin2005} proposed a very successful relevance measure, called Google PageRank, to rank web pages.
Jeh and Widom \cite{Jeh2002} devised SimRank, an appealing pair-wise similarity measure that quantifies the structural equivalence of two nodes based on link structure.
Recently, Sun \etal \cite{Sun2011} invented PathSim to retrieve nodes proximities in a heterogeneous graph.
Among these emerging link based measures,
SimRank has stood out as an attractive one in recent years,
due to its simple and iterative philosophy that ``two nodes are similar if they are pointed to by similar nodes'',
coupled with the base case that ``every node is most similar to itself''.
This recursion 
not only allows SimRank to capture the global structure of a graph,
but also equips SimRank with mathematical insights that attract many researchers. 
For example, Fogaras and R{\'a}cz \cite{Fogaras2007} interpreted SimRank as the meeting time of the coalescing pair-wise random walks.
Li \etal \cite{Li2010} harnessed an elegant matrix equation to formulate the closed form of SimRank.

Nevertheless, the batch computation of SimRank is costly:
$O(Kd'n^2)$ time for all node-pairs \cite{Yu2013},
where $K$ is the total number of iterations,
and $d' \le d$ ($d$ is the average in-degree of a graph).
Generally, many real graphs are large, with links constantly evolving with minor changes.
This is especially apparent in \eg co-citation networks, web graphs, and social networks.
As a statistical example \cite{Ntoulas2004}, there are 5\%--10\% links updated every week in a web graph.
It is rather expensive to recompute similarities for all pairs of nodes from scratch when a graph is updated.
Fortunately, we observe that when link updates are small, the affected areas for SimRank updates are often small as well.
With this comes the need for incremental algorithms that compute changes to SimRank in response to link updates,
to discard unnecessary recomputations.
In this article, we investigate the following problem for SimRank evaluation:
\begin{description}
  \item[\textbf{Problem}] (\textsc{Incremental SimRank Computation})
  \item[\textbf{Given}] an old digraph $G$, old similarities in $G$, link changes $\Delta G$
\footnote{$\Delta G$ consists of a sequence of edges to be inserted/deleted.} to $G$, and a damping factor $C \in (0,1)$.
  \item[\textbf{Retrieve}] the changes to the old similarities.
\end{description}
\begin{figure*}[t] \centering
  \includegraphics[width=.93\linewidth]{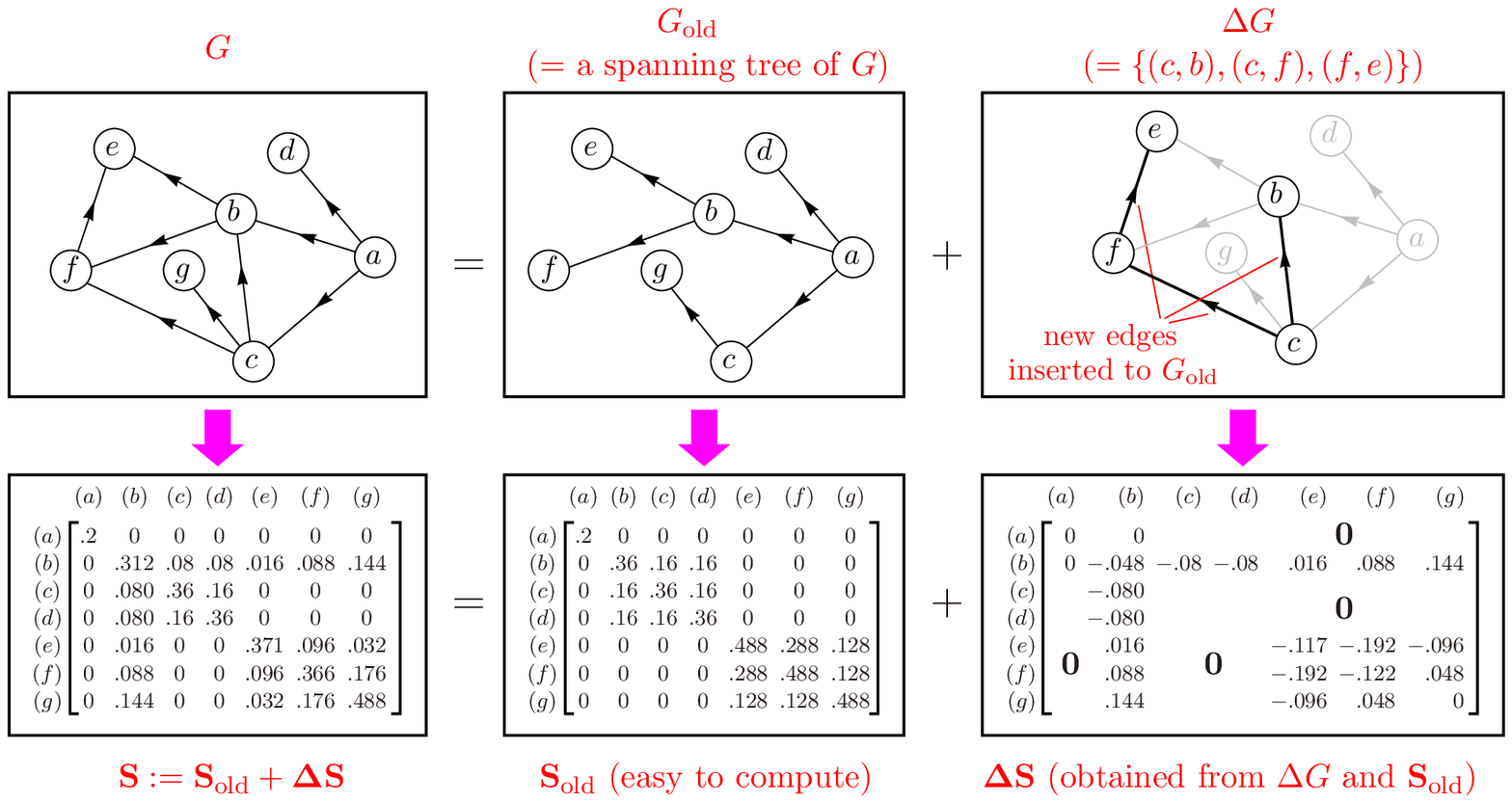}
  \caption{{Incremental SimRank problem can be applied to accelerate the batch computation of SimRank on $G$}} \label{fig:11} 
\end{figure*}
{{
Our research for the above SimRank problem is motivated by the following real applications:
\begin{example}[Decentralise Large-Scale SimRank Retrieval]  \label{eg:11}
Consider the web graph $G$ in Figure~\ref{fig:10}. There are $n=14$ nodes (web pages) in $G$, and each edge is a hyperlink.
To evaluate the SimRank scores of all $({n} \times n)$ pairs of web pages in $G$,
existing all-pairs SimRank algorithms need iteratively compute the SimRank matrix $\mathbf{S}$ of size $({n} \times n)$ in a centralised way (by using a single machine).
In contrast, our incremental approach can significantly improve the computational efficiency of all pairs of SimRanks by retrieving $\mathbf{S}$ in a decentralised way as follows:

We first employ a graph partitioning algorithm (\textit{e.g.,} METIS\footnote{http://glaros.dtc.umn.edu/gkhome/views/metis}) that can decompose the large graph $G$ into several small blocks such that the number of the edges with endpoints in different blocks is minimised. In this example, we partition $G$ into 3 blocks, $G_1 \cup G_2 \cup G_3$, along with 2 edges $\{(f,c),(f,k)\}$ across the blocks, as depicted in the first row of Figure~\ref{fig:10}.

Let $G_{\textrm{old}}:= G_1 \cup G_2 \cup G_3$ and $\Delta G:=\{(f,c),(f,k)\}$.
Then, $G$ can be viewed as ``$G_{\textrm{old}}$ perturbed by $\Delta G$ edge insertions''. That is,
\[
    G =  \overbrace{( G_1 \cup G_2 \cup G_3 )}^{:= G_{\textrm{old}}} \cup \overbrace{\{(f,c),(f,k)\}}^{:= \Delta G} = G_{\textrm{old}} \cup \Delta G
\]

Based on this decomposition,
we can efficiently compute $\mathbf{S}$ over $G$ by dividing $\mathbf{S}$ into two parts:
\[
\mathbf{S} = \mathbf{S}_{\textrm{old}} + \mathbf{\Delta S}
\]
where $\mathbf{S}_{\textrm{old}}$ is obtained by using a batch SimRank algorithm over $G_{\textrm{old}}$,
and $\mathbf{\Delta S}$ is derived from our proposed incremental method which takes $\mathbf{S}_{\textrm{old}}$ and $\Delta G$ as input.

It is worth mentioning that this way of retrieving $\mathbf{S}$ is far more efficient than directly computing $\mathbf{S}$ over $G$ via a batch algorithm.
There are two reasons:

Firstly, $\mathbf{S}_{\textrm{old}}$ can be efficiently computed in a decentralised way. It is a block diagonal matrix with no need of $n \times n$ space to store $\mathbf{S}_{\textrm{old}}$.
This is because $G_{\textrm{old}}$ is only comprised of several connected components $(G_1, G_2, G_3)$, and there are no edges across distinct components.
Thus, $\mathbf{S}_{\textrm{old}}$ exhibits a block diagonal structure:
\[
\mathbf{S}_{\textrm{old}} := diag(\mathbf{S}_{G_1}, \mathbf{S}_{G_2},\mathbf{S}_{G_3}) :=  \left[
\scalebox{1}{$
\begin{array}{c|c|c}
  {\mathbf{S}}_{G_1} & \mathbf{0} & \mathbf{0}   \\ \hline
  \mathbf{0} & {\mathbf{S}}_{G_2} & \mathbf{0} \\ \hline
  \mathbf{0} & \mathbf{0} & {\mathbf{S}}_{G_3}
\end{array}$} \right]
\]
To obtain $\mathbf{S}_{\textrm{old}}$, instead of applying the batch SimRank algorithm over the entire $G_{\textrm{old}}$,
we can apply the batch SimRank algorithm over each component $G_i \ (i=1,2,3)$ independently to obtain the $i$-th diagonal block of $\mathbf{S}_{\textrm{old}}$, $\mathbf{S}_{G_i}$.
Indeed, each $\mathbf{S}_{G_i}$ is computable in parallel.
Even if $\mathbf{S}_{\textrm{old}}$ is computed using a single machine,
only $O(n_1^2+n_2^2+n_3^2)$ space is required to store its diagonal blocks,
where $n_i$ is the number of nodes in each $G_i$,
rather than $O(n^2)$ space to store the entire $\mathbf{S}_{\textrm{old}}$ (see Figure~\ref{fig:10}).

Secondly, after graph partitioning, there are not many edges across components.
Small size of $\Delta G$ often leads to sparseness of $\mathbf{\Delta S}$ in general.
Hence, $\mathbf{\Delta S}$ is stored in a sparse format.
In addition, our incremental SimRank method will greatly accelerate the computation of $\mathbf{\Delta S}$.

Hence, along with graph partitioning, our incremental SimRank research will significantly enhance the computational efficiency of SimRank on large graphs, using a decentralised fashion. \qed
\end{example}

Our research on the incremental SimRank problem not only can decentralise large-scale SimRank retrieval, but also will enable a substantial speedup on the batch computation of SimRank, as indicated below.
\begin{example}[Accelerate Batch Computation of SimRank]  \label{eg:12}
Consider the citation network $G$ in Figure~\ref{fig:11},
where each node denotes a paper, and an edge a reference from one paper to another.
We wish to assess all pairs of similarities between papers.
Unlike existing batch computation that iteratively retrieves all-pairs SimRanks over the entire $G$,
our incremental method significantly accelerates the batch computation of SimRank as follows:

We first utilise BFS or DFS search to find a spanning tree (or arborescence) of $G$, denoted as $G_\textrm{old}$.
We observe that, due to the tree structure, all-pairs SimRank scores in $G_\textrm{old}$ are relatively easier to compute.
For example, each entry of Li \etal\!\!'s SimRank matrix $\mathbf{S}_{\textrm{old}}$ over $G_\textrm{old}$ can be obtained from a lightweight formula:
\[
\begin{split}
\mathbf{S}_{\textrm{old}}(a,b) = \scalebox{0.85}{$ \left\{
                                   \begin{array}{ll}
                                     0,  & \textrm{if $a$ and $b$ are not on the} \\
                                        &  \textrm{same level of the tree $G_\textrm{old}$;} \\
                                     C^{\lambda_{a,b}}(1-C^{H-\lambda_{a,b}+1}),  &\hbox{otherwise.}
                                   \end{array}
                                 \right. $}
\end{split}
\]
where $C$ is a damping factor, $\lambda_{a,b}$ is the number of edges from the lowest common ancestor of $(a,b)$ to node $a$ (or equivalently, to $b$) in the tree $G_\textrm{old}$, and $H$ is the number of edges from the root to node $a$ (or equivalently, to $b$) in the tree $G_\textrm{old}$.

Given $\mathbf{S}_{\textrm{old}}$, we next apply our incremental SimRank method that can significantly speed up the computation of new SimRank scores $\mathbf{S}$ in $G$.
Specifically, we denote by $\Delta G$ the set of edges in $G$ but not in $G_\textrm{old}$.
In Figure~\ref{fig:11},
\[
\Delta G := G - G_\textrm{old} = \{ (c,b), (c,f), (f,e) \}.
\]
Based on $\mathbf{S}_{\textrm{old}}$ and $\Delta G$, our incremental SimRank method can dynamically retrieve only the changes to $\mathbf{S}_{\textrm{old}}$ in response to $\Delta G$,
whose result $\mathbf{\Delta S}$ is a sparse matrix, as illustrated in Figure~\ref{fig:11}.

It is important to note that it does not require $n \times n$ memory to store $\mathbf{S}_{\textrm{old}}$ because $G_{\textrm{old}}$ is a tree structure.
If there are $H$ levels in the tree $G_{\textrm{old}}$ with $n_l$ nodes on level $l \ (l=1,\cdots,H)$, then we only need the space
\[ \textstyle
O \big( \sum_{l=1}^H ({{n_l}^2}) \big) \ll O \big(({\sum_{l=1}^H {n_l}})^2 \big) = O(n^2)
\]
to store the nonzero diagonal blocks of $\mathbf{S}_{\textrm{old}}$. \qed
\end{example}

These examples show that our incremental SimRank is useful to (i) decentralise large-scale SimRank retrieval, and (ii) accelerate the batch computation of SimRank.
Despite its usefulness,
existing work on incremental SimRank computation is rather limited.
To the best of our knowledge,
there is a relative paucity of work \cite{Jiang2017,Shao2015,Yu2014,Li2010} on incremental SimRank problems.
Shao \etal \cite{Shao2015} proposed a novel two-stage random-walk sampling scheme, named TSF,
which can support top-$k$ SimRank search over dynamic graphs.
In the preprocessing stage, TSF samples $R_g$ one-way graphs that serve as an index for querying process.
At query stage, for each one-way graph, $R_q$ new random walks of node $u$ are sampled.
However, the dynamic SimRank problems studied in \cite{Shao2015} and this work are different:
This work focuses on \emph{all $(n^2)$ pairs} of SimRank retrieval,
which requires $O(K(m+|\AFF|))$ time to compute the \emph{entire matrix} $\mathbf{S}$ in a deterministic style.
In Section~\ref{sec:07}, we have proposed a memory-efficient version of our incremental method that updates all pairs of similarities in a column-by-column fashion within only $O(Kn+m)$ memory.
In comparison,
\cite{Shao2015} focuses on top-$k$ SimRank dynamic search \wrt a given query $u$.
It incrementally retrieves \emph{only $k \ (\le n)$ nodes} with highest SimRank scores in \emph{a single column} $\mathbf{S}_{\star,u}$,
which requires $O(K^2 R_q R_g)$ \emph{average} query time\footnote{Recently, Jiang \etal \cite{Jiang2017} has argued that, to retrieve $\mathbf{S}_{\star,u}$, the querying time of Shao \etal\!\!'s TSF \cite{Shao2015} is $O(K n R_q R_g)$. The $n$ factor is due to the time to traverse the reversed one-way graph; in the worst case, all $n$ nodes are visited.}
 to retrieve $\mathbf{S}_{\star,u}$ along with $O(n \log k)$ time to return top-$k$ results from $\mathbf{S}_{\star,u}$.
Recently, Jiang \etal \cite{Jiang2017} pointed out that the probabilistic error guarantee of Shao \etal\!\!'s method is based on the assumption that no cycle in the given graph has a length shorter than $K$,
and they proposed READS, a new efficient indexing scheme that improves precision and indexing space for dynamic SimRank search.
The querying time of READS is $O(r n)$ to retrieve one column $\mathbf{S}_{\star,u}$, where $r$ is the number of sets of random walks.
Hence, TSF and READS are highly efficient for \emph{top-$k$ single-source} SimRank search.
Moreover, optimization methods in this work are based on a rank-one Sylvester
matrix equation characterising changes to the entire SimRank matrix $\mathbf{S}$ for all-pairs dynamical search,
which is fundamentally different from \cite{Shao2015,Jiang2017}'s methods that maintain one-way graphs (or SA forests) updating.
\begin{figure}[t] \centering
  \includegraphics[width=1\linewidth]{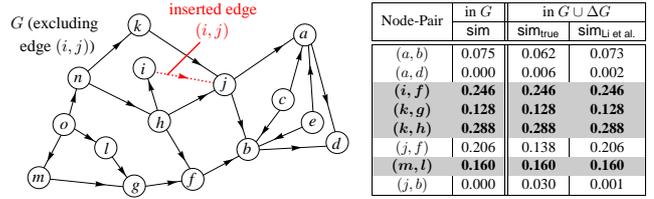}
  \caption{Incrementally update SimRanks when a new edge $(i,j)$ (with $\{i, j\} \subseteq V$) is inserted into $G=(V,E)$} \label{fig:01} 
\end{figure}
It is important to note that, for large-scale graphs, our incremental methods do not need to memoize all $(n^2)$ pairs of old SimRank scores.
Instead, $\mathbf{S}$ can be dynamically updated column-wisely in $O(Kn+m)$ memory.
For updating each column of $\mathbf{S}$,
our experiments in Section~\ref{sec:09} verify that
our memory-efficient incremental method is scalable on large real graphs while running 4--7x faster than the dynamical TSF \cite{Shao2015} per edge update,
due to the high cost of \cite{Shao2015} merging one-way graph's log buffers for TSF indexing.}}

Among the existing studies \cite{Jiang2017,Shao2015,Li2010} on dynamical SimRank retrieval,
the problem setting of Li \etal\!\!'s \cite{Li2010} on all-pairs dynamic search is exactly the same as ours:
the goal is to retrieve changes $\mathbf{\Delta S}$ to all-pairs SimRank scores $\mathbf{S}$,
given old graph $G$, link changes $\Delta G$ to $G$. 
To address this problem,
the central idea of \cite{Li2010} is to factorize the backward transition matrix $\mathbf{Q}$
of the original graph into $\mathbf{U} \cdot \mathbf{\Sigma} \cdot {\mathbf{V}}^T$
via a singular value decomposition (SVD) first,
and then incrementally estimate the updated matrices of $\mathbf{U}$, $\mathbf{\Sigma}$, ${\mathbf{V}}^T$ for link changes at the expense of exactness.
Consequently, updating all pairs of similarities entails $O({r}^{4}n^2)$ time and $O({r}^{2}n^2)$ memory yet without guaranteed accuracy,
where 
$r \ (\le n)$ is the target rank of the low-rank SVD approximation\footnote{According to \cite{Li2010}, using our notation,
$r \le \textrm{rank}(\mathbf{\Sigma} + \mathbf{U}^T \cdot \mathbf{\Delta Q} \cdot \mathbf{V})$,
where $\mathbf{\Delta Q}$ is the changes to $\mathbf{Q}$ for link updates.}.
This method is efficient to graphs when $r$ is extremely small, \eg a star graph $(r=1)$.
However, in general, $r$ is not always negligibly small.

(Please refer to Appendix~\ref{app:01} for a discussion in detail, and Appendix~\ref{app:03a} for an example.)
\subsection{Main Contributions}
Motivated by this,
we propose an efficient and accurate scheme for incrementally computing all-pairs SimRanks on link-evolving graphs.
Our main contributions consist of the following five ingredients:
\begin{itemize}[itemsep=3pt] 
\item
We first focus on unit edge update that does not accompany new node insertions.
By characterizing the SimRank update matrix $\mathbf{\Delta S}$ \wrt every link update as a rank-one Sylvester matrix equation,
we devise a fast incremental SimRank algorithm,
which entails $O(Kn^2)$ time in the worst case to update $n^2$ pairs of similarities for $K$ iterations. (Section~\ref{sec:04})
%
%
\item
To speed up the computation further,
we also propose an effective pruning strategy that captures the ``affected areas'' of $\mathbf{\Delta S}$ to discard unnecessary retrieval,
without loss of accuracy.
This reduces the time of incremental SimRank to $O(K(m+|\AFF|))$,
where
$|\AFF| \ (\le n^2)$ is the size of ``affected areas'' in $\mathbf{\Delta S}$,
and in practice, $|\AFF| \ll n^2$. (Section~\ref{sec:05}) 
\item
We also consider edge updates that accompany new node insertions, and distinguish them into three categories, according to which end of the inserted edge is a new node.
For each case, we devise an efficient incremental SimRank algorithm that can support new nodes insertion and accurately update affected SimRank scores. (Section~\ref{sec:06})
\item
We next investigate the batch updates of dynamical SimRank computation.
Instead of dealing with each edge update one by one,
we devise an efficient algorithm that can handle a sequence of edge insertions and deletions simultaneously,
by merging ``similar sink edges'' and minimizing unnecessary updates. (Section~\ref{sec:08}) 
\item
To achieve linear memory efficiency,
we also express $\mathbf{\Delta S}$ as the sum of many rank-one tensor products,
and devise a memory-efficient technique that updates all-pairs SimRanks in a column-by-column style in $O(Kn+m)$ memory,
without loss of exactness. (Section~\ref{sec:07})
%
%
\item
  We conduct extensive experiments on real and synthetic datasets to demonstrate that our algorithm
(a) is consistently faster than the existing incremental methods from several times to over one order of magnitude;
(b) is faster than its batch counterparts especially when link updates are small;
(c) for batch updates, runs faster than the repeated unit update algorithms; 
(d) entails linear memory and scales well on billion-edge graphs for all-pairs SimRank update; 
(e) is faster than {\LTSF} and its memory space is less than {\LTSF};
(f) entails more time on Cases (C0) and (C2) than Cases (C1) and (C3) for four edge types, and Case (C3) runs the fastest. (Section~\ref{sec:09})
\end{itemize}

This article is a substantial extension of our previous work~\cite{Yu2014}.
We have made the following new updates:
(1) In Section~\ref{sec:06}, we study three types of edge updates that accompany new node insertions.
This solidly extends \cite{Yu2014} and Li \etal's incremental method \cite{Li2010} whose edge updates disallow node changes.
%
(2) In Section~\ref{sec:08}, we
also investigate batch updates for dynamic SimRank computation, and devise an efficient algorithm that can handle ``similar sink edges'' simultaneously and discard unnecessary unit updates further.
(3) In Section~\ref{sec:07},  we propose a memory-efficient strategy that significantly reduces the memory from $O(n^2)$ to $O(Kn+m)$ for incrementally updating all pairs of SimRanks on million-scale graphs, without compromising running time and accuracy.
%
(4) In Section~\ref{sec:09}, we conduct additional experiments on real and synthetic datasets to verify the high scalability and fast computational time of our memory-efficient methods, as compared with the L-TSF method.
(5) In Section~\ref{sec:02}, we update the related work section by incorporating state-of-the-art SimRank research.
%
%
%
%
\section{SimRank Background} \label{sec:03}
In this section,
we give a broad overview of SimRank.
Intuitively,
the central theme behind SimRank is that
``two nodes are considered as similar if their incoming neighbors are themselves similar''.
Based on this idea,
there have emerged two widely-used SimRank models:
(1) Li \etal\!\!'s model (\eg \cite{Rothe2014,Yu2015a,Li2010,He2010,Fujiwara2013}) and
(2) Jeh and Widom's model (\eg \cite{Jeh2002,Lizorkin2008,Fogaras2005,Shao2015,Kusumoto2014}).

Throughout this article, our focus is on Li \etal\!\!'s SimRank model, also known as Co-SimRank in \cite{Rothe2014},
since the recent work \cite{Rothe2014} by Rothe and Sch\"{u}tze has showed that Co-SimRank is more accurate than Jeh and Widom's SimRank model in real applications such as bilingual lexicon extraction.
\noindent (Please refer to Remark~\ref{rem:01} for detailed explanations.)

%
%
\subsection{Li \etal\!\!'s SimRank model}
Given a directed graph ${G}=({V},{E})$ with a node set ${V}$ and an edge set ${E}$,
let $\mathbf{Q}$ be its backward transition matrix (that is, the transpose of the column-normalized adjacency matrix),
whose entry $[\mathbf{Q}]_{i,j}=1/\textrm{in-degree}(i)$ if there is an edge from $j$ to $i$, and 0 otherwise.
Then, Li \etal\!\!'s SimRank matrix, denoted by $\mathbf{S}$, is defined as
\begin{equation}  \label{eq:03a}
   {{\mathbf{S}}}= C\cdot (\mathbf{Q} \cdot {{\mathbf{S}}}\cdot {{\mathbf{Q}}^{T}} ) + (1-C) \cdot {{\mathbf{I}}_{n}},
\end{equation}
where $C \in \left( 0,1 \right)$ is a damping factor, which is generally taken to be 0.6--0.8,
and $\mathbf{I}_n$ is an $n \times n$ identity matrix $(n=|V|)$.
The notation ${(\star)}^{T}$ is the matrix transpose.

Recently, Rothe and Sch\"{u}tze \cite{Rothe2014} have introduced Co-SimRank, whose definition is
\begin{equation}  \label{eq:03b}
   {{\mathbf{\tilde{S}}}}= C\cdot (\mathbf{Q} \cdot {{\mathbf{\tilde{S}}}}\cdot {{\mathbf{Q}}^{T}} ) + {{\mathbf{I}}_{n}},
\end{equation}

Comparing Eqs.\eqref{eq:03a} and \eqref{eq:03b},
we can readily verify that Li \etal\!\!'s SimRank scores equal Co-SimRank scores scaled by a constant factor $(1-C)$, \ie
${{\mathbf{S}}} = (1-C) \cdot   {{\mathbf{\tilde{S}}}}$.
Hence, the relative order of all Co-SimRank scores in ${{\mathbf{\tilde{S}}}}$ is exactly the same as that of Li \etal\!\!'s SimRank scores in ${{\mathbf{S}}}$ even though the entries in ${{\mathbf{\tilde{S}}}}$ can be larger than 1.
That is, the ranking of Co-SimRank ${{\mathbf{\tilde{S}}}}(*,*)$ is identical to the ranking of Li \etal\!\!'s SimRank ${{\mathbf{S}}}(*,*)$.

\subsection{Jeh and Widom's SimRank model}
Jeh and Widom's SimRank model, in matrix notation, can be formulated as
\begin{equation}
   {{\mathbf{S}'}}= \max \{C\cdot (\mathbf{Q} \cdot {{\mathbf{S}'}}\cdot {{\mathbf{Q}}^{T}} ), {{\mathbf{I}}_{n}}\},
\end{equation}
where ${{\mathbf{S}'}}$ is their SimRank similarity matrix;
$\max \{ \mathbf{X}, \mathbf{Y} \}$ is matrix element-wise maximum,
\ie $[\max \{ \mathbf{X}, \mathbf{Y} \}]_{i,j}:=\max\{[\mathbf{X}]_{i,j}, [\mathbf{Y}]_{i,j}\}$.

\begin{remark} \label{rem:01}
The recent work by Kusumoto \etal \cite{Kusumoto2014} has showed that ${{\mathbf{S}}}$ and ${{\mathbf{S}'}}$ do not produce the same results.
Recently, Yu and McCann \cite{Yu2015a} have showed the subtle difference of the two SimRank models from a semantic perspective,
and also justified that Li \etal\!\!'s SimRank ${{\mathbf{S}}}$ can capture more pairs of self-intersecting paths that are neglected by Jeh and Widom's SimRank ${{\mathbf{S}'}}$.

The recent work \cite{Rothe2014} by Rothe and Sch\"{u}tze has demonstrated further that, in real applications such as bilingual lexicon extraction,
the ranking of Co-SimRank ${{\mathbf{\tilde{S}}}}$ (\ie the ranking of Li \etal\!\!'s SimRank ${{\mathbf{S}}}$) is more accurate than that of Jeh and Widom's SimRank ${{\mathbf{S}'}}$ (see \cite[Table~4]{Rothe2014}).

%
%
%
\end{remark}

Despite the high precision of Li \etal\!\!'s SimRank model,
the existing incremental approach of Li \etal \cite{Li2010} for updating SimRank does not always obtain the correct solution $\mathbf{S}$ to Eq.\eqref{eq:03a}.
(Please refer to Appendix~\ref{app:01} for theoretical explanations).

Table~\ref{tab:01} lists the notations used in this article.
\begin{table}\small
\begin{tabular}{c|p{6cm}}
  \hline
\textbf{Symbol}             & \textbf{Description} \\
  \hline
$n$                 & number of nodes in old graph $G$ \\
$m$                 & number of edges in old graph $G$ \\
${d_i}$             & in-degree of node $i$ in old graph $G$ \\
$d$                 & average in-degree of graph $G$ \\
$C$                 & damping factor ($0<C<1$) \\
$K$                 & iteration number \\
$\mathbf{e}_i$      & $n \times 1$ unit vector with a 1 in the $i$-th entry and 0s elsewhere \\
$\mathbf{Q} / \tilde{\mathbf{Q}}$        & old/new (backward) transition matrix \\
$\mathbf{S} / \tilde{\mathbf{S}}$        & old/new SimRank matrix \\
$\mathbf{I}_n$        & $n \times n$ identity matrix \\
${\mathbf{X}}^T$  &  transpose of matrix $\mathbf{X}$ \\
${[\mathbf{X}]}_{i,\star}$  &  $i$-th row of matrix $\mathbf{X}$ \\
${[\mathbf{X}]}_{\star,j}$  &  $j$-th column of matrix $\mathbf{X}$ \\
${[\mathbf{X}]}_{i,j}$  &  $(i,j)$-th entry of matrix $\mathbf{X}$ \\
  \hline
\end{tabular}
\caption{Symbol and Description} \label{tab:01}
\end{table}
%
%
\section{Edge Update without node insertions} \label{sec:04}
In this section, we consider edge update that does not accompany new node insertions,
\ie the insertion
of new edge $(i,j)$ into $G=(V,E)$ with $i \in V$ and $j \in V$.
In this case, the new SimRank matrix $\mathbf{\tilde{S}}$ and the old one $\mathbf{S}$ are of the same size.
As such, it makes sense to denote the SimRank change $\mathbf{\Delta {S}} $ as $\mathbf{\tilde{S}} -\mathbf{S}$.
Below we first introduce the big picture of our main idea, and then present rigorous justifications and proofs. 
%
%
\subsection{The main idea} 
For each edge $(i,j)$ insertion,
we can show that $\mathbf{\Delta Q}$ is a \emph{rank-one} matrix,
\ie there exist two column vectors $\mathbf{u},\mathbf{v} \in \mathbb{R}^{n \times 1} $ such that
$\mathbf{\Delta Q} \in \mathbb{R}^{n \times n}$ can be decomposed into the {outer product} of $\mathbf{u}$ and $\mathbf{v}$ as follows:
%
\begin{equation} \label{eq:12}
  \mathbf{\Delta Q} = \mathbf{u} \cdot \mathbf{v}^T.  
\end{equation}

Based on Eq.\eqref{eq:12},
we then have an opportunity to efficiently compute $\mathbf{\Delta S}$,
by characterizing it as
\begin{equation} \label{eq:13}
  \mathbf{\Delta S} = \mathbf{M} + \mathbf{M}^T, 
\end{equation}
where the auxiliary matrix $\mathbf{M}\in \mathbb{R}^{n \times n}$ satisfies the following \emph{rank-one} Sylvester equation:
\begin{equation} \label{eq:14}
\mathbf{M} = C \cdot \tilde{\mathbf{Q}} \cdot \mathbf{M} \cdot \tilde{\mathbf{Q}}^T + C \cdot \mathbf{u} \cdot \mathbf{w}^T.
\end{equation}
Here, $\mathbf{u}, \mathbf{w}$ are two obtainable column vectors:
$\mathbf{u}$ can be derived from Eq.\eqref{eq:12},
and $\mathbf{w}$ can be described by the old $\mathbf{Q}$ and $\mathbf{S}$
(we will provide their exact expressions later after some discussions);
and $\tilde{\mathbf{Q}}=\mathbf{Q} + \mathbf{\Delta Q}$.

Thus, computing $\mathbf{\Delta S}$ boils down to solving $\mathbf{M}$ in Eq.\eqref{eq:14}.
The main advantage of solving $\mathbf{M}$ via Eq.\eqref{eq:14},
as compared to directly computing the new scores $\tilde{\mathbf{S}}$ via SimRank formula
\begin{equation} \label{eq:15}
\tilde{\mathbf{S}} = C \cdot \tilde{\mathbf{Q}} \cdot \tilde{\mathbf{S}} \cdot \tilde{\mathbf{Q}}^T + (1-C) \cdot \mathbf{I}_n,
\end{equation}
is the high computational efficiency.
More specifically,
solving $\tilde{\mathbf{S}}$ via Eq.\eqref{eq:15} needs expensive \emph{matrix-matrix} multiplications,
whereas solving $\mathbf{M}$ via Eq.\eqref{eq:14} involves only \emph{matrix-vector} and \emph{vector-vector} multiplications,
which is a substantial improvement achieved by our observation that $(C \cdot \mathbf{u} \mathbf{w}^T) \in \mathbb{R}^{n \times n}$ in Eq.\eqref{eq:14} is a \emph{rank-one} matrix,
as opposed to the (full) \emph{rank-$n$} matrix $(1-C) \cdot \mathbf{I}_n$ in Eq.\eqref{eq:15}.
To further elaborate on this,
we can readily convert the recursive forms of Eqs.\eqref{eq:14} and \eqref{eq:15}, respectively, into the series forms:
\setlength\arraycolsep{2pt}
\begin{eqnarray}
\mathbf{M} & = & \sum\nolimits_{k=0}^{\infty }{{{C}^{k+1}}\cdot {{\tilde{\mathbf{Q}}}^{k}}\cdot \mathbf{u}\cdot {{\mathbf{w}}^{T}}\cdot {{({{\tilde{\mathbf{Q}}}^{T}})}^{k}}}, \label{eq:16} \\
\tilde{\mathbf{S}} & = & (1-C)\cdot \sum\nolimits_{k=0}^{\infty }{{{C}^{k}}\cdot {{\tilde{\mathbf{Q}}}^{k}}\cdot \mathbf{I}_n \cdot {{({{\tilde{\mathbf{Q}}}^{T}})}^{k}}}. \label{eq:17}
\end{eqnarray}

To compute the sums in Eq.\eqref{eq:16} for $\mathbf{M}$,
a conventional way is to memoize
$\mathbf{M}_0 \leftarrow C \cdot \mathbf{u}\cdot {{\mathbf{w}}^{T}}$ first
(where the intermediate result $\mathbf{M}_0$ is an $n \times n$ matrix),
and then iterate as
\[\mathbf{M}_{k+1} \leftarrow \mathbf{M}_{0} + C \cdot {\tilde{\mathbf{Q}}} \cdot \mathbf{M}_{k}  \cdot {\tilde{\mathbf{Q}}}^T, \quad (k=0,1,2,\cdots)\]
which involves expensive \emph{matrix-matrix} multiplications (\eg ${\tilde{\mathbf{Q}}} \cdot \mathbf{M}_{k}$).
In contrast,
our method takes advantage of the \emph{rank-one} structure of $\mathbf{u}\cdot {{\mathbf{w}}^{T}}$ to compute the sums in Eq.\eqref{eq:16} for $\mathbf{M}$,
by converting the conventional \emph{matrix-matrix} multiplications ${\tilde{\mathbf{Q}}} \cdot (\mathbf{u} {{\mathbf{w}}^{T}}) \cdot {\tilde{\mathbf{Q}}}^T$
into only \emph{matrix-vector} and \emph{vector-vector} multiplications $({\tilde{\mathbf{Q}}} \mathbf{u}) \cdot (\tilde{\mathbf{Q}} {{\mathbf{w}}})^T$.
To be specific, we leverage two vectors ${\bm{\xi}}_{k}, {\bm{\eta}}_{k}$,
and iteratively compute Eq.\eqref{eq:16} as
\begin{eqnarray}
\nonumber & &   \textrm{initialize } {{\bm{\xi }}_{0}}\leftarrow C\cdot \mathbf{u},\quad {{\bm{\eta }}_{0}}\leftarrow \mathbf{w},\quad {{\mathbf{M}}_{0}}\leftarrow C \cdot \mathbf{u}\cdot {{\mathbf{w}}^{T}}   \\
\nonumber & &   \textbf{for }  k=0,1,2,\cdots  \\
\nonumber & &   \quad {{\bm{\xi }}_{k+1}}\leftarrow C\cdot \mathbf{\tilde{Q}}\cdot {{\bm{\xi }}_{k}},\quad {{\bm{\eta }}_{k+1}}\leftarrow \mathbf{\tilde{Q}}\cdot {{\bm{\eta }}_{k}} \\
\label{eq:16a} & &   \quad {{\mathbf{M}}_{k+1}}\leftarrow {{\bm{\xi }}_{k+1}}\cdot \bm{\eta }_{k+1}^{T}+{{\mathbf{M}}_{k}}
\end{eqnarray}
so that \emph{matrix-matrix} multiplications are safely avoided.
%
%
\subsection{Describing $\mathbf{u}, \mathbf{v},\mathbf{w}$ in Eqs.\eqref{eq:12} and \eqref{eq:14}}
%

To obtain $ \mathbf{u}$ and $\mathbf{v}$ in Eq.\eqref{eq:12} at a low cost,
we have the following theorem.

\begin{theorem} \label{thm:01}
Given an old digraph $G=(V,E)$,
if there is a new edge $(i,j)$ with $i \in V$ and $j \in V$ to be added to $G$,
then the change to $\mathbf{Q}$ is an $n \times n$ rank-one matrix, \ie
$\mathbf{\Delta Q} = \mathbf{u} \cdot \mathbf{v}^T$,
where
\begin{equation} \label{eq:18} \small
\mathbf{u}=\left\{ \begin{matrix}
   {{\mathbf{e}}_{j}} & \left( {{d}_{j}}=0 \right)  \\
   \tfrac{1}{{{d}_{j}}+1}{{\mathbf{e}}_{j}} & \left( {{d}_{j}}>0 \right)  \\
\end{matrix} \right.
, \quad
\mathbf{v}=\left\{ \begin{matrix}
   {{\mathbf{e}}_{i}} & \left( {{d}_{j}}=0 \right)  \\
   {{\mathbf{e}}_{i}}-{{[\mathbf{Q}]}_{j,\star}^T} & \left( {{d}_{j}}>0 \right)  \\
\end{matrix} \right.
\end{equation}
\qed
%
\end{theorem}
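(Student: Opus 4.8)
The plan is to argue by a direct case split on the in-degree $d_j$ of the \emph{head} node $j$, using the elementary observation that inserting the edge $(i,j)$ changes the set of incoming neighbours of $j$ only, and leaves the incoming neighbourhood of every other node intact. Since the $k$-th row of a backward transition matrix is determined solely by the incoming neighbours of node $k$, it follows that $\mathbf{\Delta Q} = \tilde{\mathbf{Q}} - \mathbf{Q}$ has all rows equal to $\mathbf{0}^T$ except possibly the $j$-th. Hence $\mathbf{\Delta Q}$ is automatically of the form $\mathbf{e}_j \cdot \mathbf{r}^T$ with $\mathbf{r}^T := [\mathbf{\Delta Q}]_{j,\star}$, and the whole task reduces to computing the single row $\mathbf{r}^T$ and then distributing its scalar factor between $\mathbf{u}$ and $\mathbf{v}$ to match the stated normalization.

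First I would treat $d_j = 0$: here $j$ has no incoming edge in $G$, so $[\mathbf{Q}]_{j,\star} = \mathbf{0}^T$; after the insertion $j$ has the unique in-neighbour $i$ with in-degree $1$, so $[\tilde{\mathbf{Q}}]_{j,\star} = \mathbf{e}_i^T$. Thus $\mathbf{\Delta Q} = \mathbf{e}_j \mathbf{e}_i^T = \mathbf{u}\mathbf{v}^T$ with $\mathbf{u} = \mathbf{e}_j$, $\mathbf{v} = \mathbf{e}_i$. Next I would handle $d_j > 0$: because $(i,j)$ is a genuinely new edge, $i$ is not already an in-neighbour of $j$, so the old nonzero entries of $[\mathbf{Q}]_{j,\star}$ are exactly the $d_j$ in-neighbours of $j$, each equal to $1/d_j$, and $[\mathbf{Q}]_{j,i} = 0$. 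After the update the in-degree of $j$ becomes $d_j+1$, every pre-existing entry of row $j$ is rescaled from $1/d_j$ to $1/(d_j+1)$, and a fresh entry $1/(d_j+1)$ appears in column $i$; compactly, $[\tilde{\mathbf{Q}}]_{j,\star} = \tfrac{d_j}{d_j+1}[\mathbf{Q}]_{j,\star} + \tfrac{1}{d_j+1}\mathbf{e}_i^T$. Subtracting $[\mathbf{Q}]_{j,\star}$ gives $[\mathbf{\Delta Q}]_{j,\star} = \tfrac{1}{d_j+1}\bigl(\mathbf{e}_i^T - [\mathbf{Q}]_{j,\star}\bigr)$, so $\mathbf{\Delta Q} = \tfrac{1}{d_j+1}\mathbf{e}_j\bigl(\mathbf{e}_i - [\mathbf{Q}]_{j,\star}^T\bigr)^T = \mathbf{u}\mathbf{v}^T$ with $\mathbf{u} = \tfrac{1}{d_j+1}\mathbf{e}_j$ and $\mathbf{v} = \mathbf{e}_i - [\mathbf{Q}]_{j,\star}^T$. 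Finally I would observe $\mathrm{rank}(\mathbf{u}\mathbf{v}^T) \le 1$, which establishes the rank-one claim.

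There is no substantial obstacle here: the only point requiring care is the renormalization bookkeeping in the case $d_j > 0$, where one must remember that the pre-existing entries of row $j$ \emph{change value} (not merely that one new entry is inserted), and that the hypothesis $(i,j) \notin E$ is precisely what guarantees the in-degree of $j$ rises by exactly one, so that the factor $\tfrac{1}{d_j+1}$ is correct and $\mathbf{e}_i$ does not overlap the old support of $[\mathbf{Q}]_{j,\star}$. Everything else is a one-line manipulation on a single row.
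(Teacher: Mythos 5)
Your proposal is correct and follows essentially the same route as the paper's own proof: a case split on $d_j$, the observation that only row $j$ of $\mathbf{Q}$ changes, and the computation $[\tilde{\mathbf{Q}}]_{j,\star} = \tfrac{d_j}{d_j+1}[\mathbf{Q}]_{j,\star} + \tfrac{1}{d_j+1}\mathbf{e}_i^T$ leading to the stated factorization. The extra remarks you make (that $(i,j)\notin E$ guarantees the in-degree rises by exactly one and that the supports do not overlap) are sound and only make explicit what the paper leaves implicit.
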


(Please refer to Appendix~\ref{app:02a} for the proof of Theorem~\ref{thm:01}, and Appendix~\ref{app:03b} for an example.)

Theorem \ref{thm:01} suggests that the change $\mathbf{\Delta Q}$ is an $n\times n$ \emph{rank-one} matrix,
which can be obtain in only constant time from $d_j$ and ${{[\mathbf{Q}]}_{j,\star}^T}$.
%
In light of this, 
%
%
we next describe $\mathbf{w}$ in Eq.\eqref{eq:14} in terms of the old $\mathbf{Q}$ and $\mathbf{S}$ such that Eq.\eqref{eq:14} is a \emph{rank-one} Sylvester equation.

\begin{theorem} \label{thm:02}
Let $(i,j)_{i \in V, \ j \in V}$ be a new edge to be added to $G$ (\Resp an existing edge to be deleted from $G$).
Let $\mathbf{u}$ and $\mathbf{v}$ be the rank-one decomposition of $\mathbf{\Delta Q} = \mathbf{u} \cdot \mathbf{v}^T$. 
Then, (i) there exists a vector $\mathbf{w}=\mathbf{y}+\tfrac{\lambda }{2}\mathbf{u}$
with
\begin{equation} \label{eq:20}
  \mathbf{y}=\mathbf{Q}\cdot \mathbf{z},\quad \lambda ={{\mathbf{v}}^{T}}\cdot \mathbf{z},\quad \mathbf{z}=\mathbf{S}\cdot \mathbf{v}
\end{equation}
such that Eq.\eqref{eq:14} is the \emph{rank-one} Sylvester equation.

(ii) Utilizing the solution $\mathbf{M}$ to Eq.\eqref{eq:14},
the SimRank update matrix $\mathbf{\Delta S}$ can be represented by Eq.\eqref{eq:13}. \qed
\end{theorem}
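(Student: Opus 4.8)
The plan is to start from the two fixed-point characterizations — the new SimRank equation \eqref{eq:15} for $\tilde{\mathbf{S}}$ and the old one \eqref{eq:03a} for $\mathbf{S}$ — and subtract them. Writing $\tilde{\mathbf{Q}} = \mathbf{Q} + \mathbf{\Delta Q}$ and $\tilde{\mathbf{S}} = \mathbf{S} + \mathbf{\Delta S}$ and expanding, the $(1-C)\mathbf{I}_n$ terms cancel and one is left with
\[
\mathbf{\Delta S} = C\,\tilde{\mathbf{Q}}\,\mathbf{\Delta S}\,\tilde{\mathbf{Q}}^T + C\big(\tilde{\mathbf{Q}}\,\mathbf{S}\,\tilde{\mathbf{Q}}^T - \mathbf{Q}\,\mathbf{S}\,\mathbf{Q}^T\big).
\]
Thus the theorem reduces to two things: (a) rewriting the ``constant'' term $C(\tilde{\mathbf{Q}}\mathbf{S}\tilde{\mathbf{Q}}^T - \mathbf{Q}\mathbf{S}\mathbf{Q}^T)$ as the symmetric rank-one expression $C(\mathbf{u}\mathbf{w}^T + \mathbf{w}\mathbf{u}^T)$ with $\mathbf{w}$ exactly as in \eqref{eq:20} — this is part (i); and (b) showing that, given the solution $\mathbf{M}$ of the rank-one Sylvester equation \eqref{eq:14}, the matrix $\mathbf{M} + \mathbf{M}^T$ is the (unique) solution of the displayed equation for $\mathbf{\Delta S}$ — this is part (ii), i.e.\ \eqref{eq:13}.

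For (a), I would substitute $\mathbf{\Delta Q} = \mathbf{u}\mathbf{v}^T$ (Theorem~\ref{thm:01}, and its deletion analogue) into $\tilde{\mathbf{Q}}\mathbf{S}\tilde{\mathbf{Q}}^T = (\mathbf{Q} + \mathbf{u}\mathbf{v}^T)\mathbf{S}(\mathbf{Q} + \mathbf{u}\mathbf{v}^T)^T$, cancel $\mathbf{Q}\mathbf{S}\mathbf{Q}^T$, and collect the remaining three terms as $\mathbf{Q}\mathbf{S}\mathbf{v}\,\mathbf{u}^T + \mathbf{u}\,\mathbf{v}^T\mathbf{S}\mathbf{Q}^T + (\mathbf{v}^T\mathbf{S}\mathbf{v})\,\mathbf{u}\mathbf{u}^T$. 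Since $\mathbf{S}$ is symmetric (taking the transpose of \eqref{eq:03a} and invoking uniqueness gives $\mathbf{S}=\mathbf{S}^T$), the definitions $\mathbf{z} = \mathbf{S}\mathbf{v}$, $\mathbf{y} = \mathbf{Q}\mathbf{z}$, $\lambda = \mathbf{v}^T\mathbf{z}$ turn this into $\mathbf{y}\mathbf{u}^T + \mathbf{u}\mathbf{y}^T + \lambda\,\mathbf{u}\mathbf{u}^T$, which a one-line check shows equals $\mathbf{u}\mathbf{w}^T + \mathbf{w}\mathbf{u}^T$ with $\mathbf{w} = \mathbf{y} + \tfrac{\lambda}{2}\mathbf{u}$. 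Hence $\mathbf{\Delta S} = C\tilde{\mathbf{Q}}\mathbf{\Delta S}\tilde{\mathbf{Q}}^T + C(\mathbf{u}\mathbf{w}^T + \mathbf{w}\mathbf{u}^T)$.

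For (b), transpose \eqref{eq:14} to get $\mathbf{M}^T = C\tilde{\mathbf{Q}}\mathbf{M}^T\tilde{\mathbf{Q}}^T + C\mathbf{w}\mathbf{u}^T$, add it to \eqref{eq:14}, and obtain $(\mathbf{M}+\mathbf{M}^T) = C\tilde{\mathbf{Q}}(\mathbf{M}+\mathbf{M}^T)\tilde{\mathbf{Q}}^T + C(\mathbf{u}\mathbf{w}^T + \mathbf{w}\mathbf{u}^T)$, i.e.\ $\mathbf{M}+\mathbf{M}^T$ solves precisely the same fixed-point equation as $\mathbf{\Delta S}$. Both are discrete Sylvester/Stein equations with the common linear operator $\mathbf{X}\mapsto C\tilde{\mathbf{Q}}\mathbf{X}\tilde{\mathbf{Q}}^T$; because $\tilde{\mathbf{Q}}$ is a backward transition matrix (each row sums to $1$ or $0$) its spectral radius is at most $1$, so that operator has spectral radius at most $C<1$, the equation is uniquely solvable, and the associated Neumann series (\eqref{eq:16}) converges. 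Therefore $\mathbf{\Delta S} = \mathbf{M} + \mathbf{M}^T$, giving \eqref{eq:13} and completing (ii).

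The expansions are routine; the one point needing care is the uniqueness/convergence argument in the \emph{deletion} case, where $\tilde{\mathbf{Q}} = \mathbf{Q} - \mathbf{u}\mathbf{v}^T$ and one must confirm that $\tilde{\mathbf{Q}}$ is still a legitimate backward transition matrix (rows summing to $1$ or $0$), so that $\rho(\tilde{\mathbf{Q}})\le 1$ still holds and the same reduction goes through verbatim; this is exactly what the deletion form of Theorem~\ref{thm:01} (the $\mathbf{u},\mathbf{v}$ of \eqref{eq:18} applied with reversed sign) delivers. I would state that check explicitly rather than fold it silently into ``by the same argument''.
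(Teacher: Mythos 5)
Your proposal is correct and follows essentially the same route as the paper's proof in Appendix~\ref{app:02b}: subtract the two fixed-point equations, expand the rank-one perturbation of $\tilde{\mathbf{Q}}\mathbf{S}\tilde{\mathbf{Q}}^T$ into $\mathbf{u}\mathbf{w}^T+\mathbf{w}\mathbf{u}^T$ using the symmetry of $\mathbf{S}$, and identify $\mathbf{\Delta S}=\mathbf{M}+\mathbf{M}^T$. The only cosmetic difference is that you close part (ii) by a transpose-and-add plus uniqueness-of-the-Stein-equation argument, whereas the paper passes directly to the Neumann series form via its equivalence \eqref{eq:28}; both rest on the same convergence fact $\rho(C\tilde{\mathbf{Q}}\otimes\tilde{\mathbf{Q}})\le C<1$.
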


(The proof of Theorem~\ref{thm:02} is in Appendix~\ref{app:02b}.)

Theorem~\ref{thm:02} provides an elegant expression of $\mathbf{w}$ in Eq.\eqref{eq:14}.
To be precise,
given $\mathbf{Q}$ and $\mathbf{S}$ in the old graph $G$, and an edge $(i,j)$ inserted to $G$,
one can find $\mathbf{u}$ and $\mathbf{v}$ via Theorem~\ref{thm:01} first,
and then resort to Theorem~\ref{thm:02} to compute $\mathbf{w}$ from $\mathbf{u},\mathbf{v},\mathbf{Q},\mathbf{S}$.
Due to the existence of the vector $\mathbf{w}$,
it can be guaranteed that the Sylvester equation \eqref{eq:14} is \emph{rank-one}.
Henceforth, our aforementioned method can be employed to iteratively compute $\mathbf{M}$ in Eq.\eqref{eq:16},
requiring no \emph{matrix-matrix} multiplications.
%
%
\subsection{Characterizing $\mathbf{\Delta S}$}
%

Leveraging Theorems \ref{thm:01} and \ref{thm:02},
we next characterize the SimRank change $\mathbf{\Delta S}$. 

\begin{theorem} \label{thm:03}
If there is a new edge $(i,j)$ with $i \in V$ and $j \in V$ to be inserted to $G$,
then the SimRank change $\mathbf{\Delta S}$ can be characterized as
\[
\mathbf{\Delta S}=\mathbf{M}+{{\mathbf{M}}^{T}} \quad \textrm{ with }
\]
\begin{equation}  \label{eq:29c}
\mathbf{M}=\sum\nolimits_{k=0}^{\infty }{{{C}^{k+1}}\cdot {{{\mathbf{\tilde{Q}}}}^{k}}\cdot {{\mathbf{e}}_{j}}\cdot {{\bm{\gamma }}^{T}}\cdot {{({{{\mathbf{\tilde{Q}}}}^{T}})}^{k}}},
\end{equation}
where the auxiliary vector $\bm{\gamma }$ is obtained as follows:

(i) when ${{d}_{j}}=0$,
\begin{equation}\label{eq:29aa}
\bm{\gamma } = \mathbf{Q}\cdot {{[\mathbf{S}]}_{\star,i}}+\tfrac{1}{2}{{[\mathbf{S}]}_{i,i}}\cdot {{\mathbf{e}}_{j}}
\end{equation}

(ii) when ${{d}_{j}}>0$,
\begin{equation}\label{eq:29bb}
\scalebox{0.92}{$
\bm{\gamma } = \tfrac{1}{({{d}_{j}}+1)} \left( \mathbf{Q} {{[\mathbf{S}]}_{\star,i}}-\tfrac{1}{C} {{[\mathbf{S}]}_{\star,j}}+( \tfrac{\lambda }{2\left( {{d}_{j}}+1 \right)}+ \tfrac{1}{C}-1 ) {{\mathbf{e}}_{j}} \right)$}
\end{equation}
%
%
and the scalar $\lambda$ can be derived from
\begin{equation} \label{eq:29b}
\lambda = {{[\mathbf{S}]}_{i,i}}+\tfrac{1}{C} \cdot {[\mathbf{S}]}_{j,j}-2\cdot {{[\mathbf{Q}]}_{j,\star}}\cdot {{[\mathbf{S}]}_{\star,i}} - \tfrac{1}{C} +1.
\end{equation}
\qed
\end{theorem}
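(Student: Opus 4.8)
The plan is to obtain Theorem~\ref{thm:03} as a direct corollary of Theorems~\ref{thm:01} and~\ref{thm:02}: substitute the closed forms of $\mathbf{u},\mathbf{v},\mathbf{w}$ into the series representation~\eqref{eq:16} of $\mathbf{M}$, recognise that $\mathbf{u}$ is a multiple of $\mathbf{e}_j$ so the rank-one factor becomes $\mathbf{e}_j\bm{\gamma}^T$, and finally use the SimRank equation~\eqref{eq:03a} to eliminate the only remaining matrix--matrix product so that $\bm{\gamma}$ is expressed purely through the old $\mathbf{Q}$ and $\mathbf{S}$.

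First I would recall from Theorem~\ref{thm:02}(ii) and Eq.~\eqref{eq:16} that $\mathbf{\Delta S}=\mathbf{M}+\mathbf{M}^T$ with $\mathbf{M}=\sum_{k\ge 0}C^{k+1}\tilde{\mathbf{Q}}^{k}\,\mathbf{u}\,\mathbf{w}^{T}(\tilde{\mathbf{Q}}^{T})^{k}$, where $\mathbf{w}=\mathbf{y}+\tfrac{\lambda}{2}\mathbf{u}$, $\mathbf{y}=\mathbf{Q}\mathbf{z}$, $\lambda=\mathbf{v}^{T}\mathbf{z}$, $\mathbf{z}=\mathbf{S}\mathbf{v}$. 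By Theorem~\ref{thm:01}, in \emph{both} cases $\mathbf{u}=c\,\mathbf{e}_j$, with $c=1$ when $d_j=0$ and $c=\tfrac{1}{d_j+1}$ when $d_j>0$. Hence $\mathbf{u}\mathbf{w}^{T}=\mathbf{e}_j(c\mathbf{w})^{T}$, and putting $\bm{\gamma}:=c\,\mathbf{w}=c\,\mathbf{y}+\tfrac{c\lambda}{2}\mathbf{u}$ turns the series immediately into the claimed form~\eqref{eq:29c}. It then remains only to expand $\bm{\gamma}$ in each case.

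For $d_j=0$ we have $\mathbf{v}=\mathbf{e}_i$, so $\mathbf{z}=[\mathbf{S}]_{\star,i}$, $\mathbf{y}=\mathbf{Q}[\mathbf{S}]_{\star,i}$, $\lambda=\mathbf{e}_i^{T}[\mathbf{S}]_{\star,i}=[\mathbf{S}]_{i,i}$, and with $c=1$, $\mathbf{u}=\mathbf{e}_j$ this is exactly Eq.~\eqref{eq:29aa}. For $d_j>0$ we have $\mathbf{v}=\mathbf{e}_i-[\mathbf{Q}]_{j,\star}^{T}=\mathbf{e}_i-\mathbf{Q}^{T}\mathbf{e}_j$, hence $\mathbf{z}=[\mathbf{S}]_{\star,i}-\mathbf{S}\mathbf{Q}^{T}\mathbf{e}_j$ and $\mathbf{y}=\mathbf{Q}[\mathbf{S}]_{\star,i}-\mathbf{Q}\mathbf{S}\mathbf{Q}^{T}\mathbf{e}_j$. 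Here comes the key step: rearranging Eq.~\eqref{eq:03a} gives $C\,\mathbf{Q}\mathbf{S}\mathbf{Q}^{T}=\mathbf{S}-(1-C)\mathbf{I}_n$, so $\mathbf{Q}\mathbf{S}\mathbf{Q}^{T}\mathbf{e}_j=\tfrac1C[\mathbf{S}]_{\star,j}-\tfrac{1-C}{C}\mathbf{e}_j$, which makes $\mathbf{y}=\mathbf{Q}[\mathbf{S}]_{\star,i}-\tfrac1C[\mathbf{S}]_{\star,j}+\tfrac{1-C}{C}\mathbf{e}_j$; multiplying by $c=\tfrac1{d_j+1}$ and adding $\tfrac{c\lambda}{2}\mathbf{u}=\tfrac{\lambda}{2(d_j+1)^2}\mathbf{e}_j$ and writing $\tfrac{1-C}{C}=\tfrac1C-1$ gives Eq.~\eqref{eq:29bb}. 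For the scalar $\lambda=\mathbf{v}^{T}\mathbf{z}$ I would expand the four inner products of $(\mathbf{e}_i-\mathbf{Q}^{T}\mathbf{e}_j)^{T}([\mathbf{S}]_{\star,i}-\mathbf{S}\mathbf{Q}^{T}\mathbf{e}_j)$; using the symmetry $\mathbf{S}=\mathbf{S}^{T}$ the two cross terms coincide and each equals $[\mathbf{Q}]_{j,\star}[\mathbf{S}]_{\star,i}$, while the quadratic term $\mathbf{e}_j^{T}\mathbf{Q}\mathbf{S}\mathbf{Q}^{T}\mathbf{e}_j$ is again replaced by $\tfrac1C[\mathbf{S}]_{j,j}-\tfrac{1-C}{C}$ via the same identity; collecting terms and simplifying $-\tfrac{1-C}{C}=1-\tfrac1C$ yields Eq.~\eqref{eq:29b}.

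The only genuinely substantive move — everything else is substitution and regrouping of scalars — is the use of the fixed-point equation~\eqref{eq:03a} to rewrite $\mathbf{Q}\mathbf{S}\mathbf{Q}^{T}$ in terms of $\mathbf{S}$ itself; this is what keeps $\bm{\gamma}$ expressible through the \emph{old} $\mathbf{Q},\mathbf{S}$ with no matrix--matrix products, and it tacitly relies on $\mathbf{S}$ being the exact solution of~\eqref{eq:03a} (not a truncated iterate) together with its symmetry. I expect the main care is purely bookkeeping: keeping the scalar factor $c$ attached correctly through $\mathbf{y}$ and through the $\tfrac{c\lambda}{2}\mathbf{u}$ term, and tracking the $(d_j+1)$ denominators, so that the constant multiplying $\mathbf{e}_j$ in~\eqref{eq:29bb} comes out exactly as $\tfrac{\lambda}{2(d_j+1)}+\tfrac1C-1$.
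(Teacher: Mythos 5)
Your proposal is correct and follows essentially the same route as the paper's own proof in Appendix~\ref{app:02c}: substitute the $\mathbf{u},\mathbf{v}$ from Theorem~\ref{thm:01} into the $\mathbf{z},\mathbf{y},\lambda,\mathbf{w}$ of Theorem~\ref{thm:02}, absorb the scalar multiple of $\mathbf{e}_j$ into $\bm{\gamma}$, and use the fixed-point equation \eqref{eq:03a} post-/pre-multiplied by $\mathbf{e}_j$ (the paper's Eqs.~\eqref{eq:33}--\eqref{eq:34}) to eliminate $\mathbf{Q}\mathbf{S}\mathbf{Q}^{T}$. The bookkeeping of the $(d_j+1)$ factors and the constant in front of $\mathbf{e}_j$ checks out against Eqs.~\eqref{eq:29bb} and \eqref{eq:29b}.
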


(The proof of Theorem~\ref{thm:03} is in Appendix~\ref{app:02c}.)

Theorem \ref{thm:03} provides an efficient method to compute the incremental SimRank matrix $\mathbf{\Delta S}$,
by utilizing the previous information of $\mathbf{Q}$ and $\mathbf{S}$, 
as opposed to \cite{Li2010} that requires to maintain the incremental SVD.

%
\subsection{Deleting an edge $(i,j)_{i \in V, \ j \in V}$ from $G=(V,E)$} \label{sec:04f}
For an edge deletion, we next propose a Theorem~\ref{thm:03}-like technique that can efficiently update SimRanks.
\begin{theorem} \label{thm:09}
%
When an edge $(i,j)_{i \in V, \ j \in V}$ is deleted from $G=(V,E)$,
the changes to $\mathbf{Q}$ is a rank-one matrix, which can be described as $\mathbf{\Delta Q} = \mathbf{u} \cdot \mathbf{v}^T$,
where
\begin{equation*} \label{eq:19} 
\mathbf{u}=\left\{ \begin{matrix}
   {{\mathbf{e}}_{j}} & \left( {{d}_{j}}=1 \right)  \\
   \tfrac{1}{{{d}_{j}}-1}{{\mathbf{e}}_{j}} & \left( {{d}_{j}}>1 \right)  \\
\end{matrix} \right.
, \quad
\mathbf{v}=\left\{ \begin{matrix}
   {-{\mathbf{e}}_{i}} & \left( {{d}_{j}}=1 \right)  \\
   {{[\mathbf{Q}]}_{j,\star}^T}-{{\mathbf{e}}_{i}} & \left( {{d}_{j}}>1 \right)  \\
\end{matrix} \right.
\end{equation*}
The changes $\mathbf{\Delta S}$ to SimRank can be characterized as
%
%
\[
\mathbf{\Delta S}=\mathbf{M}+{{\mathbf{M}}^{T}} \ \textrm{ with }
\mathbf{M}=\sum\nolimits_{k=0}^{\infty }{{{C}^{k+1}} {{{\mathbf{\tilde{Q}}}}^{k}} {{\mathbf{e}}_{j}} {{\bm{\gamma }}^{T}} {{({{{\mathbf{\tilde{Q}}}}^{T}})}^{k}}},
\]
%
where the auxiliary vector $\bm{\gamma }:=$ 
%
%
\begin{equation*} 
\scalebox{0.86}{$
\left\{ \begin{array}{lc}
    -\mathbf{Q}\cdot {{[\mathbf{S}]}_{\star,i}}+\frac{1}{2}{{[\mathbf{S}]}_{i,i}}\cdot {{\mathbf{e}}_{j}}  & ({{d}_{j}}=1)  \\
   \tfrac{1}{({{d}_{j}}-1)} \left(\frac{1}{C}\cdot {{[\mathbf{S}]}_{\star,j}} - \mathbf{Q}\cdot {{[\mathbf{S}]}_{\star,i}} +( \frac{\lambda }{2\left( {{d}_{j}}-1 \right)}- \frac{1}{C}+1 )\cdot {{\mathbf{e}}_{j}} \right) & ({{d}_{j}}>1)  \\
\end{array} \right.
$}
\end{equation*}
and $\lambda:={{[\mathbf{S}]}_{i,i}}+\tfrac{1}{C} \cdot {[\mathbf{S}]}_{j,j}-2\cdot {{[\mathbf{Q}]}_{j,\star}}\cdot {{[\mathbf{S}]}_{\star,i}} - \tfrac{1}{C} +1$. \qed 
\end{theorem}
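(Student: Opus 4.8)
The plan is to treat edge deletion as a "mirror image" of edge insertion and reuse the machinery of Theorems~\ref{thm:01}--\ref{thm:03} almost verbatim. First I would establish the rank-one decomposition $\mathbf{\Delta Q} = \mathbf{u}\cdot\mathbf{v}^T$. Deleting $(i,j)$ changes only the $j$-th row of $\mathbf{Q}$, since only node $i$'s contribution to node $j$'s incoming-neighbour average disappears and the in-degree drops from $d_j$ to $d_j-1$. If $d_j=1$, the new row is the zero row, so $\mathbf{\Delta Q} = -\mathbf{e}_j\cdot\mathbf{e}_i^T$; if $d_j>1$, comparing the old row $\tfrac1{d_j}\sum_{k\to j}\mathbf{e}_k^T$ with the new row $\tfrac1{d_j-1}\sum_{k\to j,\ k\ne i}\mathbf{e}_k^T$ and simplifying yields $\mathbf{\Delta Q}=\tfrac1{d_j-1}\mathbf{e}_j\cdot([\mathbf{Q}]_{j,\star}^T-\mathbf{e}_i)^T$, which is exactly the claimed $\mathbf{u},\mathbf{v}$. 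This is the analogue of Theorem~\ref{thm:01} and is a short direct calculation.

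Next I would derive the Sylvester characterisation. Writing $\mathbf{\tilde S}$ for the new SimRank matrix and subtracting Eq.\eqref{eq:03a} for $G$ from Eq.\eqref{eq:03a} for the perturbed graph, with $\mathbf{\tilde Q}=\mathbf{Q}+\mathbf{\Delta Q}$, one obtains after expanding the product $\mathbf{\tilde Q}\,\mathbf{\tilde S}\,\mathbf{\tilde Q}^T$ and cancelling the identity term that $\mathbf{\Delta S}$ solves a Sylvester equation whose inhomogeneous term collects all pieces involving $\mathbf{\Delta Q}$. The key move is the one already used in Theorem~\ref{thm:02}: since $\mathbf{\Delta Q}=\mathbf{u}\mathbf{v}^T$ is rank one, every such piece is a rank-one or rank-two matrix built from the vectors $\mathbf{z}=\mathbf{S}\mathbf{v}$, $\mathbf{y}=\mathbf{Q}\mathbf{z}$ and the scalar $\lambda=\mathbf{v}^T\mathbf{z}$; symmetrising lets us write $\mathbf{\Delta S}=\mathbf{M}+\mathbf{M}^T$ with $\mathbf{M}$ obeying the rank-one Sylvester equation \eqref{eq:14} for the vector $\mathbf{w}=\mathbf{y}+\tfrac{\lambda}{2}\mathbf{u}$. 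I would simply invoke Theorem~\ref{thm:02}, which is stated for both insertions and deletions, so no rederivation is needed here — only the substitution of the deletion-specific $\mathbf{u},\mathbf{v}$.

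Finally I would unfold $\mathbf{w}$ into the explicit $\bm{\gamma}$ formulas by plugging the two cases of $\mathbf{u},\mathbf{v}$ into $\mathbf{w}=\mathbf{Q}(\mathbf{S}\mathbf{v})+\tfrac{\lambda}{2}\mathbf{u}$ and $\lambda=\mathbf{v}^T\mathbf{S}\mathbf{v}$, using $\mathbf{S}\mathbf{e}_i=[\mathbf{S}]_{\star,i}$, $\mathbf{e}_i^T\mathbf{S}\mathbf{e}_i=[\mathbf{S}]_{i,i}$, $\mathbf{Q}[\mathbf{S}]_{\star,j}=\tfrac1C([\mathbf{S}]_{\star,j}-(1-C)\mathbf{e}_j)$ (the column form of Eq.\eqref{eq:03a}), and $[\mathbf{Q}]_{j,\star}[\mathbf{S}]_{\star,i}$ for the cross term. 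For $d_j=1$ one has $\mathbf{v}=-\mathbf{e}_i$, $\mathbf{u}=\mathbf{e}_j$, giving $\bm{\gamma}=-\mathbf{Q}[\mathbf{S}]_{\star,i}+\tfrac12[\mathbf{S}]_{i,i}\mathbf{e}_j$; for $d_j>1$ the extra $\tfrac1{d_j-1}$ factor and the $[\mathbf{Q}]_{j,\star}^T$ term produce the second displayed formula after collecting the $\mathbf{e}_j$ coefficients via the identity for $\mathbf{Q}[\mathbf{S}]_{\star,j}$. The series form for $\mathbf{M}$ follows by iterating \eqref{eq:14} exactly as in Eq.\eqref{eq:16}, absorbing $C\mathbf{u}=C\mathbf{e}_j$ (up to scalar) into $\bm{\gamma}$. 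The main obstacle is purely bookkeeping: tracking the sign flips (note $\mathbf{v}$ has the opposite sign pattern to the insertion case) and correctly gathering the coefficient of $\mathbf{e}_j$ when substituting the column identity for $\mathbf{Q}[\mathbf{S}]_{\star,j}$, so that the constant $(\tfrac{\lambda}{2(d_j-1)}-\tfrac1C+1)$ comes out exactly right; everything conceptual is already supplied by Theorems~\ref{thm:01}--\ref{thm:03}.
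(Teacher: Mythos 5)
Your proposal is correct and follows essentially the same route as the paper's own proof in Appendix~\ref{app:02d}: a direct row-wise computation of $\mathbf{u},\mathbf{v}$, an appeal to Theorem~\ref{thm:02} (which is stated for deletions as well) to get $\mathbf{z},\mathbf{y},\lambda$ and $\mathbf{w}=\mathbf{y}+\tfrac{\lambda}{2}\mathbf{u}$, and the same substitution of the postmultiplied SimRank identity to eliminate the $\mathbf{Q}\mathbf{S}[\mathbf{Q}]_{j,\star}^{T}$ and $[\mathbf{Q}]_{j,\star}\mathbf{S}[\mathbf{Q}]_{j,\star}^{T}$ terms (Eqs.~\eqref{eq:33}--\eqref{eq:34}). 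One small slip: the identity you cite should read $\mathbf{Q}\mathbf{S}[\mathbf{Q}]_{j,\star}^{T}=\tfrac{1}{C}\left([\mathbf{S}]_{\star,j}-(1-C)\mathbf{e}_j\right)$, not $\mathbf{Q}[\mathbf{S}]_{\star,j}=\cdots$, but this does not affect the argument.
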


(The proof of Theorem~\ref{thm:09} is in Appendix~\ref{app:02d}.)

%
%
\subsection{{\IncUSRone} Algorithm}
%
We present our efficient incremental approach, denoted as {\IncUSRone} (in Appendix~\ref{app:04a}), that supports the edge insertion without accompanying new node insertions. 
The complexity of \IncUSRone~is bounded by $O(Kn^2)$ time and $O(n^2)$ memory\footnote{In the next sections, we shall substantially reduce its time and memory complexity further.}
in the worst case for updating all $n^2$ pairs of similarities.

(Please refer to Appendix~\ref{app:04a} for a detailed description of {\IncUSRone}, and Appendix~\ref{app:03c} for an example.)

\section{Pruning Unnecessary Node-Pairs in $\mathbf{\Delta S}$} \label{sec:05}
After the SimRank update matrix $\mathbf{\Delta S}$ has been characterized as a rank-one Sylvester equation,
pruning techniques can further skip node-pairs with unchanged SimRanks in $\mathbf{\Delta S}$ (called ``unaffected areas'').
\subsection{Affected Areas in $\mathbf{\Delta S}$}
%
We next reinterpret the series $\mathbf{M}$ in Theorem \ref{thm:03},
aiming to identify ``affected areas'' in $\mathbf{\Delta S}$.
Due to space limitations,
we mainly focus on the edge insertion case of $d_j>0$.
Other cases have the similar results.

By substituting Eq.\eqref{eq:29bb} back into Eq.\eqref{eq:29c},
we can readily split the series form of $\mathbf{M}$ into three parts:

\vspace{-8pt} \begin{small}
\begin{eqnarray*}
{[\mathbf{M}]}_{a,b}= && \tfrac{1}{{{d}_{j}}+1} \bigg(\underbrace{\sum\nolimits_{k=0}^{\infty }{{{C}^{k+1}} \cdot {{[{{{\mathbf{\tilde{Q}}}}^{k}}]}_{a,j}}  {{[\mathbf{S}]}_{i,\star}}  {{\mathbf{Q}}^{T}}\cdot {[{{({{{\mathbf{\tilde{Q}}}}^{T}})}^{k}}]}_{\star,b}}}_{\text{Part 1}} - \\
&& - \underbrace{\sum\nolimits_{k=0}^{\infty }{{{C}^{k}} {{[{{{\mathbf{\tilde{Q}}}}^{k}}]}_{a,j}}  {{[\mathbf{S}]}_{j,\star}}  {[{{({{{\mathbf{\tilde{Q}}}}^{T}})}^{k}}]}_{\star,b}}}_{\text{Part 2}} + \\
&&  + \mu  \underbrace{\sum\nolimits_{k=0}^{\infty }{{{C}^{k+1}}  {{[{{{\mathbf{\tilde{Q}}}}^{k}}]}_{a,j}}  {{[{{({{{\mathbf{\tilde{Q}}}}^{T}})}^{k}}]}_{j,b}}}}_{\text{Part 3}} \bigg)
\end{eqnarray*}
\end{small}
with the scalar $\mu :=\frac{\lambda }{2\left( {{d}_{j}}+1 \right)}+\frac{1}{C}-1$.

Intuitively, when edge $(i,j)$ is inserted and $d_j>0$,
Part~1 of ${[\mathbf{M}]}_{a,b}$ tallies the weighted sum of the following new paths for node-pair $(a,b)$: 
\begin{equation} \label{eq:39a}
\scalebox{0.74}{$
\underbrace{\overbrace{a\leftarrow \circ \cdots \circ \leftarrow j}^{{{[{{{\mathbf{\tilde{Q}}}}^{k}}]}_{a,j}}}}_{\text{length }k} \Leftarrow \underbrace{\overbrace{i\leftarrow \circ \cdots \circ \leftarrow \bullet \to \circ \cdots \circ \to \star}^{{{[\mathbf{S}]}_{i,\star}}}}_{\mathclap{\text{all symmetric in-link paths for node-pair }(i,\star)}}\overbrace{\to }^{{{\mathbf{Q}}^{T}}}\underbrace{\overbrace{\blacktriangle \to \cdots \circ \to b}^{{{[{{({{{\mathbf{\tilde{Q}}}}^{T}})}^{k}}]}_{\blacktriangle,b}}}}_{\text{length }k}
$}
\end{equation}

Such paths are the concatenation of four types of sub-paths (as depicted above)
associated with four matrices, respectively, ${{[{{{\mathbf{\tilde{Q}}}}^{k}}]}_{a,j}}, {{[\mathbf{S}]}_{i,\star}}, {{\mathbf{Q}}^{T}},{{[{{({{{\mathbf{\tilde{Q}}}}^{T}})}^{k}}]}_{\blacktriangle,b}} $, plus the inserted edge $j \Leftarrow i$.
When such entire concatenated paths exist in the new graph,
they should be accommodated for assessing the new SimRank ${[\tilde{\mathbf{S}}]}_{a,b}$ in response to the edge insertion $(i,j)$
because our reinterpretation of SimRank indicates that SimRank counts \emph{all} the symmetric in-link paths,
and the entire concatenated paths can prove to be symmetric in-link paths.

Likewise,  Parts 2 and 3 of ${[\mathbf{M}]}_{a,b}$, respectively,
tally the weighted sum of the following paths for pair $(a,b)$:
\begin{equation} \label{eq:39b}
\scalebox{0.9}{$
\underbrace{\overbrace{a\leftarrow \circ \cdots \circ \leftarrow}^{{{[{{{\mathbf{\tilde{Q}}}}^{k}}]}_{a,j}}}}_{\text{length }k} j  \underbrace{\overbrace{\leftarrow \circ \cdots \circ \leftarrow \bullet \to \circ \cdots \circ \to }^{{{[\mathbf{S}]}_{j,\star}}}}_{\mathclap{\text{all symmetric in-link paths for }(j,\star)}} \star \underbrace{\overbrace{\to \cdots \circ \to b}^{{{[{{({{{\mathbf{\tilde{Q}}}}^{T}})}^{k}}]}_{\star,b}}}}_{\text{length }k}
$}
\end{equation}
\begin{equation} \label{eq:39c}
\scalebox{0.95}{$
\underbrace{\overbrace{a\leftarrow \circ \cdots \circ \leftarrow}^{{{[{{{\mathbf{\tilde{Q}}}}^{k}}]}_{a,j}}}}_{\text{length }k} j \underbrace{\overbrace{ \to \circ \cdots \circ \to b}^{{{[{{({{{\mathbf{\tilde{Q}}}}^{T}})}^{k}}]}_{j,b}}}}_{\text{length }k}
$}
\end{equation}

Indeed, when edge $(i,j)$ is inserted,
only these three kinds of paths have extra contributions for $\mathbf{M}$ (therefore for $\mathbf{\Delta S}$).
As incremental updates in SimRank merely tally these paths,
node-pairs without having such paths could be safely pruned.
In other words,
for those pruned node-pairs,
the three kinds of paths will have ``zero contributions'' to the changes in $\mathbf{M}$ in response to edge insertion.
Thus, after pruning, the remaining node-pairs in $G$ constitute the ``affected areas'' of $\mathbf{M}$.

We next identify ``affected areas'' of $\mathbf{M}$,
by pruning redundant node-pairs in $G$,
based on the following. 
\begin{theorem} \label{thm:04}
For the edge $(i,j)$ insertion,
let $\mathsf{\mathcal{O}}(a)$ and $\mathsf{\tilde{\mathcal{O}}}(a)$ be the out-neighbors of node $a$ in old $G$ and new $G\cup \{(i,j)\}$, respectively.
Let $\mathbf{M}_k$ be the $k$-th iterative matrix in Eq.\eqref{eq:16a}, and let
\begin{eqnarray}
 {{\mathsf{\mathcal{F}}}_{1}}&:=&\{ b \ | \ b\in \mathsf{\mathcal{O}}(y),\ \exists y,\ s.t.\ {{[\mathbf{S}]}_{i,y}}\ne 0\} \label{eq:39} \\
{{\mathsf{\mathcal{F}}}_{2}}&:=&\left\{\begin{array}{lc}
   \varnothing  & \ \ ({{d}_{j}}=0)  \\
   \{  y  \ | \ {{[\mathbf{S}]}_{j,y}}\ne 0\} & \ \  ({{d}_{j}}>0)  \\
\end{array} \right.  \label{eq:40}
\end{eqnarray}
\begin{align}
\label{eq:41} & {{\mathsf{\mathcal{A}}}_{k}}\times {{\mathsf{\mathcal{B}}}_{k}}:= \\
\nonumber & \scalebox{0.82}{$\left\{ \begin{array}{lc}
   \{j\}\times \left( {{\mathsf{\mathcal{F}}}_{1}}\cup {{\mathsf{\mathcal{F}}}_{2}}\cup \{j\} \right) & (k=0)  \\
   \{(a,\left. b) \right|a\in \mathsf{\tilde{\mathcal{O}}}(x),\ b\in \mathsf{\tilde{\mathcal{O}}}(y),\ \exists x,\ \exists y,\ s.t.\ {{[{{\mathbf{M}}_{k-1}}]}_{x,y}}\ne 0\} & (k>0)  \\
\end{array} \right.  $}
\end{align}
%

Then, for every iteration $k=0,1,\cdots$,
the matrix ${{\mathbf{M}}_{k}}$ has the following sparse property:
\[
{{[{{\mathbf{M}}_{k}}]}_{a,b}}=0 \quad \textrm{for all } (a,b)\notin ({{\mathsf{\mathcal{A}}}_{k}}\times {{\mathsf{\mathcal{B}}}_{k}}) \cup ({{\mathsf{\mathcal{A}}}_{0}}\times {{\mathsf{\mathcal{B}}}_{0}}).
\]

For the edge $(i,j)$ deletion case,
all the above results hold except that, in Eq.\eqref{eq:40},
the conditions $d_j=0$ and $d_j>0$ are, respectively, replaced by $d_j=1$ and $d_j>1$. \qed
\end{theorem}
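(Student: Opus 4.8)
The plan is to reduce the claim to tracking the support (the index set of nonzero entries) of the iterates $\mathbf{M}_k$ through a single additive recursion derived from Eq.~\eqref{eq:16a}. Since $\bm{\xi}_{\ell+1}\bm{\eta}_{\ell+1}^{T} = (C\tilde{\mathbf{Q}}\bm{\xi}_{\ell})(\tilde{\mathbf{Q}}\bm{\eta}_{\ell})^{T} = C\,\tilde{\mathbf{Q}}\,(\bm{\xi}_{\ell}\bm{\eta}_{\ell}^{T})\,\tilde{\mathbf{Q}}^{T}$ and $\mathbf{M}_k = \mathbf{M}_0 + \sum_{\ell=1}^{k}\bm{\xi}_{\ell}\bm{\eta}_{\ell}^{T}$, telescoping the index of summation gives $\sum_{\ell=1}^{k}\bm{\xi}_{\ell}\bm{\eta}_{\ell}^{T} = C\tilde{\mathbf{Q}}\bigl(\sum_{\ell=0}^{k-1}\bm{\xi}_{\ell}\bm{\eta}_{\ell}^{T}\bigr)\tilde{\mathbf{Q}}^{T} = C\tilde{\mathbf{Q}}\mathbf{M}_{k-1}\tilde{\mathbf{Q}}^{T}$, hence
\[
\mathbf{M}_k \;=\; \mathbf{M}_0 \;+\; C\,\tilde{\mathbf{Q}}\,\mathbf{M}_{k-1}\,\tilde{\mathbf{Q}}^{T}\qquad(k\ge 1).
\]
With this recursion in hand, the theorem splits into two facts: (i) $\mathrm{supp}(\mathbf{M}_0)\subseteq\mathcal{A}_0\times\mathcal{B}_0$, and (ii) a left-multiplication by $\tilde{\mathbf{Q}}$ together with a right-multiplication by $\tilde{\mathbf{Q}}^{T}$ spreads support only to out-neighbours in the new graph, which is exactly what the $k>0$ clause of Eq.~\eqref{eq:41} encodes.

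For (i), Theorem~\ref{thm:01} (and Theorem~\ref{thm:09} in the deletion case) gives $\mathbf{u}=c\,\mathbf{e}_j$ for a nonzero scalar $c$, so by Theorems~\ref{thm:01}--\ref{thm:03} the matrix $\mathbf{M}_0=C\mathbf{u}\mathbf{w}^{T}$ is a scalar multiple of $\mathbf{e}_j\bm{\gamma}^{T}$ with $\bm{\gamma}$ as in Theorem~\ref{thm:03}; thus it is supported on row $j$ only, giving $\mathcal{A}_0=\{j\}$. For the columns I would read off $\bm{\gamma}$ from Eqs.~\eqref{eq:29aa}--\eqref{eq:29bb}: it is a linear combination of $\mathbf{Q}\,[\mathbf{S}]_{\star,i}$, of $[\mathbf{S}]_{\star,j}$ (present only when $d_j>0$), and of $\mathbf{e}_j$. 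Using the symmetry of $\mathbf{S}$, the $b$-th entry of $\mathbf{Q}\,[\mathbf{S}]_{\star,i}$ equals $\sum_{y}[\mathbf{Q}]_{b,y}[\mathbf{S}]_{i,y}$, which vanishes unless some $y$ satisfies $b\in\mathcal{O}(y)$ and $[\mathbf{S}]_{i,y}\ne 0$ --- exactly membership in $\mathcal{F}_1$; the support of $[\mathbf{S}]_{\star,j}$ is $\{y:[\mathbf{S}]_{j,y}\ne 0\}=\mathcal{F}_2$; and $\mathbf{e}_j$ contributes $\{j\}$. Hence $\mathrm{supp}(\bm{\gamma})\subseteq\mathcal{F}_1\cup\mathcal{F}_2\cup\{j\}=\mathcal{B}_0$, settling the case $k=0$.

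For (ii), from the recursion $[\mathbf{M}_k]_{a,b}\ne 0$ forces either $[\mathbf{M}_0]_{a,b}\ne 0$ (covered by the base case, putting $(a,b)\in\mathcal{A}_0\times\mathcal{B}_0$) or $[\tilde{\mathbf{Q}}\mathbf{M}_{k-1}\tilde{\mathbf{Q}}^{T}]_{a,b}\ne 0$. In the latter case there exist indices $x,y$ with $[\tilde{\mathbf{Q}}]_{a,x}\ne 0$, $[\mathbf{M}_{k-1}]_{x,y}\ne 0$ and $[\tilde{\mathbf{Q}}^{T}]_{y,b}=[\tilde{\mathbf{Q}}]_{b,y}\ne 0$; since $[\tilde{\mathbf{Q}}]_{p,q}$ is nonzero precisely when $q\!\to\!p$ is an edge of the new graph, this says $a\in\tilde{\mathcal{O}}(x)$ and $b\in\tilde{\mathcal{O}}(y)$ for a pair $(x,y)$ with $[\mathbf{M}_{k-1}]_{x,y}\ne 0$, i.e.\ $(a,b)\in\mathcal{A}_k\times\mathcal{B}_k$ by Eq.~\eqref{eq:41}. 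This yields $[\mathbf{M}_k]_{a,b}=0$ whenever $(a,b)\notin(\mathcal{A}_k\times\mathcal{B}_k)\cup(\mathcal{A}_0\times\mathcal{B}_0)$. The deletion case requires no new idea: Theorem~\ref{thm:09} again gives $\mathbf{u}\propto\mathbf{e}_j$ and a $\bm{\gamma}$ of the same three-term shape, and the recursion above holds verbatim with $\tilde{\mathbf{Q}}=\mathbf{Q}+\mathbf{\Delta Q}$; the only change is that the thresholds $d_j=0$ and $d_j>0$ become $d_j=1$ and $d_j>1$ inside $\mathcal{F}_2$, which is exactly the stated modification of Eq.~\eqref{eq:40}.

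I expect the only delicate part to be the base case: reading $\mathrm{supp}(\bm{\gamma})$ off the (comparatively dense) expressions in Theorem~\ref{thm:03}, reconciling the transposed indices via symmetry of $\mathbf{S}$, and verifying that the $d_j=0$ branch --- where the $[\mathbf{S}]_{\star,j}$ term disappears and $\mathcal{F}_2=\varnothing$ --- is still consistent with $\mathcal{B}_0=\mathcal{F}_1\cup\mathcal{F}_2\cup\{j\}$. Once $\mathbf{M}_0$ is pinned down, the recursion and the one-step support-propagation argument are routine and apply uniformly for every $k$ and for both insertions and deletions.
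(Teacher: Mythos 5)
Your proposal is correct and follows essentially the same route as the paper's proof in Appendix~\ref{app:02e}: establish $\mathrm{supp}(\mathbf{M}_0)\subseteq\{j\}\times(\mathcal{F}_1\cup\mathcal{F}_2\cup\{j\})$ by reading the support of $\bm{\gamma}$ term by term (using the symmetry of $\mathbf{S}$ exactly as you do), then propagate via the recursion $\mathbf{M}_k=C\tilde{\mathbf{Q}}\mathbf{M}_{k-1}\tilde{\mathbf{Q}}^T+C\mathbf{e}_j\bm{\gamma}^T$ and the fact that conjugation by $\tilde{\mathbf{Q}}$ spreads support only to out-neighbours in the new graph. The only cosmetic difference is that you derive this recursion explicitly by telescoping Eq.~\eqref{eq:16a}, where the paper simply asserts that $\mathbf{M}_k$ is the $k$-th partial sum of Eq.~\eqref{eq:29c}.
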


(Please refer to Appendix~\ref{app:02e} for the proof and intuition of Theorem~\ref{thm:04}, and Appendix~\ref{app:03d} for an example.)

Theorem~\ref{thm:04} provides a pruning strategy to iteratively eliminate node-pairs with a-priori zero values in $\mathbf{M}_k$ (thus in $\mathbf{\Delta S}$).
Hence, by Theorem~\ref{thm:04},
when edge $(i,j)$ is updated,
we just need to consider node-pairs in $({{\mathsf{\mathcal{A}}}_{k}}\times {{\mathsf{\mathcal{B}}}_{k}}) \cup ({{\mathsf{\mathcal{A}}}_{0}}\times {{\mathsf{\mathcal{B}}}_{0}})$ for incrementally updating $\mathbf{\Delta S}$.

%
%
%
\subsection{{\IncSR} Algorithm with Pruning}
Based on Theorem~\ref{thm:04}, we provide a complete incremental algorithm, referred to as \IncSR, by incorporating our pruning strategy into \IncUSR.
The total time of \IncSR~is $O(K(m+|\AFF|))$ for $K$ iterations,
where $|\AFF|:= \textrm{avg}_{k \in[0,K]} ( |{\cal A}_k| \cdot |{\cal B}_k|)$
with ${\cal A}_k, {\cal B}_k$ in Eq.\eqref{eq:41},
being the average size of ``affected areas'' in $\mathbf{M}_k$ for $K$ iterations.

(Please refer to Appendix~\ref{app:04b} for {\IncSR} algorithm description and its complexity analysis.)
\section{Edge Update with node insertions} \label{sec:06}
%
In this section, we focus on the edge update that accompanies new node insertions.
Specifically, given a new edge $(i,j)$ to be inserted into the old graph $G=(V,E)$, we consider the following cases when
\begin{center}
   (C1) $i \in V$ and $j \notin V$; \qquad (in Subsection~\ref{sec:04c}) \\
   (C2) $i \notin V$ and $j \in V$; \qquad (in Subsection~\ref{sec:04d})  \\
   (C3) $i \notin V$ and $j \notin V$.  \qquad (in Subsection~\ref{sec:04e})  \\
\end{center}

For each case, we devise an efficient incremental algorithm that can support new node insertions and can accurately update only ``affected areas'' of SimRanks.
\begin{remark}
 Let $n=|V|$, without loss of generality, it can be tacitly assumed that \\
a) in case (C1), new node $j \notin V$ is indexed by $(n+1)$; \\
b) in case (C2), new node $i \notin V$ is indexed by $(n+1)$; \\
c) in case (C3), new nodes $i \notin V$ and $j \notin V$ are indexed by $(n+1)$ and $(n+2)$, respectively.
\end{remark}

%
%
%
\subsection{Inserting an edge $(i,j)$ with $i \in V$ and $j \notin V$} \label{sec:04c}
%
In this case, the inserted new edge $(i,j)$ accompanies the insertion of a new node $j$.
Thus, the size of the new SimRank matrix $\mathbf{\tilde{S}}$ is different from that of the old $\mathbf{{S}} $.
As a result, we cannot simply evaluate the changes to $\mathbf{{S}} $ by adopting $\mathbf{\tilde{S}} -\mathbf{S}$ as we did in Section~\ref{sec:04}.

To resolve this problem, we introduce the block matrix representation of new matrices for edge insertion.
Firstly, when a new edge $(i,j)_{i \in V, j \notin V}$ is inserted to $G$,
the new transition matrix $\mathbf{\tilde{Q}}$ can be described as
\begin{equation} \label{eq:61}
\renewcommand\arraystretch{1.2}
\mathbf{\tilde{Q}}=\left[ \begin{array}{c|c}
   \mathbf{Q} & \mathbf{0}  \\  \hline
   {{\mathbf{e}}_{i}^{T}} & 0  \\
\end{array} \right] \begin{array}{l}
   \  \} \ n \textrm{ rows}  \\
   \rightarrow \textrm{row }j  \\
\end{array} \in \mathbb{R}^{(n+1)\times (n+1)}
\end{equation}
Intuitively, $\mathbf{\tilde{Q}}$ is formed by bordering the old $\mathbf{{Q}}$ by 0s except $[\mathbf{\tilde{Q}}]_{j,i}=1$.
Utilizing this block structure of $\mathbf{\tilde{Q}}$,
we can obtain the new SimRank matrix, which exhibits a similar block structure, as shown below:
\begin{theorem} \label{thm:08}
Given an old digraph $G=(V,E)$,
if there is a new edge $(i,j)$ with $i \in V$ and $j \notin V$ to be inserted,
then the new SimRank matrix becomes
\begin{equation} \label{eq:62}
\scalebox{0.82}{$
\renewcommand\arraystretch{1.2}
\mathbf{\tilde{S}}=\left[ \begin{array}{c|c}
   \mathbf{S} & \mathbf{y}  \\ \hline
   {{\mathbf{y}}^{T}} & C{{[\mathbf{S}]}_{i,i}}+(1-C)  \\
\end{array} \right] \begin{array}{l}
   \  \} \ n \textrm{ rows}  \\
   \rightarrow \textrm{row }j  \\
\end{array} \ \textrm{ with } \ \mathbf{y}=C\mathbf{Q}{{[\mathbf{S}]}_{\star,i}}
$}
\end{equation}
where $\mathbf{S} \in \mathbb{R}^{n \times n}$ is the old SimRank matrix of $G$. \qed
\end{theorem}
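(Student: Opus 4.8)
The plan is to substitute the block form of $\tilde{\mathbf{Q}}$ from Eq.\eqref{eq:61} into the SimRank defining equation \eqref{eq:15} for the enlarged $(n+1)\times(n+1)$ matrix, and verify that the claimed block matrix on the right-hand side of Eq.\eqref{eq:62} is a fixed point. Since Li \etal's SimRank equation \eqref{eq:03a} has a unique solution (the Neumann series in Eq.\eqref{eq:17} converges because $C\in(0,1)$ and $\|\tilde{\mathbf{Q}}\|\le 1$ in an appropriate norm), exhibiting one fixed point of the correct size suffices.

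First I would write the candidate $\tilde{\mathbf{S}}$ in block form with unknown blocks: a top-left $n\times n$ block $\mathbf{A}$, a top-right $n\times 1$ block $\mathbf{b}$, its transpose in the bottom-left (by symmetry of SimRank, which is automatic since the equation preserves symmetry), and a scalar $c$ in the bottom-right. Then I would compute the product $\tilde{\mathbf{Q}}\cdot\tilde{\mathbf{S}}\cdot\tilde{\mathbf{Q}}^{T}$ blockwise. Using $\tilde{\mathbf{Q}}=\left[\begin{smallmatrix}\mathbf{Q}&\mathbf{0}\\ \mathbf{e}_i^{T}&0\end{smallmatrix}\right]$, the product expands to
\[
\tilde{\mathbf{Q}}\tilde{\mathbf{S}}\tilde{\mathbf{Q}}^{T}=\left[\begin{array}{c|c} \mathbf{Q}\mathbf{A}\mathbf{Q}^{T} & \mathbf{Q}\mathbf{A}\mathbf{e}_i \\ \hline \mathbf{e}_i^{T}\mathbf{A}\mathbf{Q}^{T} & \mathbf{e}_i^{T}\mathbf{A}\mathbf{e}_i\end{array}\right],
\]
and adding $(1-C)\mathbf{I}_{n+1}$ (scaled appropriately by $C$ on the product term) and matching blocks against $\left[\begin{smallmatrix}\mathbf{A}&\mathbf{b}\\ \mathbf{b}^{T}&c\end{smallmatrix}\right]$ gives three scalar/vector/matrix identities: the top-left block forces $\mathbf{A}=C\mathbf{Q}\mathbf{A}\mathbf{Q}^{T}+(1-C)\mathbf{I}_n$, i.e.\ $\mathbf{A}=\mathbf{S}$ by uniqueness of the old SimRank solution; the top-right block forces $\mathbf{b}=C\mathbf{Q}\mathbf{A}\mathbf{e}_i=C\mathbf{Q}[\mathbf{S}]_{\star,i}=\mathbf{y}$ (using the $j\notin V$ fact that the new diagonal entry of $\mathbf{I}_{n+1}$ contributes $0$ to the off-diagonal block); and the bottom-right scalar forces $c=C\,\mathbf{e}_i^{T}\mathbf{A}\mathbf{e}_i+(1-C)=C[\mathbf{S}]_{i,i}+(1-C)$.

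The only genuinely delicate point — the main obstacle — is justifying that $\mathbf{A}=\mathbf{S}$ rather than merely \emph{some} solution: one must note that the top-left block equation is exactly the SimRank equation on the old graph $G$, which by Eq.\eqref{eq:03a} has a unique solution, namely the old $\mathbf{S}$; here it is essential that the new node $j$ has out-degree that does not alter any $\mathbf{Q}$-entry among the first $n$ nodes (this is precisely what Eq.\eqref{eq:61} encodes — the new column of $\tilde{\mathbf{Q}}$ is zero), so the first-$n$ block of $\tilde{\mathbf{Q}}$ is genuinely the old $\mathbf{Q}$. After that the remaining verifications are routine matrix-vector bookkeeping. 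I would close by invoking uniqueness of the $(n+1)$-dimensional SimRank fixed point to conclude that the exhibited block matrix is indeed $\tilde{\mathbf{S}}$.
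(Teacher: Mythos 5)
Your proposal is correct and follows essentially the same route as the paper's proof: substitute the block form of $\tilde{\mathbf{Q}}$ from Eq.\eqref{eq:61} into the $(n+1)$-dimensional SimRank equation, expand $\tilde{\mathbf{Q}}\tilde{\mathbf{S}}\tilde{\mathbf{Q}}^{T}$ blockwise, and identify each block, using uniqueness of the solution of Eq.\eqref{eq:03a} to conclude that the top-left block equals the old $\mathbf{S}$. Your blockwise expansion and the resulting identities for $\mathbf{y}$ and the scalar $C[\mathbf{S}]_{i,i}+(1-C)$ coincide exactly with the paper's derivation.
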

\begin{proof}
We substitute the new $\mathbf{\tilde{Q}}$ in Eq.\eqref{eq:61} back into the SimRank equation $\mathbf{\tilde{S}}=C \cdot \mathbf{\tilde{Q}} \cdot \mathbf{\tilde{S}}  \cdot {{\mathbf{\tilde{Q}}}^{T}}+(1-C) \cdot {{\mathbf{I}}_{n+1}}$:
\renewcommand\arraystretch{1.2}
\begin{eqnarray*}
\mathbf{S}
:= \left[ \begin{array}{c|c}
   {{{\mathbf{\tilde{S}}}}_{\mathbf{11}}} & {{{\mathbf{\tilde{S}}}}_{\mathbf{12}}}  \\ \hline
   {{{\mathbf{\tilde{S}}}}_{\mathbf{21}}} & {{{\mathbf{\tilde{S}}}}_{\mathbf{22}}}  \\
\end{array} \right]
= && C\left[ \begin{array}{c|c}
   \mathbf{Q} & \mathbf{0}  \\ \hline
   {{\mathbf{e}}_{i}^{T}} & 0  \\
\end{array} \right]\left[ \begin{array}{c|c}
   {{{\mathbf{\tilde{S}}}}_{\mathbf{11}}} & {{{\mathbf{\tilde{S}}}}_{\mathbf{12}}}  \\ \hline
   {{{\mathbf{\tilde{S}}}}_{\mathbf{21}}} & {{{\mathbf{\tilde{S}}}}_{\mathbf{22}}}  \\
\end{array} \right]\left[ \begin{array}{c|c}
   {{\mathbf{Q}}^{T}} & {{\mathbf{e}}_{i}}  \\ \hline
   \mathbf{0} & 0  \\
\end{array} \right] \\
&& +(1-C)\left[ \begin{array}{c|c}
   {{\mathbf{I}}_{n}} & \mathbf{0}  \\ \hline
   \mathbf{0} & 1  \\
\end{array} \right]
\end{eqnarray*}
By expanding the right-hand side, we can obtain
\[
\scalebox{0.85}{$
\left[ \begin{array}{c|c}
   {{{\mathbf{\tilde{S}}}}_{\mathbf{11}}} & {{{\mathbf{\tilde{S}}}}_{\mathbf{12}}}  \\ \hline
   {{{\mathbf{\tilde{S}}}}_{\mathbf{21}}} & {{{\mathbf{\tilde{S}}}}_{\mathbf{22}}}  \\
\end{array} \right]=\left[ \begin{array}{c|c}
   C\mathbf{Q}{{{\mathbf{\tilde{S}}}}_{\mathbf{11}}}{{\mathbf{Q}}^{T}}+(1-C){{\mathbf{I}}_{n}} & C\mathbf{Q}{{{\mathbf{\tilde{S}}}}_{\mathbf{11}}}{{\mathbf{e}}_{i}}  \\ \hline
   C{{\mathbf{e}}_{i}}^{T}{{{\mathbf{\tilde{S}}}}_{\mathbf{11}}}{{\mathbf{Q}}^{T}} & C{{\mathbf{e}}_{i}}^{T}{{{\mathbf{\tilde{S}}}}_{\mathbf{11}}}{{\mathbf{e}}_{i}}+(1-C)  \\
\end{array} \right]$}
\]
The above block matrix equation implies that
\[
{{\mathbf{\tilde{S}}}_{\mathbf{11}}}=C\mathbf{Q}{{\mathbf{\tilde{S}}}_{\mathbf{11}}}{{\mathbf{Q}}^{T}}+(1-C){{\mathbf{I}}_{n}}
\]
Due to the uniqueness of $\mathbf{S}$ in Eq.\eqref{eq:03a}, it follows that
\[{{\mathbf{\tilde{S}}}_{\mathbf{11}}}=\mathbf{S}\]
Thus, we have
\[
\begin{split}
{{\mathbf{\tilde{S}}}_{\mathbf{12}}}&={{\mathbf{\tilde{S}}}_{\mathbf{21}}}^{T}=C\mathbf{Q}{{\mathbf{\tilde{S}}}_{\mathbf{11}}}{{\mathbf{e}}_{i}}=C\mathbf{Q}{{[\mathbf{S}]}_{\star,i}} \\
{{\mathbf{\tilde{S}}}_{\mathbf{22}}}&=C{{\mathbf{e}}_{i}}^{T}{{\mathbf{\tilde{S}}}_{\mathbf{11}}}{{\mathbf{e}}_{i}}+(1-C)=C{{[\mathbf{S}]}_{i,i}}+(1-C)
\end{split}
\]
Combining all blocks of ${{\mathbf{\tilde{S}}}}$ together yields Eq.\eqref{eq:62}. \qed
\end{proof}

Theorem~\ref{thm:08} provides an efficient incremental way of computing the new SimRank matrix $\mathbf{\tilde{S}}$ for unit insertion of the case (C1).
Precisely, the new $\mathbf{\tilde{S}}$ is formed by bordering the old $\mathbf{{S}}$ by the auxiliary vector $\mathbf{y}$.
To obtain $\mathbf{y}$ (and thereby $\mathbf{\tilde{S}}$),
we just need use the $i$-th column of $\mathbf{{S}}$ with one matrix-vector multiplication $(\mathbf{Q}{{[\mathbf{S}]}_{\star,i}})$.
Thus, the total cost of computing new $\mathbf{\tilde{S}}$ requires $O(m)$ time,
as illustrated in Algorithm~\ref{alg:03}.
\begin{figure}[t] \centering
  \includegraphics[width=1\linewidth]{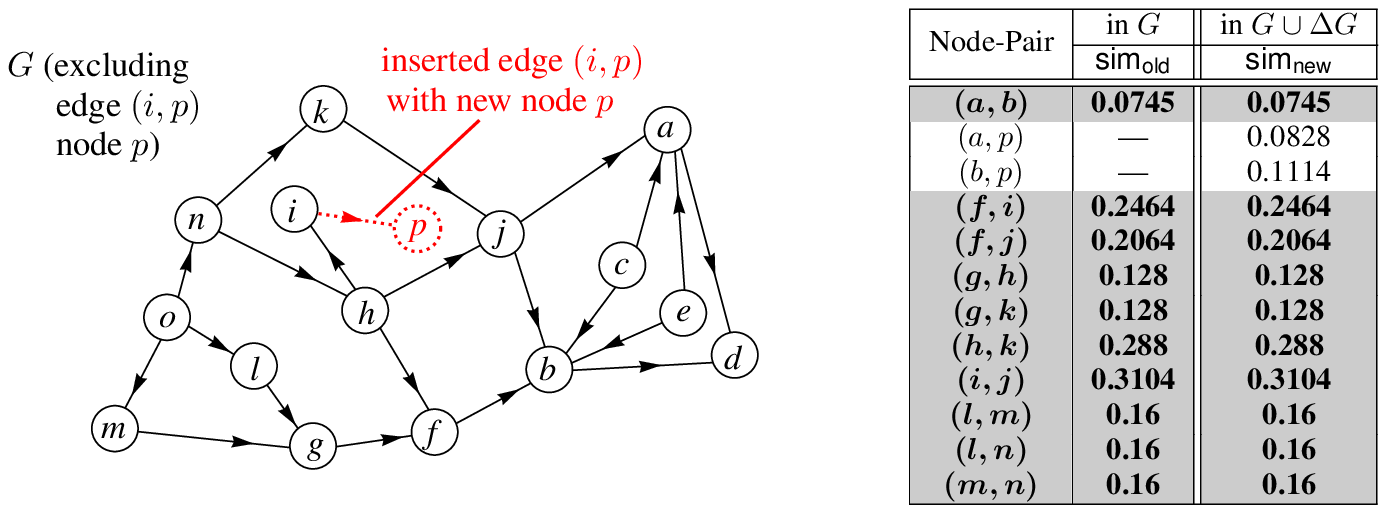}
  \caption{Incrementally updating SimRank when an edge $(i,p)$ with $i \in V$ and $p \notin V$ is inserted into $G=(V,E)$} \label{fig:02} 
\end{figure}
\begin{example}
Consider the citation digraph $G$ in Fig.~\ref{fig:02}.
If the new edge $(i,p)$ with new node $p$ is inserted to $G$, the new $\mathbf{\tilde{S}}$ can be updated from the old $\mathbf{{S}}$ as follows:

According to Theorem~\ref{thm:08},
since $C=0.8$ and  \\[-10pt]
\[\scalebox{0.86}{$
{{[\mathbf{S}]}_{\star,i}} = \kbordermatrix{
 \hspace*{-0.5em} & (a) & \cdots  & (e)  &   (f)   &  (g) & (h) & (i)     & (j)     & (k) & \cdots  & (o) \\
 \hspace*{-0.5em} & 0,  & \cdots, & 0,   & 0.2464, &   0, & 0,  & 0.5904, & 0.3104, & 0,  & \cdots, &  0  \\
}{}^T $} 
\]
it follows that
\begin{equation*}
\renewcommand\arraystretch{1.2}
\mathbf{\tilde{S}}=\left[ \begin{array}{c|c}
   \mathbf{S} & \mathbf{y}  \\ \hline
   {{\mathbf{y}}^{T}} & z  \\
\end{array} \right]  \ \textrm{ with } z=0.8{{[\mathbf{S}]}_{i,i}}+(1-0.8) = 0.6723
\end{equation*} \\[-20pt]
\[\scalebox{0.9}{$
 \quad \mathbf{y} = 0.8 \mathbf{Q} {{[\mathbf{S}]}_{\star,i}}
= \kbordermatrix{
 \hspace*{-0.5em} &  (a)   &  (b)     & (c) & \cdots  & (o)  \\
 \hspace*{-0.5em} & 0.0828, & 0.1114, & 0,  & \cdots, & 0    \\
}{}^T \in \mathbb{R}^{15 \times 1}$} \ \ \qed
\]
\end{example}
\begin{algorithm}[t]
\small
\DontPrintSemicolon
\SetKwInOut{Input}{Input}
\SetKwInOut{Output}{Output}
\Input{a directed graph $G=(V,E)$, \\
       a new edge $(i,j)_{i \in V, \ j \notin V}$ inserted to $G$, \\
       the old similarities $\mathbf{S}$ in $G$, \\
       the damping factor $C$.}
\Output{the new similarities ${\mathbf{\tilde{S}}}$ in $G \cup\{(i,j)\}$.}
\nl \label{ln:a03-01} initialize the transition matrix $\mathbf{Q}$ in $G$ ; \;
\nl \label{ln:a03-02} compute $\mathbf{y} := C \cdot \mathbf{Q}\cdot {{[\mathbf{S}]}_{\star,i}}$ ; \;
\nl \label{ln:a03-03} compute $z:= C \cdot {{[\mathbf{S}]}_{i,i}}+(1-C)$ ; \;
\nl \label{ln:a03-04} \Return $\tilde{\mathbf{S}} := \left[ \renewcommand\arraystretch{1.2} \begin{array}{c|c}
   \mathbf{S} & \mathbf{y}  \\ \hline
   {{\mathbf{y}}^{T}} & z  \\
\end{array} \right]$ ; \;
\caption{\IncUSRtwo~($G, (i,j), \mathbf{S}, C$)}  \label{alg:03}
\end{algorithm}
\subsection{Inserting an edge $(i,j)$ with $i \notin V$ and $j \in V$} \label{sec:04d}
We now focus on the case (C2), the insertion of an edge $(i,j)$ with $i \notin V$ and $j \in V$.
Similar to the case (C1), the new edge accompanies the insertion of a new node $i$.
Hence, $\mathbf{\tilde{S}} -\mathbf{S}$ makes no sense. 

However, in this case, the dynamic computation of SimRank is far more complicated than that of the case (C1),
in that such an edge insertion not only increases the dimension of the old transition matrix $\mathbf{{Q}}$ by one,
but also changes several original elements of $\mathbf{{Q}}$, which may recursively influence SimRank similarities.
Specifically, the following theorem shows, in the case (C2), how $\mathbf{{Q}}$ changes with the insertion of an edge $(i,j)_{i \notin V, j \in V}$.
\begin{theorem} \label{thm:05}
Given an old digraph $G=(V,E)$,
if there is a new edge $(i,j)$ with $i \notin V$ and $j \in V$ to be added to $G$,
then the new transition matrix can be expressed as
\begin{equation} \label{eq:50}
\scalebox{0.8}{$
\mathbf{\tilde{Q}}=\left[ \begin{array}{c|c}
   {\mathbf{\hat{Q}}} & \tfrac{1}{{{d}_{j}}+1}{{\mathbf{e}}_{j}}  \\ \hline
   \mathbf{0} & 0  \\
\end{array} \right] \begin{array}{l}
   \  \} \ n \textrm{ rows}  \\
   \rightarrow \textrm{row }i  \\
\end{array} 
\ \textrm{  with } \mathbf{\hat{Q}}:=\mathbf{Q}-\tfrac{1}{{{d}_{j}}+1}{{\mathbf{e}}_{j}} {{[\mathbf{Q}]}_{j,\star}}
$}
\end{equation}
where
$\mathbf{Q}$ is the old transition matrix of $G$. \qed
\end{theorem}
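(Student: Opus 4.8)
The plan is to read the new backward transition matrix $\mathbf{\tilde{Q}}$ off entry-by-entry from its defining property — $[\mathbf{\tilde{Q}}]_{a,b}=1/\tilde d_a$ when there is an edge $b\to a$ in the new graph $\tilde G=(V\cup\{i\},\,E\cup\{(i,j)\})$ and $0$ otherwise — after partitioning the index set $\{1,\dots,n,n+1\}$ into the old nodes $V=\{1,\dots,n\}$ and the single new node $i$, which is indexed $n+1$ by the Remark. This splits $\mathbf{\tilde{Q}}$ into four pieces to identify: the bottom row (row $i$), the rightmost column (column $i$), the bottom-right scalar, and the top-left $n\times n$ block. I would then verify that these four pieces match the right-hand side of Eq.\eqref{eq:50}.

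First I would treat the row and column of the new node $i$. Since the only edge incident to $i$ in $\tilde G$ is the outgoing edge $(i,j)$, node $i$ has in-degree $0$ in $\tilde G$, so every entry of row $i$ of $\mathbf{\tilde{Q}}$ vanishes; this simultaneously supplies the bottom-left block $\mathbf{0}$ and the bottom-right entry $0$. For column $i$, an entry $[\mathbf{\tilde{Q}}]_{a,i}$ is nonzero only when there is an edge $i\to a$ in $\tilde G$, i.e.\ only for $a=j$; and because adding $(i,j)$ raises the in-degree of $j$ from $d_j$ to $d_j+1$, that single nonzero entry equals $1/(d_j+1)$. Hence the restriction of column $i$ to the old-node rows is exactly $\tfrac{1}{d_j+1}\mathbf{e}_j$.

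Next I would analyze the top-left block $\mathbf{\hat{Q}}$, indexed by $a,b\in V$. As the new edge involves the new node $i$, the edges among old nodes are unchanged, so the only quantity that changes is the in-degree of $j$. For every $a\ne j$ we have $\tilde d_a=d_a$ with the incident edges unchanged, so row $a$ of $\mathbf{\hat{Q}}$ coincides with row $a$ of $\mathbf{Q}$; this agrees with the claimed formula because the rank-one correction $\tfrac{1}{d_j+1}\mathbf{e}_j[\mathbf{Q}]_{j,\star}$ is supported on row $j$ only. For $a=j$, each nonzero entry $[\mathbf{Q}]_{j,b}=1/d_j$ becomes $1/(d_j+1)$, so row $j$ of $\mathbf{\hat{Q}}$ equals $\tfrac{d_j}{d_j+1}[\mathbf{Q}]_{j,\star}$; on the other hand row $j$ of $\mathbf{Q}-\tfrac{1}{d_j+1}\mathbf{e}_j[\mathbf{Q}]_{j,\star}$ equals $[\mathbf{Q}]_{j,\star}-\tfrac{1}{d_j+1}[\mathbf{Q}]_{j,\star}=\tfrac{d_j}{d_j+1}[\mathbf{Q}]_{j,\star}$ as well, so the two rows coincide. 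The degenerate case $d_j=0$ is covered automatically: then $[\mathbf{Q}]_{j,\star}=\mathbf{0}$, so $\mathbf{\hat{Q}}=\mathbf{Q}$, consistent with $j$'s only in-neighbor in $\tilde G$ being the new node $i$. Assembling the four identified pieces then yields Eq.\eqref{eq:50}.

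I do not anticipate a genuine obstacle; the argument is pure bookkeeping with in-degrees. The only point that needs a moment of care — and the reason case (C2) is more delicate than case (C1), where $\mathbf{Q}$ is merely bordered — is recognizing that the uniform rescaling of row $j$ by the factor $\tfrac{d_j}{d_j+1}$ caused by the in-degree increment is precisely a rank-one perturbation $-\tfrac{1}{d_j+1}\mathbf{e}_j[\mathbf{Q}]_{j,\star}$ of $\mathbf{Q}$ on the old-node block, which is the form asserted for $\mathbf{\hat{Q}}$.
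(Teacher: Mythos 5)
Your proposal is correct and follows essentially the same route as the paper's own proof: identify that the only change within the old-node block is the rescaling of row $j$ from $1/d_j$ to $1/(d_j+1)$, recognize this as the rank-one correction $-\tfrac{1}{d_j+1}\mathbf{e}_j[\mathbf{Q}]_{j,\star}$, and then border the result with the single new entry $[\tilde{\mathbf{Q}}]_{j,i}=\tfrac{1}{d_j+1}$ and zeros elsewhere. Your version is somewhat more explicit (checking row/column $i$ and the degenerate case $d_j=0$ separately), but the substance is identical.
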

\begin{proof}
When edge $(i,j)$ with $i \notin V$ and $j \in V$ is added,
there will be two changes to the old $\mathbf{Q}$:

\noindent (i) All nonzeros in ${{[\mathbf{Q}]}_{j,\star}}$ are updated from $\tfrac{1}{d_j}$ to $\tfrac{1}{d_j+1}$:
\begin{equation} \label{eq:51}
{{[\mathbf{\hat{Q}}]}_{j,\star}}
= \tfrac{{{d}_{j}}}{{{d}_{j}}+1}  {{[\mathbf{Q}]}_{j,\star}}
=  {{[\mathbf{Q}]}_{j,\star}} - \tfrac{1}{{{d}_{j}}+1}  {{[\mathbf{Q}]}_{j,\star}}
\end{equation}
(ii) The size of the old $\mathbf{{Q}}$ is added by 1,
with new entry ${{[\tilde{\mathbf{Q}}]}_{j,i}} = \tfrac{1}{d_j+1}$ in the bordered areas and 0s elsewhere:
\begin{equation} \label{eq:52}
\mathbf{\tilde{Q}}=\left[ \begin{array}{c|c}
   {\mathbf{\hat{Q}}} & \tfrac{1}{{{d}_{j}}+1}{{\mathbf{e}}_{j}}  \\ \hline
   \mathbf{0} & 0  \\
\end{array} \right] 
\end{equation}
Combining Eqs.\eqref{eq:51} and \eqref{eq:52} yields \eqref{eq:50}. \qed
\end{proof}

Theorem~\ref{thm:05} exhibits a special structure of the new ${\mathbf{\tilde{Q}}}$:
it is formed by bordering ${\mathbf{\hat{Q}}}$ by 0s except $[{\mathbf{\tilde{Q}}}]_{j,i}=\tfrac{1}{d_j+1}$,
where ${\mathbf{\hat{Q}}}$ is a rank-one update of the old ${\mathbf{{Q}}}$.
The block structure of ${\mathbf{\tilde{Q}}}$ inspires us to partition the new SimRank matrix ${\mathbf{\tilde{S}}}$ conformably into the similar block structure:
\[
\renewcommand\arraystretch{1.2}
\mathbf{\tilde{S}}= \left[ \begin{array}{c|c}
   {{{\mathbf{\tilde{S}}}}_{\mathbf{11}}} & {{{\mathbf{\tilde{S}}}}_{\mathbf{12}}}  \\ \hline
   {{{\mathbf{\tilde{S}}}}_{\mathbf{21}}} & {{{\mathbf{\tilde{S}}}}_{\mathbf{22}}}  \\
\end{array} \right]
\ \ \textrm{ where }  \
 \begin{array}{ll}
{{{\mathbf{\tilde{S}}}}_{\mathbf{11}}} \in \mathbb{R}^{n \times n}, &
{{{\mathbf{\tilde{S}}}}_{\mathbf{12}}} \in \mathbb{R}^{n \times 1}, \\[3pt]
{{{\mathbf{\tilde{S}}}}_{\mathbf{21}}} \in \mathbb{R}^{1 \times n}, &
{{{\mathbf{\tilde{S}}}}_{\mathbf{22}}} \in \mathbb{R}. 
\end{array}
\]
To determine each block of $\mathbf{\tilde{S}}$ with respect to the old $\mathbf{S}$,
we next present the following theorem.
\begin{theorem} \label{thm:06}
If there is a new edge $(i,j)$ with $i \notin V$ and $j \in V$ to be added to the old digraph $G=(V,E)$,
then there exists a vector
\begin{equation} \label{eq:53}
  \mathbf{z}= \alpha {{\mathbf{e}}_{j}}-\mathbf{y} \textrm{ with }
  \mathbf{y}:=\mathbf{Q}  \mathbf{S}  {{[\mathbf{Q}]}_{j,\star}^{T}}   \textrm{ and }
  \alpha : =\tfrac{{{\mathbf{y}}_{j}}+1-C}{2\left( {{d}_{j}}+1 \right)}
\end{equation}
such that the new SimRank matrix $\mathbf{\tilde{S}}$ is expressible as
\begin{equation} \label{eq:54}
\mathbf{\tilde{S}}=\left[ \begin{array}{c|c}
   \mathbf{S}+\mathbf{\Delta }{{{\mathbf{\tilde{S}}}}_{\mathbf{11}}} & \mathbf{0}  \\  \hline
   \mathbf{0} & 1-C  \\
\end{array}  \right] \begin{array}{l}
   \  \} \ n \textrm{ rows}  \\
   \rightarrow \textrm{row }i  \\
\end{array}
\end{equation}
where $\mathbf{S}$ is the old SimRank of $G$, and $\mathbf{\Delta }{{{\mathbf{\tilde{S}}}}_{\mathbf{11}}}$ satisfies the rank-two Sylvester equation:
\begin{equation} \label{eq:55}
\mathbf{\Delta }{{\mathbf{\tilde{S}}}_{\mathbf{11}}}=C\mathbf{\hat{Q}\Delta }{{\mathbf{\tilde{S}}}_{\mathbf{11}}}{{\mathbf{\hat{Q}}}^{T}}+\tfrac{C}{{{d}_{j}}+1}\left( {{\mathbf{e}}_{j}}{{\mathbf{z}}^{T}}+\mathbf{z}{{\mathbf{e}}_{j}}^{T} \right)
\end{equation}
with $\mathbf{\hat{Q}}$ being defined by Theorem~\ref{thm:05}. \qed
\end{theorem}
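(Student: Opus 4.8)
The plan is to substitute the block form of $\mathbf{\tilde Q}$ from Theorem~\ref{thm:05} into the defining SimRank equation $\mathbf{\tilde S}=C\,\mathbf{\tilde Q}\,\mathbf{\tilde S}\,\mathbf{\tilde Q}^T+(1-C)\mathbf{I}_{n+1}$, partition $\mathbf{\tilde S}$ conformably into the four blocks $\mathbf{\tilde S}_{11},\mathbf{\tilde S}_{12},\mathbf{\tilde S}_{21},\mathbf{\tilde S}_{22}$, and read off the four scalar/block identities obtained by multiplying out the right-hand side. First I would compute $\mathbf{\tilde Q}\,\mathbf{\tilde S}\,\mathbf{\tilde Q}^T$ explicitly: because the last row of $\mathbf{\tilde Q}$ is entirely zero and the last column is $\tfrac{1}{d_j+1}\mathbf{e}_j$ (in the top block), the $(2,2)$, $(1,2)$ and $(2,1)$ blocks of the product collapse dramatically. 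This should immediately yield $\mathbf{\tilde S}_{22}=1-C$ and, crucially, $\mathbf{\tilde S}_{12}=\mathbf{0}$ (hence $\mathbf{\tilde S}_{21}=\mathbf{0}$ by symmetry), matching the claimed form \eqref{eq:54}; the key observation is that the equation for $\mathbf{\tilde S}_{12}$ is homogeneous of the form $\mathbf{\tilde S}_{12}=C\,\mathbf{\hat Q}\,\mathbf{\tilde S}_{12}\cdot 0 + \cdots$ with a vanishing coefficient, forcing $\mathbf{\tilde S}_{12}=\mathbf{0}$.

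With the off-diagonal blocks gone, the $(1,1)$ block gives $\mathbf{\tilde S}_{11}=C\,\mathbf{\hat Q}\,\mathbf{\tilde S}_{11}\,\mathbf{\hat Q}^T + \tfrac{C}{d_j+1}\mathbf{e}_j\mathbf{e}_j^T\,\mathbf{\tilde S}_{11}\cdot\text{(something)} + (1-C)\mathbf{I}_n$; I would carefully track the cross term coming from the bordered column $\tfrac{1}{d_j+1}\mathbf{e}_j$ of $\mathbf{\tilde Q}$ interacting with $\mathbf{\tilde S}_{22}\mathbf e_j^T$. Writing $\mathbf{\tilde S}_{11}=\mathbf{S}+\mathbf{\Delta\tilde S}_{11}$ and using the old identity $\mathbf{S}=C\,\mathbf{Q}\,\mathbf{S}\,\mathbf{Q}^T+(1-C)\mathbf{I}_n$ together with $\mathbf{\hat Q}=\mathbf{Q}-\tfrac{1}{d_j+1}\mathbf{e}_j[\mathbf{Q}]_{j,\star}$, I would subtract the old equation from the new one. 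The constant term $(1-C)\mathbf{I}_n$ cancels, $\mathbf{\Delta\tilde S}_{11}$ inherits the homogeneous operator $C\,\mathbf{\hat Q}(\cdot)\mathbf{\hat Q}^T$, and the inhomogeneous remainder is a sum of terms each carrying one factor of the rank-one perturbation $\mathbf{e}_j[\mathbf{Q}]_{j,\star}$ (or the bordered column) applied to the old $\mathbf{S}$; collecting these should produce exactly $\tfrac{C}{d_j+1}(\mathbf{e}_j\mathbf z^T+\mathbf z\mathbf e_j^T)$ with $\mathbf z=\alpha\mathbf e_j-\mathbf y$, $\mathbf y=\mathbf{Q}\,\mathbf{S}\,[\mathbf Q]_{j,\star}^T$.

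The main obstacle will be the bookkeeping that pins down the scalar $\alpha=\tfrac{\mathbf y_j+1-C}{2(d_j+1)}$ and verifies that the residual is genuinely symmetric rank-two rather than rank-three. The difficulty is that expanding $\mathbf{\hat Q}\,\mathbf S\,\mathbf{\hat Q}^T$ produces, besides $\mathbf{Q}\,\mathbf S\,\mathbf{Q}^T$, three extra pieces: $-\tfrac{1}{d_j+1}\mathbf e_j[\mathbf Q]_{j,\star}\mathbf S\,\mathbf Q^T$, its transpose, and $+\tfrac{1}{(d_j+1)^2}\mathbf e_j([\mathbf Q]_{j,\star}\mathbf S[\mathbf Q]_{j,\star}^T)\mathbf e_j^T$, plus the cross term from the bordered column times $\mathbf{\tilde S}_{22}=1-C$ which is $\tfrac{1-C}{(d_j+1)^2}\mathbf e_j\mathbf e_j^T$ (scaled by $C$). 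The first two pieces are visibly of the form $\mathbf e_j(\cdot)^T$ and $(\cdot)\mathbf e_j^T$ with $(\cdot)=\mathbf{Q}\,\mathbf{S}[\mathbf Q]_{j,\star}^T=\mathbf y$, contributing the $\mathbf e_j\mathbf y^T+\mathbf y\mathbf e_j^T$ part with the right $-1$ sign; the remaining two are both multiples of $\mathbf e_j\mathbf e_j^T$, and I must check that their combined coefficient, once folded into the symmetrized form, equals $2\alpha\cdot\tfrac{1}{d_j+1}$ — i.e. that $\tfrac{1}{d_j+1}([\mathbf Q]_{j,\star}\mathbf S[\mathbf Q]_{j,\star}^T)+(1-C)/(d_j+1)$ reduces to $2\alpha$ after using $\mathbf y_j=[\mathbf Q]_{j,\star}\,\mathbf y=\tfrac{1}{?}$… this is where a small identity relating $\mathbf y_j$ to $[\mathbf Q]_{j,\star}\mathbf S[\mathbf Q]_{j,\star}^T$ is needed, and getting that factor-of-$(d_j+1)$ consistent is the delicate point. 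Once $\alpha$ checks out, uniqueness of the solution to the contraction mapping in \eqref{eq:55} (since $\|C\,\mathbf{\hat Q}(\cdot)\mathbf{\hat Q}^T\|<1$) guarantees $\mathbf{\Delta\tilde S}_{11}$ is well-defined, completing the proof.
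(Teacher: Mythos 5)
Your proposal follows essentially the same route as the paper's proof: substitute the block form of $\mathbf{\tilde{Q}}$ into the SimRank equation, read off $\mathbf{\tilde{S}}_{12}=\mathbf{\tilde{S}}_{21}=\mathbf{0}$ and $\mathbf{\tilde{S}}_{22}=1-C$ by block comparison, then subtract the old equation $\mathbf{S}=C\mathbf{Q}\mathbf{S}\mathbf{Q}^T+(1-C)\mathbf{I}_n$ and collect the rank-one perturbation terms from $\mathbf{\hat{Q}}\mathbf{S}\mathbf{\hat{Q}}^T$ into $\tfrac{C}{d_j+1}(\mathbf{e}_j\mathbf{z}^T+\mathbf{z}\mathbf{e}_j^T)$. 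The one identity you flag as delicate is immediate from the definition of $\mathbf{y}$ (namely $\mathbf{y}_j=\mathbf{e}_j^T\mathbf{Q}\mathbf{S}[\mathbf{Q}]_{j,\star}^T=[\mathbf{Q}]_{j,\star}\mathbf{S}[\mathbf{Q}]_{j,\star}^T$, not $[\mathbf{Q}]_{j,\star}\mathbf{y}$), after which the $\mathbf{e}_j\mathbf{e}_j^T$ coefficient $\tfrac{C(\mathbf{y}_j+1-C)}{(d_j+1)^2}=\tfrac{2C\alpha}{d_j+1}$ checks out exactly as claimed.
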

\begin{proof}
We plug $\mathbf{\tilde{Q}}$ of Eq.\eqref{eq:50} into the SimRank formula:
\[\mathbf{\tilde{S}}=C\cdot \mathbf{\tilde{Q}}\cdot \mathbf{\tilde{S}}\cdot {{\mathbf{\tilde{Q}}}^{T}}+(1-C) \cdot {{\mathbf{I}}_{n+1}}, \]
which produces
\begin{equation*}
\small \def\arraystretch{1.2}
\begin{split}
\mathbf{\tilde{S}}= \left[ \begin{array}{c|c}
   {{{\mathbf{\tilde{S}}}}_{\mathbf{11}}} & {{{\mathbf{\tilde{S}}}}_{\mathbf{12}}}  \\ \hline
   {{{\mathbf{\tilde{S}}}}_{\mathbf{21}}} & {{{\mathbf{\tilde{S}}}}_{\mathbf{22}}}  \\
\end{array} \right]= & C\left[ \begin{array}{c|c}
   {\mathbf{\hat{Q}}} & \tfrac{1}{{{d}_{j}}+1}{{\mathbf{e}}_{j}}  \\ \hline
   \mathbf{0} & 0  \\
\end{array} \right]\left[ \begin{array}{c|c}
   {{{\mathbf{\tilde{S}}}}_{\mathbf{11}}} & {{{\mathbf{\tilde{S}}}}_{\mathbf{12}}}  \\ \hline
   {{{\mathbf{\tilde{S}}}}_{\mathbf{21}}} & {{{\mathbf{\tilde{S}}}}_{\mathbf{22}}}  \\
\end{array} \right]{{\left[ \begin{array}{c|c}
   {\mathbf{\hat{Q}}^{T}} & \mathbf{0}  \\ \hline
   \tfrac{1}{{{d}_{j}}+1}{{\mathbf{e}}_{j}^{T}} & 0  \\
\end{array} \right]}} \\
& +(1-C)\left[ \begin{array}{c|c}
   {{\mathbf{I}}_{n}} & \mathbf{0}  \\ \hline
   \mathbf{0} & 1  \\
\end{array} \right]
\end{split}
\end{equation*}
By using block matrix multiplications, the above equation can be simplified as
\begin{equation} \label{eq:56}
\left[ \begin{array}{c|c}
   {{{\mathbf{\tilde{S}}}}_{\mathbf{11}}} & {{{\mathbf{\tilde{S}}}}_{\mathbf{12}}}  \\ \hline
   {{{\mathbf{\tilde{S}}}}_{\mathbf{21}}} & {{{\mathbf{\tilde{S}}}}_{\mathbf{22}}}  \\
\end{array} \right]=C\left[ \begin{array}{c|c}
   \mathbf{P} & \mathbf{0}  \\ \hline
   \mathbf{0} & 0  \\
\end{array} \right]+(1-C)\left[ \begin{array}{c|c}
   {{\mathbf{I}}_{n}} & \mathbf{0}  \\ \hline
   \mathbf{0} & 1  \\
\end{array} \right]
\end{equation}
\begin{equation} \label{eq:57}
\begin{split}
 \textrm{with } \mathbf{P}=& \mathbf{\hat{Q}}{{\mathbf{\tilde{S}}}_{\mathbf{11}}}{{\mathbf{\hat{Q}}}^{T}} +\tfrac{1}{{{\left( {{d}_{j}}+1 \right)}^{2}}}{{\mathbf{e}}_{j}}{{\mathbf{\tilde{S}}}_{\mathbf{22}}}{{\mathbf{e}}_{j}}^{T} \\
 & +\tfrac{1}{{{d}_{j}}+1}{{\mathbf{e}}_{j}}{{\mathbf{\tilde{S}}}_{\mathbf{21}}}{{\mathbf{\hat{Q}}}^{T}}+\tfrac{1}{{{d}_{j}}+1}\mathbf{\hat{Q}}{{\mathbf{\tilde{S}}}_{\mathbf{12}}}{{\mathbf{e}}_{j}}^{T}
\end{split}
\end{equation}
Block-wise comparison of both sides of Eq.\eqref{eq:56} yields
\[
\left\{
\begin{array}{l}
   {{{\mathbf{\tilde{S}}}}_{\mathbf{12}}}={{{\mathbf{\tilde{S}}}}_{\mathbf{21}}}=\mathbf{0} \\
  {{{\mathbf{\tilde{S}}}}_{\mathbf{22}}}=1-C \\
  {{{\mathbf{\tilde{S}}}}_{\mathbf{11}}}=C\cdot \mathbf{P}+(1-C)\cdot {{\mathbf{I}}_{n}}
\end{array}
\right.
\]
Combing the above equations with Eq.\eqref{eq:57} produces
\begin{equation} \label{eq:58}
{{\mathbf{\tilde{S}}}_{\mathbf{11}}}=C\mathbf{\hat{Q}}{{\mathbf{\tilde{S}}}_{\mathbf{11}}}{{\mathbf{\hat{Q}}}^{T}}+\tfrac{\left( 1-C \right)C}{{{\left( {{d}_{j}}+1 \right)}^{2}}}{{\mathbf{e}}_{j}}{{\mathbf{e}}_{j}}^{T}+(1-C){{\mathbf{I}}_{n}}
\end{equation}
Applying ${{\mathbf{\tilde{S}}}_{\mathbf{11}}}=\mathbf{S}+\mathbf{\Delta }{{\mathbf{\tilde{S}}}_{\mathbf{11}}}$ and $\mathbf{S}=C \mathbf{Q} \mathbf{S} {{\mathbf{Q}}^{T}}+(1-C) {{\mathbf{I}}_{n}}$ to Eq.\eqref{eq:58} and rearranging the terms, we have
\[\mathbf{\Delta }{{\mathbf{\tilde{S}}}_{\mathbf{11}}}=C\mathbf{\hat{Q}\Delta }{{\mathbf{\tilde{S}}}_{\mathbf{11}}}{{\mathbf{\hat{Q}}}^{T}}+\tfrac{C}{{{d}_{j}}+1}\left( 2\alpha {{\mathbf{e}}_{j}}{{\mathbf{e}}_{j}}^{T}-{{\mathbf{e}}_{j}}{{\mathbf{y}}^{T}}-\mathbf{y}{{\mathbf{e}}_{j}}^{T} \right)\]
with ${\alpha}$ and $\mathbf{y}$ being defined by Eq.\eqref{eq:53}. \qed
\end{proof}

Theorem~\ref{thm:06} implies that, in the case (C2), after a new edge $(i,j)$ is inserted, the new SimRank matrix $\mathbf{\tilde{S}}$ takes an elegant diagonal block structure:
the upper-left block of $\mathbf{\tilde{S}}$ is perturbed by $\mathbf{\Delta \tilde{S}_{11}}$ which is the solution to the rank-two Sylvester equation~\eqref{eq:55};
the lower-right block of $\mathbf{\tilde{S}}$ is a constant $(1-C)$.
This structure of $\mathbf{\tilde{S}}$ suggests that the inserted edge $(i,j)_{i \notin V, j \in V}$ only has a recursive impact on the SimRanks with pairs $(x,y) \in V \times V$,
but with no impacts on pairs $(x,y) \in (V \times \{i\}) \cup (\{i\} \times V)$.
Thus, our incremental way of computing the new $\mathbf{\tilde{S}}$ will focus on the efficiency of obtaining $\mathbf{\Delta \tilde{S}_{11}}$ from Eq.\eqref{eq:55}.
Fortunately, we notice that $\mathbf{\Delta \tilde{S}_{11}}$ satisfies the rank-two Sylvester equation,
whose algebraic structure is similar to that of $\mathbf{\Delta {S}}$ in Eqs.\eqref{eq:13} and \eqref{eq:14} (in Section~\ref{sec:04}).
Hence, our previous techniques to compute $\mathbf{\Delta {S}}$ in Eqs.\eqref{eq:13} and \eqref{eq:14} can be analogously applied to compute $\mathbf{\Delta \tilde{S}_{11}}$ in Eq.\eqref{eq:55},
thus eliminating costly matrix-matrix multiplications, as will be illustrated in Algorithm~\ref{alg:04}.

One disadvantage of Theorem~\ref{thm:06} is that, in order to get the auxiliary vector $\mathbf{z}$ for evaluating $\mathbf{\tilde{S}}$,
one has to memorize the \emph{entire} old matrix $\mathbf{S}$ in Eq.\eqref{eq:53}.
In fact, we can utilize the technique of rearranging the terms of the SimRank Eq.\eqref{eq:03a} to characterize $\mathbf{Q}  \mathbf{S}  {{[\mathbf{Q}]}_{j,\star}^{T}}$ in terms of only one vector $[\mathbf{S}]_{\star,j}$ so as to avoid memoizing the entire $\mathbf{S}$,
as shown below.
\begin{theorem} \label{thm:07}
The auxiliary matrix $\mathbf{\Delta }{{{\mathbf{\tilde{S}}}}_{\mathbf{11}}}$ in Theorem~\ref{thm:06} can be represented as
\begin{equation}  \label{eq:59}
\begin{split}
\mathbf{\Delta }{{\mathbf{\tilde{S}}}_{\mathbf{11}}}=\tfrac{C}{{{d}_{j}}+1}\left( \mathbf{M}+{{\mathbf{M}}^{T}} \right) \textrm{ with } \\
\mathbf{M}=\sum\nolimits_{k=0}^{\infty }{{{C}^{k}}{{{\mathbf{\hat{Q}}}}^{k}} {{\mathbf{e}}_{j}}{{\mathbf{z}}^{T}} {{\left( {{{\mathbf{\hat{Q}}}}^{T}} \right)}^{k}}}
\end{split}
\end{equation}
where $\mathbf{\hat{Q}}$ is defined by Theorem~\ref{thm:05} and
\begin{equation}  \label{eq:60}
\scalebox{0.88}{$\mathbf{z}:=\left( \tfrac{1}{2C\left( {{d}_{j}}+1 \right)}\left( {{[\mathbf{S}]}_{j,j}}-{{(1-C)}^{2}} \right)+\tfrac{1-C}{C} \right){{\mathbf{e}}_{j}}-\tfrac{1}{C}{{[\mathbf{S}]}_{\star,j}}$}
\end{equation}
and $\mathbf{S}$ is the old SimRank matrix of $G$. \qed
\end{theorem}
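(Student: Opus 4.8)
The plan is to split the argument into two essentially independent parts: (i) put the rank-two Sylvester equation~\eqref{eq:55} for $\mathbf{\Delta}\tilde{\mathbf{S}}_{11}$ into the closed series form~\eqref{eq:59}; and (ii) re-express the vector $\mathbf{z}=\alpha\mathbf{e}_j-\mathbf{y}$ of Theorem~\ref{thm:06} purely in terms of the single column $[\mathbf{S}]_{\star,j}$ and the diagonal entry $[\mathbf{S}]_{j,j}$, obtaining~\eqref{eq:60}. Part~(i) is the direct analogue of unrolling Eq.~\eqref{eq:14} into Eq.~\eqref{eq:16}; part~(ii) is the only genuinely new content beyond Theorem~\ref{thm:06}, so that is where I would focus the care.

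For part~(i), I would first note that $\mathbf{\hat{Q}}$ from Theorem~\ref{thm:05} is sub-stochastic: its $j$-th row is the $j$-th row of $\mathbf{Q}$ scaled by $\tfrac{d_j}{d_j+1}<1$, and every other row is an unchanged (sub-)stochastic row of $\mathbf{Q}$; hence $\rho(\mathbf{\hat{Q}})\le 1$ and the linear operator $\mathbf{X}\mapsto C\,\mathbf{\hat{Q}}\,\mathbf{X}\,\mathbf{\hat{Q}}^T$ has spectral radius at most $C<1$. Therefore~\eqref{eq:55} has a unique solution given by the convergent Neumann-type series $\mathbf{\Delta}\tilde{\mathbf{S}}_{11}=\tfrac{C}{d_j+1}\sum_{k\ge 0}C^{k}\,\mathbf{\hat{Q}}^{k}\bigl(\mathbf{e}_j\mathbf{z}^T+\mathbf{z}\mathbf{e}_j^T\bigr)(\mathbf{\hat{Q}}^T)^{k}$. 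Splitting the symmetric forcing term into its two rank-one halves and using $\bigl(\mathbf{\hat{Q}}^{k}\mathbf{z}\mathbf{e}_j^T(\mathbf{\hat{Q}}^T)^{k}\bigr)^T=\mathbf{\hat{Q}}^{k}\mathbf{e}_j\mathbf{z}^T(\mathbf{\hat{Q}}^T)^{k}$ yields $\mathbf{\Delta}\tilde{\mathbf{S}}_{11}=\tfrac{C}{d_j+1}(\mathbf{M}+\mathbf{M}^T)$ with $\mathbf{M}$ exactly as in~\eqref{eq:59}; equivalently, one may verify directly that this $\mathbf{M}+\mathbf{M}^T$ satisfies~\eqref{eq:55} and invoke uniqueness.

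For part~(ii) — the crux — I would exploit the self-referential structure of Li \etal's SimRank equation~\eqref{eq:03a} to eliminate the matrix–vector product hidden in $\mathbf{y}=\mathbf{Q}\,\mathbf{S}\,[\mathbf{Q}]_{j,\star}^{T}$. Right-multiplying $\mathbf{S}=C\mathbf{Q}\mathbf{S}\mathbf{Q}^T+(1-C)\mathbf{I}_n$ by $\mathbf{e}_j$ and using $\mathbf{Q}^T\mathbf{e}_j=[\mathbf{Q}]_{j,\star}^{T}$ gives $[\mathbf{S}]_{\star,j}=C\,\mathbf{y}+(1-C)\mathbf{e}_j$, i.e.\ $\mathbf{y}=\tfrac1C[\mathbf{S}]_{\star,j}-\tfrac{1-C}{C}\mathbf{e}_j$; in particular $\mathbf{y}_j=\tfrac1C[\mathbf{S}]_{j,j}-\tfrac{1-C}{C}$, so $\alpha=\tfrac{\mathbf{y}_j+1-C}{2(d_j+1)}$ collapses to $\tfrac{[\mathbf{S}]_{j,j}-(1-C)^2}{2C(d_j+1)}$. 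Substituting both into $\mathbf{z}=\alpha\mathbf{e}_j-\mathbf{y}$ and collecting the $\mathbf{e}_j$-terms produces exactly~\eqref{eq:60}, establishing that the whole computation of $\mathbf{\Delta}\tilde{\mathbf{S}}_{11}$ requires only one column of the old $\mathbf{S}$. The only points needing attention are keeping the $\tfrac{1}{d_j+1}$ and $\tfrac1C$ bookkeeping consistent and confirming $\mathbf{Q}^T\mathbf{e}_j$ is the transposed $j$-th row of $\mathbf{Q}$; the remaining algebra is routine.
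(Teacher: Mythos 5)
Your proposal is correct and follows essentially the same route as the paper: it derives Eq.~\eqref{eq:60} by right-multiplying the SimRank equation by $\mathbf{e}_j$ to express $\mathbf{y}=\mathbf{Q}\mathbf{S}[\mathbf{Q}]_{j,\star}^{T}$ (and hence $\alpha$) through $[\mathbf{S}]_{\star,j}$ and $[\mathbf{S}]_{j,j}$, and obtains Eq.~\eqref{eq:59} by unrolling the rank-two Sylvester equation~\eqref{eq:55} into its Neumann series and splitting the symmetric forcing term into the two mutually transposed rank-one halves. The only addition beyond the paper's argument is your explicit sub-stochasticity check on $\mathbf{\hat{Q}}$ to justify convergence and uniqueness, which the paper leaves implicit via its stated equivalence $\mathbf{X}=\mathbf{A}\mathbf{X}\mathbf{B}+\mathbf{C}\Leftrightarrow\mathbf{X}=\sum_{k}\mathbf{A}^{k}\mathbf{C}\mathbf{B}^{k}$.
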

\begin{proof}
  We multiply the SimRank equation by ${{\mathbf{e}}_{j}}$ to get
\[
{{[\mathbf{S}]}_{\star,j}}=C\cdot \mathbf{QS}{{[\mathbf{Q}]}_{j,\star}^{T}}+(1-C)\cdot {{\mathbf{e}}_{j}}.
\]
Combining this with $\mathbf{y}=\mathbf{QS}{{[\mathbf{Q}]}_{j,\star}^{T}}$ in Eq.\eqref{eq:53} produces
\[\mathbf{y}=\tfrac{1}{C}{{[\mathbf{S}]}_{\star,j}}-\tfrac{1-C}{C}{{\mathbf{e}}_{j}} \ \textrm{ and } \
{\mathbf{y}}_j = \tfrac{1}{C}{{[\mathbf{S}]}_{j,j}}-\tfrac{1-C}{C}.
\]
Plugging these results into Eq.\eqref{eq:53},
we can get Eq.\eqref{eq:60}.

Also, the recursive form of $\mathbf{\Delta }{{\mathbf{\tilde{S}}}_{\mathbf{11}}}$ in Eq.\eqref{eq:55} can be converted into the following series:
\begin{eqnarray*}
  \mathbf{\Delta }{{\mathbf{\tilde{S}}}_{\mathbf{11}}}
&=&\tfrac{C}{{{d}_{j}}+1}\sum\nolimits_{k=0}^{\infty }{{{C}^{k}}{{{\mathbf{\hat{Q}}}}^{k}}\left( {{\mathbf{e}}_{j}}{{\mathbf{z}}^{T}}+\mathbf{z}{{\mathbf{e}}_{j}}^{T} \right){{\left( {{{\mathbf{\hat{Q}}}}^{T}} \right)}^{k}}} \\
&=&\mathbf{M} + \mathbf{M}^T
\end{eqnarray*}
with $\mathbf{M}$ being defined by Eq.\eqref{eq:59}. \qed
\end{proof}

For edge insertion of the case (C2),
Theorem \ref{thm:07} gives an efficient method to compute the update matrix $\mathbf{\Delta }{{{\mathbf{\tilde{S}}}}_{\mathbf{11}}}$.
We note that the form of $\mathbf{\Delta }{{{\mathbf{\tilde{S}}}}_{\mathbf{11}}}$ in Eq.\eqref{eq:59} is similar to that of $\mathbf{\Delta }{{{\mathbf{\tilde{S}}}}}$ in Eq.\eqref{eq:29c}.
Thus, similar to Theorem~\ref{thm:03},
the follow method can be applied to compute $\mathbf{M}$ so as to avoid matrix-matrix multiplications.

In Algorithm~\ref{alg:04}, we present the edge insertion of our method for the case (C2) to incrementally update new SimRank scores.
The total complexity of Algorithm~\ref{alg:04} is $O(Kn^2)$ time and $O(n^2)$ memory in the worst case for retrieving all $n^2$ pairs of scores, which is dominated by Line~\ref{ln:a04-08}.
To reduce its computational time further, the similar pruning techniques in Section~\ref{sec:05} can be applied to Algorithm~\ref{alg:04}.
This can speed up the computational time to $O(K(m+|\AFF|))$,
where $|\AFF|$ is the size of ``affected areas'' in $\mathbf{\Delta S}_{11}$.
\begin{figure}[t] \centering
  \includegraphics[width=1\linewidth]{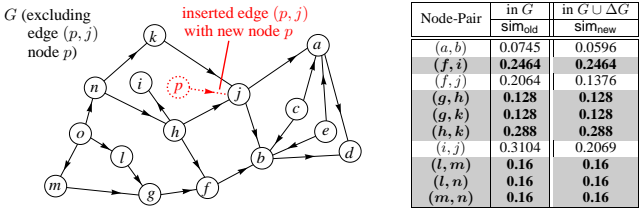}
  \caption{Incrementally update SimRank when a new edge $(p,j)$ with $p \notin V$ and $j \in V$ is inserted into $G=(V,E)$} \label{fig:03} 
\end{figure}

\begin{example}
Consider the citation digraph $G$ in Fig.\ref{fig:03}.
If the new edge $(p,j)$ with new node $p$ is inserted to $G$,
the new $\mathbf{\tilde{S}}$ can be incrementally derived from the old $\mathbf{S}$ as follows:

First, we obtain $\mathbf{\Delta }{{{\mathbf{\tilde{S}}}}_{\mathbf{11}}}$ according to Theorem~\ref{thm:07}.
Note that $C=0.8$, $d_j=2$, and the old SimRank scores \\[-10pt]
\[\scalebox{0.85}{$
{{[\mathbf{S}]}_{\star,j}} = \kbordermatrix{
 \hspace*{-0.5em} & (a) & \cdots  & (e)  &   (f)   &  (g) & (h) & (i)     & (j)     & (k) & \cdots  & (o) \\
 \hspace*{-0.5em} & 0,  & \cdots, & 0,   & 0.2064, &   0, & 0,  & 0.3104, & 0.5104, & 0,  & \cdots, &  0  \\
}{}^T $} 
\]
It follows from Eq.\eqref{eq:60} that the auxiliary vector
\begin{eqnarray*}
\mathbf{z} &= & \scalebox{0.88}{$\left( \tfrac{1}{2 \times 0.8\left( 2+1 \right)}\left( {0.5104}-{{(1-0.8)}^{2}} \right)+\tfrac{1-0.8}{0.8} \right){{\mathbf{e}}_{j}}-\tfrac{1}{0.8}{{[\mathbf{S}]}_{\star,j}}$} \\
&=& \scalebox{0.88}{$\kbordermatrix{
 \hspace*{-0.5em} & (a) & \cdots  & (e)  &   (f)   &  (g) & (h) & (i)     & (j)     & (k) & \cdots  & (o) \\
 \hspace*{-0.5em} & 0,  & \cdots, & 0,   & -0.258, &   0, & 0,  & -0.388, & -0.29,  & 0,  & \cdots, &  0  \\
}{}^T $}
\end{eqnarray*}
Utilizing $\mathbf{z}$, we can obtain $\mathbf{M}$ from Eq.\eqref{eq:59}.
Thus, $\mathbf{\Delta }{{\mathbf{\tilde{S}}}_{\mathbf{11}}}$ can be computed from $\mathbf{M}$ as
\[
\mathbf{\Delta }{{\mathbf{\tilde{S}}}_{\mathbf{11}}}=\tfrac{0.8}{{2}+1}\left( \mathbf{M}+{{\mathbf{M}}^{T}} \right) =
\]
\[
\scalebox{0.52}{$
\begin{blockarray}{ccccc|ccccc|c|c}
       & (a)      & (b)       & (c)& (d)      & (e)       & (f)    &  (g)     & (h)          & (i)     &   (j)  & (k)\cdots(o)  \\[3pt]
 \begin{block}{c(cccc|ccccc|c|c)}
   (a) & -0.0137  & -0.0149   &  0 &  0       &           &        &             &              &         &        &               \\
   (b) & -0.0149  & -0.0146   &  0 &  0       &           &        & \Big{$0$}   &              &         &   \Big{$0$}    &    \Big{$0$}  \\
   (c) & 0        &  0        &  0 &  0       &           &        &             &              &         &        &               \\
   (d) & 0        &  0        &  0 &  -0.0116 &           &        &             &              &         &        &               \\\cline{1-12}
   (e) &          &           &    &          &           &        &             &              &         &   0     &               \\
   (f) &          &           &    &          &           &        &             &              &         & -0.0688 &               \\
   (g) &          & \Big{$0$} &    &          &           &        & \Big{$0$}   &              &         &   0      &   \Big{$0$}   \\
   (h) &          &           &    &          &           &        &             &              &         &   0      &               \\
   (i) &          &           &    &          &           &        &             &              &         & -0.1035 &               \\\cline{1-12}
   (j) &          &     0     &    &          &  0        & -0.0688&    0        &     0        & -0.1035 & -0.1547 &     0          \\\cline{1-12}
\vdots &          & \Big{$0$}&    &          &           &        & \Big{$0$}   &                &         &  \Big{$0$}     &    \Big{$0$}  \\
   (o) &          &          &    &          &           &        &             &                &         &        &               \\
 \end{block}
\end{blockarray}$}\]

Next, by Theorem~\ref{thm:06}, we obtain the new SimRank
\begin{equation*}
\mathbf{\tilde{S}}=\left[ \begin{array}{c|c}
   \mathbf{S}+\mathbf{\Delta }{{{\mathbf{\tilde{S}}}}_{\mathbf{11}}} & \mathbf{0}  \\  \hline
   \mathbf{0} & 0.2  \\
\end{array}  \right]
\end{equation*}
which is partially illustrated in Fig.\ref{fig:03}. \qed
\end{example}
\begin{algorithm}[t]
\small
\DontPrintSemicolon
\SetKwInOut{Input}{Input}
\SetKwInOut{Output}{Output}
\Input{a directed graph $G=(V,E)$, \\
       a new edge $(i,j)_{i \notin V, \ j \in V}$ inserted to $G$, \\
       the old similarities $\mathbf{S}$ in $G$, \\
       the number of iterations $K$, \\
       the damping factor $C$.}
\Output{the new similarities ${\mathbf{\tilde{S}}}$ in $G \cup\{(i,j)\}$.}
\nl \label{ln:a04-01} initialize the transition matrix $\mathbf{Q}$ in $G$ ; \;
\nl \label{ln:a04-02}  ${{d}_{j}}:=$ in-degree of node $j$ in $G$ ; \;
\nl \label{ln:a04-03}  $\mathbf{z}:=\big( \tfrac{1}{2C ( {{d}_{j}}+1 )}\big( {{[\mathbf{S}]}_{j,j}}-{{(1-C)}^{2}} \big)+\tfrac{1-C}{C} \big){{\mathbf{e}}_{j}}-\tfrac{1}{C}{{[\mathbf{S}]}_{\star,j}}$ ;\;
\nl \label{ln:a04-04}  initialize ${{\bm{\xi }}_{0}} := \mathbf{e}_j,\quad {{\bm{\eta }}_{0}} := {\mathbf{z}},\quad {{\mathbf{M}}_{0}} := \mathbf{e}_j  {\mathbf{z}}^T$ ; \;
\nl \label{ln:a04-05}  \For {$k=0,1,\cdots, K-1$} {
\nl \label{ln:a04-06}   ${{\bm{\xi }}_{k+1}} :=  C \cdot \mathbf{Q} \cdot {{\bm{\xi  }}_{k}} - \tfrac{C}{{{d}_{j}}+1}({{[\mathbf{Q}]}_{j,\star}} \cdot \bm{\xi }_{k}) \cdot  {{\mathbf{e}}_{j}}$ ; \;
\nl \label{ln:a04-07}   ${{\bm{\eta }}_{k+1}} := \mathbf{Q}\cdot {{\bm{\eta  }}_{k}} - \tfrac{1}{{{d}_{j}}+1}({{[\mathbf{Q}]}_{j,\star}} \cdot \bm{\eta  }_{k}) \cdot {{\mathbf{e}}_{j}}$ ;\;
\nl \label{ln:a04-08}   ${{\mathbf{M}}_{k+1}} := {{\bm{\xi }}_{k+1}}\cdot \bm{\eta }_{k+1}^{T}+{{\mathbf{M}}_{k}}$ ; \;
}
\nl \label{ln:a04-09} compute $\mathbf{\Delta }{{\mathbf{\tilde{S}}}_{\mathbf{11}}}:=\tfrac{C}{{{d}_{j}}+1}\left( \mathbf{M}_K+{{\mathbf{M}_K^{T}}} \right)$ ; \;
\nl \label{ln:a04-10} \Return $\mathbf{\tilde{S}}:=\left[ \begin{array}{c|c}
   \mathbf{S}+\mathbf{\Delta }{{{\mathbf{\tilde{S}}}}_{\mathbf{11}}} & \mathbf{0}  \\  \hline
   \mathbf{0} & 1-C  \\
\end{array}  \right]$ ; \;
\caption{\IncUSRthree~($G, (i,j), \mathbf{S}, K, C$)}  \label{alg:04}
\end{algorithm}
\subsection{Inserting an edge $(i,j)$ with $i \notin V$ and $j \notin V$} \label{sec:04e}
We next focus on the case (C3), the insertion of an edge $(i,j)$ with $i \notin V$ and $j \notin V$.
Without loss of generality,
it can be tacitly assumed that nodes $i$ and $j$ are indexed by $n+1$ and $n+2$, respectively.
In this case, the inserted edge $(i,j)$ accompanies the insertion of two new nodes,
which can form another independent component in the new graph.

In this case, the new transition matrix $\mathbf{\tilde{Q}}$ can be characterized as a block diagonal matrix
\[
\mathbf{\tilde{Q}}=\left[ \begin{array}{c|c}
   \mathbf{Q} & \mathbf{0}  \\ \hline
   \mathbf{0} & \mathbf{N}  \\
\end{array} \right] \begin{array}{l}
     \} \ n \textrm{ rows}  \\
     \} \ 2 \textrm{ rows}  \\
\end{array} 
\ \textrm{ with } \
\mathbf{N}:=\left[ \begin{array}{cc}
   0 & 0  \\
   1 & 0  \\
\end{array} \right]. 
\]
With this structure, we can infer that the new SimRank matrix $\mathbf{\tilde{S}}$ takes the block diagonal form as
\[
\renewcommand\arraystretch{1.2}
\mathbf{\tilde{S}}=\left[ \begin{array}{c|c}
   \mathbf{S} & \mathbf{0}  \\ \hline
   \mathbf{0} & \mathbf{\hat{S}}  \\
\end{array} \right] \begin{array}{l}
     \} \ n \textrm{ rows}  \\
     \} \ 2 \textrm{ rows}  \\
\end{array} 
 \ \textrm{ with } \
\mathbf{\hat{S}} \in \mathbb{R}^{2 \times 2}. 
\]
This is because, after a new edge $(i,j)_{i \notin V, j \notin V}$ is added,
all node-pairs $(x,y) \in (V \times \{i,j\} \cup \{i,j\} \times V)$ have zero SimRank scores since there are no connections between nodes $x$ and $y$.
Besides, the inserted edge $(i,j)$ is an independent component that has no impact on $s(x,y)$ for $\forall (x,y)\in V \times V$.
Hence, the submatrix $\mathbf{\hat{S}}$ of the new SimRank matrix can be derived by solving the equation:
\[\mathbf{\hat{S}}=C\cdot \mathbf{N}\cdot \mathbf{\hat{S}}\cdot {{\mathbf{N}}^{T}}+(1-C)\cdot {{\mathbf{I}}_{2}}
\quad \Rightarrow \ \mathbf{\hat{S}}=\left[ \begin{matrix}
   1-C & 0  \\
   0 & 1-C^2  \\
\end{matrix} \right]\]
This suggests that, for unit insertion of the case (C3),
the new SimRank matrix becomes
\[
\renewcommand\arraystretch{1.2}
\mathbf{\tilde{S}}=\left[ \begin{array}{c|c}
   \mathbf{S} & \mathbf{0}  \\ \hline
   \mathbf{0} & \mathbf{\hat{S}}  \\
\end{array} \right] \in \mathbb{R}^{(n+2) \times (n+2)} \ \textrm{ with } \
\mathbf{\hat{S}}=\left[ \begin{matrix}
   1-C & 0  \\
   0 & 1-C^2  \\
\end{matrix} \right]. 
\]

Algorithm~\ref{alg:05} presents our incremental method to obtain the new SimRank matrix $\mathbf{\tilde{S}}$ for edge insertion of the case (C3),
which requires just $O(1)$ time.

\begin{algorithm}[t]
\small
\DontPrintSemicolon
\SetKwInOut{Input}{Input}
\SetKwInOut{Output}{Output}
\Input{a directed graph $G=(V,E)$, \\
       a new edge $(i,j)_{i \notin V, \ j \notin V}$ inserted to $G$, \\
       the old similarities $\mathbf{S}$ in $G$, \\
       the damping factor $C$.}
\Output{the new similarities ${\mathbf{\tilde{S}}}$ in $G \cup\{(i,j)\}$.}
\nl \label{ln:a05-01} compute $\mathbf{\hat{S}}:=\left[ \begin{matrix}
   1-C & 0  \\
   0 & 1-C^2  \\
\end{matrix} \right]$ ; \;
\nl \label{ln:a05-02} \Return $\renewcommand\arraystretch{1.2}
\mathbf{\tilde{S}}:=\left[ \begin{array}{c|c}
   \mathbf{S} & \mathbf{0}  \\  \hline
   \mathbf{0} & \mathbf{\hat{S}}  \\
\end{array}  \right]$ ; \;
\caption{\IncUSRfour~($G, (i,j), \mathbf{S}, C$)}  \label{alg:05}
\end{algorithm}
\section{Batch Updates} \label{sec:08}
\begin{figure*}[!t] \centering
  \includegraphics[width=0.8\linewidth]{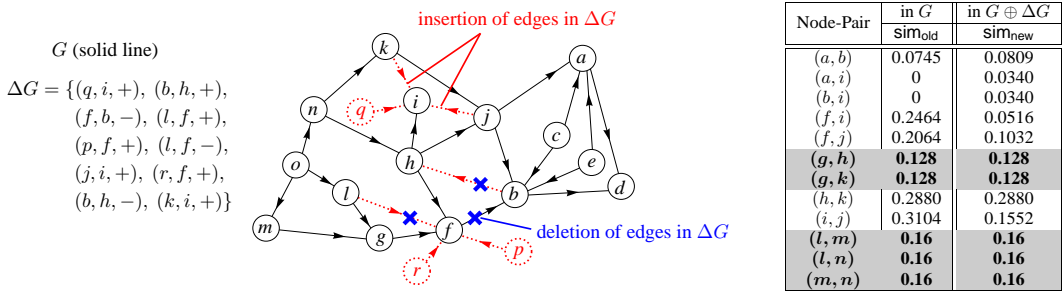}
  \caption{Batch updates for incremental SimRank when a sequence of edges $\Delta G$ are updated to $G=(V,E)$} \label{fig:07} 
\end{figure*}
\begin{table*} \renewcommand\arraystretch{1}
  \centering \small
  \begin{tabular}{c|c|l|p{10cm}}
    \hline
&    when  & new transition matrix $\mathbf{\tilde{Q}}$ & new SimRank matrix $\mathbf{\tilde{S}}$ \\ \hline
\parbox[t]{2mm}{\multirow{4}{*}{\rotatebox[origin=c]{90}{without new node insertions}}}
&    \specialcell[c]{(C0) \\[1pt]
                    insert \\
                    $i_1 \in V$ \\
                    $ \cdots$ \\
                    $i_{\delta} \in V$ \\
                    $j \in V$}
&
    \specialcell[c]{
            $\mathbf{\tilde{Q}} = \mathbf{Q}+\mathbf{u} \cdot \mathbf{v}^T$  with  \\[5pt]
            $\quad \mathbf{u}:=\left\{ \begin{matrix}
               {{\mathbf{e}}_{j}} & \left( {{d}_{j}}=0 \right)  \\
               \tfrac{\delta}{{{d}_{j}}+\delta}{{\mathbf{e}}_{j}} & \left( {{d}_{j}}>0 \right)  \\
            \end{matrix} \right. ,$ \\[15pt]
            $\quad \mathbf{v}:=\left\{ \begin{matrix}
               \tfrac{1}{\delta}{{\mathbf{e}}_{I}} & \left( {{d}_{j}}=0 \right)  \\
               \tfrac{1}{\delta}{{\mathbf{e}}_{I}} - {{[\mathbf{Q}]}_{j,\star}^T} & \left( {{d}_{j}}>0 \right)  \\
            \end{matrix} \right.$
    }
&
    \specialcell[c]{
            $\mathbf{\Delta S}=\mathbf{M}+{{\mathbf{M}}^{T}}$ with \\[5pt]
            $\quad \mathbf{M}:=\sum\nolimits_{k=0}^{\infty }{{{C}^{k+1}} {{{\mathbf{\tilde{Q}}}}^{k}} {{\mathbf{e}}_{j}} {{\bm{\gamma }}^{T}} {{({{{\mathbf{\tilde{Q}}}}^{T}})}^{k}}},$ \\[5pt]
            $\quad \bm{\gamma }:= \left\{ \begin{array}{lc}
               \tfrac{1}{\delta} \mathbf{Q}\cdot {{[\mathbf{S}]}_{\star,I}}+\frac{1}{2 \delta^2}{{[\mathbf{S}]}_{I,I}}\cdot {{\mathbf{e}}_{j}}  & ({{d}_{j}}=0)  \\
               \tfrac{\delta}{({{d}_{j}}+\delta)} \left( \tfrac{1}{\delta} \mathbf{Q}\cdot {{[\mathbf{S}]}_{\star,I}} - \frac{1}{C}\cdot {{[\mathbf{S}]}_{\star,j}}  +( \frac{\lambda \delta }{2\left( {{d}_{j}} + \delta \right)} + \frac{1}{C}-1 )\cdot {{\mathbf{e}}_{j}} \right) & ({{d}_{j}}>0)  \\
            \end{array} \right.$  \\[15pt]
            $\quad \lambda:=\tfrac{1}{\delta^2}{{[\mathbf{S}]}_{I,I}}+\tfrac{1}{C} \cdot {[\mathbf{S}]}_{j,j}-\tfrac{2}{\delta}\cdot {{[\mathbf{Q}]}_{j,\star}}\cdot {{[\mathbf{S}]}_{\star,I}} - \tfrac{1}{C} +1$ \\[5pt]
    }
 \\
    \cline{2-4}
&    \specialcell[c]{(C0) \\[1pt]
                    delete \\
                    $i_1 \in V$ \\
                    $ \cdots$ \\
                    $i_{\delta} \in V$ \\
                    $j \in V$}
&
    \specialcell[c]{
            $\mathbf{\tilde{Q}} = \mathbf{Q}+\mathbf{u} \cdot \mathbf{v}^T$  with  \\[5pt]
            $\quad \mathbf{u}:=\left\{ \begin{matrix}
               {{\mathbf{e}}_{j}} & \left( {{d}_{j}}=1 \right)  \\
               \tfrac{\delta}{{{d}_{j}}-\delta}{{\mathbf{e}}_{j}} & \left( {{d}_{j}}>1 \right)  \\
            \end{matrix} \right. ,$ \\[15pt]
            $\quad \mathbf{v}:=\left\{ \begin{matrix}
               {-\tfrac{1}{\delta} {\mathbf{e}}_{I}} & \left( {{d}_{j}}=1 \right)  \\
               {{[\mathbf{Q}]}_{j,\star}^T}- \tfrac{1}{\delta} {{\mathbf{e}}_{I}} & \left( {{d}_{j}}>1 \right)  \\
            \end{matrix} \right.$
    }
&
    \specialcell[c]{
            $\mathbf{\Delta S}=\mathbf{M}+{{\mathbf{M}}^{T}}$ with \\[5pt]
            $\quad \mathbf{M}:=\sum\nolimits_{k=0}^{\infty }{{{C}^{k+1}} {{{\mathbf{\tilde{Q}}}}^{k}} {{\mathbf{e}}_{j}} {{\bm{\gamma }}^{T}} {{({{{\mathbf{\tilde{Q}}}}^{T}})}^{k}}},$ \\[5pt]
            $\quad \bm{\gamma }:= \left\{ \begin{array}{lc}
                - \tfrac{1}{\delta} \mathbf{Q}\cdot {{[\mathbf{S}]}_{\star,I}}+\frac{1}{2 \delta^2}{{[\mathbf{S}]}_{I,I}}\cdot {{\mathbf{e}}_{j}}  & ({{d}_{j}}=1)  \\
               \tfrac{\delta}{({{d}_{j}}-\delta)} \left(\frac{1}{C}\cdot {{[\mathbf{S}]}_{\star,j}} - \tfrac{1}{\delta}\mathbf{Q}\cdot {{[\mathbf{S}]}_{\star,I}} +( \frac{\lambda \delta }{2\left( {{d}_{j}}-\delta \right)}- \frac{1}{C}+1 )\cdot {{\mathbf{e}}_{j}} \right) & ({{d}_{j}}>1)  \\
            \end{array} \right.$  \\[15pt]
            $\quad \lambda:=\tfrac{1}{\delta^2} {{[\mathbf{S}]}_{I,I}}+\tfrac{1}{C} \cdot {[\mathbf{S}]}_{j,j}-\tfrac{2}{\delta}\cdot {{[\mathbf{Q}]}_{j,\star}}\cdot {{[\mathbf{S}]}_{\star,I}} - \tfrac{1}{C} +1$ \\[5pt]
    }
 \\
    \hline
\parbox[t]{2mm}{\multirow{13}{*}{\rotatebox[origin=c]{90}{with new node insertions}}}
&    \specialcell[c]{(C1) \\[1pt]
                     insert \\
                    $i_1 \in V$ \\
                    $ \cdots$ \\
                    $i_{\delta} \in V$ \\
                    $j \notin V$}
&
    \specialcell[c]{
            $\renewcommand\arraystretch{1.2}
                    \mathbf{\tilde{Q}}=\left[ \begin{array}{c|c}
                       \mathbf{Q} & \mathbf{0}  \\  \hline
                       \tfrac{1}{\delta}{{\mathbf{e}}_{I}^{T}} & 0  \\
                    \end{array} \right] \begin{array}{l}
                       \  \} \ n \textrm{ rows}  \\
                       \rightarrow \textrm{row }j  \\
                    \end{array} $
    }
&
    \specialcell[c]{
            $\renewcommand\arraystretch{1.2}
                \mathbf{\tilde{S}}=\left[ \begin{array}{c|c}
                   \mathbf{S} & \mathbf{y}  \\ \hline
                   {{\mathbf{y}}^{T}} & \tfrac{C}{\delta^2}{{[\mathbf{S}]}_{I,I}}+(1-C)  \\
                \end{array} \right] \begin{array}{l}
                   \  \} \ n \textrm{ rows}  \\
                   \rightarrow \textrm{row }j  \\
                \end{array}$ \quad with \\[15pt]
            $\quad \mathbf{y}:=\tfrac{C}{\delta}\mathbf{Q}{{[\mathbf{S}]}_{\star,I}}$
    }
 \\
    \cline{2-4}
&    \specialcell[c]{(C2) \\[1pt]
                    insert \\
                    $i_1 \notin V$ \\
                    $ \cdots$ \\
                    $i_{\delta} \notin V$ \\
                    $j \in V$}
&
    \specialcell[c]{
            $\renewcommand\arraystretch{1.2}
                                \mathbf{\tilde{Q}}=\left[ \begin{array}{c|c}
               {\mathbf{\hat{Q}}} & \tfrac{1}{{{d}_{j}}+\delta}{{\mathbf{e}}_{j} \mathbf{1}_{\delta}^T}  \\ \hline
               \mathbf{0} & 0  \\
            \end{array} \right] \begin{array}{l}
               \  \} \ n \textrm{ rows}  \\
               \  \} \ \delta \textrm{ rows}  \\
            \end{array}$ \\[15pt]
            $\quad \textrm{with } \mathbf{\hat{Q}}:=\mathbf{Q}-\tfrac{\delta}{{{d}_{j}}+\delta}{{\mathbf{e}}_{j}} {{[\mathbf{Q}]}_{j,\star}}$
    }
&
    \specialcell[c]{
            $\renewcommand\arraystretch{1.2}
                \mathbf{\tilde{S}}=\left[ \begin{array}{c|c}
                   \mathbf{S}+\tfrac{C \delta}{{{d}_{j}}+\delta}\left( \mathbf{M}+{{\mathbf{M}}^{T}} \right) & \mathbf{0}  \\  \hline
                   \mathbf{0} & (1-C) \mathbf{I}_{\delta}  \\
                \end{array}  \right] \begin{array}{l}
               \  \} \ n \textrm{ rows}  \\
               \  \} \ \delta \textrm{ rows}  \\
            \end{array}$ \quad with \\[15pt]
            $\quad \mathbf{M}:=\sum\nolimits_{k=0}^{\infty }{{{C}^{k}}{{{\mathbf{\hat{Q}}}}^{k}} {{\mathbf{e}}_{j}}{{\mathbf{z}}^{T}} {{\left( {{{\mathbf{\hat{Q}}}}^{T}} \right)}^{k}}},$ \\[5pt]
            $\quad \mathbf{z}:=\left( \tfrac{1}{2C\left( {{d}_{j}}+ \delta \right)}\left( \delta {{[\mathbf{S}]}_{j,j}}-{{(\delta-C)(1-C)}} \right)+\tfrac{1-C}{C} \right){{\mathbf{e}}_{j}}-\tfrac{1}{C}{{[\mathbf{S}]}_{\star,j}}$ \\[10pt]
    }
 \\
    \cline{2-4}
&    \specialcell[c]{(C3) \\[1pt]
                    insert \\
                    $i_1 \notin V$ \\
                    $ \cdots$ \\
                    $i_{\delta} \notin V$ \\
                    $j \notin V$}
&
    \specialcell[c]{
            $\mathbf{\tilde{Q}}=\left[ \begin{array}{c|c}
               \mathbf{Q} & \mathbf{0}  \\ \hline
               \mathbf{0} & \mathbf{N}  \\
            \end{array} \right] \begin{array}{l}
                 \} \ n \textrm{ rows}  \\
                 \} \ \delta+1 \textrm{ rows}  \\
            \end{array}$   \\[15pt]
            $\quad \textrm{with } \mathbf{N}:=\left[ \begin{array}{c|c}
               \mathbf{0} & \mathbf{0}  \\ \hline
               \tfrac{1}{\delta}\mathbf{1}_{\delta}^T & 0  \\
            \end{array} \right] \begin{array}{l}
                 \} \ \delta \textrm{ rows}  \\
                 \rightarrow \textrm{row }j  \\
            \end{array}$
    }
&
    \specialcell[c]{
            $\renewcommand\arraystretch{1.2}
            \mathbf{\tilde{S}}=\left[ \begin{array}{c|c}
               \mathbf{S} & \mathbf{0}  \\ \hline
               \mathbf{0} & \mathbf{\hat{S}}  \\
            \end{array} \right]\begin{array}{l}
                 \} \ n \textrm{ rows}  \\
                 \} \ \delta+1 \textrm{ rows}  \\
            \end{array}$  \\[15pt]
            $\quad \textrm{with } \mathbf{\hat{S}}:=\left[ \begin{array}{c|c}
               (1-C)\mathbf{I}_{\delta} & \mathbf{0}  \\ \hline
               \mathbf{0} & (1-C)(1+\tfrac{C}{\delta})  \\
            \end{array} \right]. \begin{array}{l}
                 \} \ \delta \textrm{ rows}  \\
                 \rightarrow \textrm{row }j   \\
            \end{array}$
    } \\  \hline
  \end{tabular}
  \caption{{Batch updates for a sequence of edges $\{(i_1,j), \cdots, (i_{\delta},j)\}$ to the old graph $G=(V,E)$, \\
 where $[\mathbf{S}]_{\star,I} := \sum_{i \in I} [\mathbf{S}]_{\star,i}, \quad [\mathbf{S}]_{I,I} := \sum_{i \in I} [\mathbf{S}]_{i,I}, \quad \mathbf{1}_{\delta} := (1,1,\cdots, 1)^T \in \mathbb{R}^{\delta \times 1}$}}  \label{tab:02}
\end{table*}
In this section, we consider the batch updates problem for incremental SimRank, \ie
given an old graph $G=(V,E)$ and a sequence of edges $\Delta G$ to be updated to $G$, the retrieval of new SimRank scores in $G\oplus \Delta G$.
Here, the set $\Delta G$ can be mixed with insertions and deletions: 
\[
\scalebox{0.95}{$
\Delta G := \{(i_1, j_1, {\op}_1), (i_2, j_2, {\op_2}), \cdots, (i_{|\Delta G|}, j_{|\Delta G|}, {\op_{|\Delta G|}}) \}
$}
\]
where $(i_q, j_q)$ is the $q$-th edge in $\Delta G$ to be inserted into (if $\op_q =$``$+$'') or deleted from (if $\op_q =$``$-$'') $G$.

The straightforward approach to this problem is to update each edge of $\Delta G$ one by one, by running a unit update algorithm for $|\Delta G|$ times.
However, this would produce many unnecessary intermediate results and redundant updates that may cancel out each other.
\begin{example} \label{eg:09}
  Consider the old citation graph $G$ in Fig.~\ref{fig:07}, and a sequence of edge updates $\Delta G$ to $G$:
\[\scalebox{0.95}{$
\begin{split}
\Delta G =\{& (q,i,+), \ \bm{(b,h,+)}, \ (f,b,-), \ \bm{(l,f,+)}, \  (p,f,+), \\
             & \bm{(l,f,-)}, \ (j,i,+), \ (r,f,+), \ \bm{(b,h,-)}, \ (k,i,+)\}
\end{split}$}
\]
We notice that, in $\Delta G$, the edge insertion $(b,h,+)$ can cancel out the edge deletion $(b,h,-)$.
Similarly, $(l,f,+)$ can cancel out $(l,f,-)$.
Thus, after edge cancellation, the \emph{net} update of $\Delta G$, denoted as $\Delta G_{\textrm{net}}$, is
\[
\begin{split}
\Delta G_{\textrm{net}} =\{& (q,i,+),  \ (f,b,-), \ (p,f,+), \\
             &  (j,i,+), \ (r,f,+),  \ (k,i,+)\}  \qquad \qed
\end{split}
\]
\end{example}

Example~\ref{eg:09} suggests that a portion of redundancy in $\Delta G$ arises from the insertion and deletion of the same edge that may cancel out each other.
After cancellation, it is easy to verify that
\[
|\Delta G_{\textrm{net}}| \le |\Delta G| \ \textrm{ yet } \  G \oplus \Delta G_{\textrm{net}} = G \oplus \Delta G.
\]

To obtain $\Delta G_{\textrm{net}}$ from $\Delta G$,
we can readily use hashing techniques to count occurrences of updates in $\Delta G$.
More specifically, we use each edge of $\Delta G$ as a hash key,
and initialize each key with zero count.
Then, we scan each edge of $\Delta G$ once,
and increment (\Resp decrement) its count by one each time an edge insertion (\Resp deletion) appears in $\Delta G$.
After all edges in $\Delta G$ are scanned,
the edges whose counts are nonzeros make a net update $\Delta G_{\textrm{net}}$.
All edges in $\Delta G_{\textrm{net}}$ with $+1$ (\Resp $-1$) counts make a net insertion update $\Delta G_{\textrm{net}}^{+}$ (\Resp a net deletion update $\Delta G_{\textrm{net}}^{-}$).
Clearly, 
$
\Delta G_{\textrm{net}} = \Delta G_{\textrm{net}}^{+} \cup \Delta G_{\textrm{net}}^{-}.
$

Having reduced $\Delta G$ to the net edge updates $\Delta G_{\textrm{net}}$,
we next merge the updates of ``similar sink edges'' in $\Delta G_{\textrm{net}}$ to speedup the batch updates further.

We first introduce the notion of ``similar sink edges''.
\begin{definition}
 Two distinct edges $(a,c)$ and $(b,c)$ are called ``similar sink edges'' \wrt node $c$ if they have a common end node $c$ that both $a$ and $b$ point to. \qed
\end{definition}

``Similar sink edges'' is introduced to partition $\Delta G_{\textrm{net}}$.
To be specific,
we first sort all the edges $\{(i_p,j_p)\}$ of $\Delta G_{\textrm{net}}^{+}$ (\Resp $\Delta G_{\textrm{net}}^{-}$) according to its end node $j_p$.
Then, the ``similar sink edges'' \wrt node $j_p$ form a partition of $\Delta G_{\textrm{net}}^{+}$ (\Resp $\Delta G_{\textrm{net}}^{-}$).
For each block $\{(i_{p_k},j_p)\}$ in $\Delta G_{\textrm{net}}^{+}$, we next split it further into two sub-blocks according to whether its end node $i_{p_k}$ is in the old $V$.
Thus, after partitioning,
each block in $\Delta G_{\textrm{net}}^{+}$ (\Resp $\Delta G_{\textrm{net}}^{-}$), denoted as
$
\{(i_1,j), \ (i_2,j), \ \cdots, \ (i_{\delta},j)\},
$
falls into one of the following cases:
\begin{center}
   (C0) $i_1 \in V, \ i_2 \in V, \ \cdots, i_{\delta} \in V$ and $j \in V$; \\
   (C1) $i_1 \in V, \ i_2 \in V, \ \cdots, i_{\delta} \in V$ and $j \notin V$; \\
   (C2) $i_1 \notin V, \ i_2 \notin V, \ \cdots, i_{\delta} \notin V$ and $j \in V$;  \\
   (C3) $i_1 \notin V, \ i_2 \notin V, \ \cdots, i_{\delta} \notin V$ and $j \notin V$.  \\
\end{center}
\begin{example} \label{eg:10}
Let us recall $\Delta G_{\textrm{net}}$ derived by Example~\ref{eg:09},
in which $\Delta G_{\textrm{net}} = \Delta G_{\textrm{net}}^{+} \cup \Delta G_{\textrm{net}}^{-}$ with
\[ \small
\begin{split}
  \Delta G_{\textrm{net}}^{+} = \{& (q,i,+), \ (p,f,+), \ (j,i,+), \ (r,f,+),  \ (k,i,+)\} \\
  \Delta G_{\textrm{net}}^{-} = \{& (f,b,-)\}.
\end{split}
\]
We first partition $\Delta G_{\textrm{net}}^{+}$ by ``similar sink edges'' into
\[ \scalebox{0.9}{$
\Delta G_{\textrm{net}}^{+} = \{ (q,i,+), \ (j,i,+), \ (k,i,+) \}  \cup  \{ (p,f,+),  \ (r,f,+) \}
$}
\]

In the first block of $\Delta G_{\textrm{net}}^{+}$, since the nodes $q \notin V$, $j \in V$, and $k \in V$,
we will partition this block further into $\{ (q,i,+)\} \cup \{ (j,i,+),  (k,i,+) \}$.
Eventually,
\[ \scalebox{0.85}{$
\Delta G_{\textrm{net}}^{+} = \{ (q,i,+)\}  \cup \{ (j,i,+),  (k,i,+) \} \cup  \{ (p,f,+),   (r,f,+) \}
$} \quad \qed
\]
\end{example}

The main advantage of our partitioning approach is that, after partition, all the edge updates in each block can be processed simultaneously,
instead of one by one.
To elaborate on this, we use case (C0) as an example, \ie
the insertion of $\delta$ edges $\{(i_1,j), \ (i_2,j), \ \cdots, \ (i_{\delta},j)\}$ into $G=(V,E)$ when $i_1 \in V, \cdots, i_{\delta} \in V$, and $j \in V$.
Analogous to Theorem~\ref{thm:01}, one can readily prove that,
after such $\delta$ edges are inserted,
the changes $\mathbf{\Delta Q}$ to the old transition matrix is still a \emph{rank-one} matrix that can be decomposed as
$\mathbf{\tilde{Q}} = \mathbf{Q}+\mathbf{u} \cdot \mathbf{v}^T \  \textrm{ with } $
\[
\begin{split}
& \mathbf{u}:=\left\{ \begin{matrix}
               {{\mathbf{e}}_{j}} & \left( {{d}_{j}}=0 \right)  \\
               \tfrac{\delta}{{{d}_{j}}+\delta}{{\mathbf{e}}_{j}} & \left( {{d}_{j}}>0 \right)  \\
            \end{matrix} \right. ,
            \quad \mathbf{v}:=\left\{ \begin{matrix}
               \tfrac{1}{\delta}{{\mathbf{e}}_{I}} & \left( {{d}_{j}}=0 \right)  \\
               \tfrac{1}{\delta}{{\mathbf{e}}_{I}} - {{[\mathbf{Q}]}_{j,\star}^T} & \left( {{d}_{j}}>0 \right)  \\
            \end{matrix} \right.
\end{split}
\]
where ${\mathbf{e}}_{I}$ is an $n \times 1$ vector with its entry $[{\mathbf{e}}_{I}]_x=1$ if $x \in I\triangleq \{i_1,i_2, \cdots, i_{\delta}\}$, and $[{\mathbf{e}}_{I}]_x=0$ if $x \notin V$.
Since the rank-one structure of $\mathbf{\Delta Q}$ is preserved for updating $\delta$ edges,
Theorem~\ref{thm:02} still holds under the new settings of $\mathbf{u}$ and $\mathbf{v}$ for batch updates.
Therefore, the changes $\mathbf{\Delta S}$ to the SimRank matrix in response to $\delta$ edges insertion can be represented as a similar formulation to Theorem~\ref{thm:03},
as illustrated in the first row of Table~\ref{tab:02}.
Similarly,
we can also extend Theorems~\ref{thm:08}--\ref{thm:07} in Section~\ref{sec:06} to
support batch updates of $\delta$ edges for other cases (C1)--(C3) that accompany new node insertions.
Table~\ref{tab:02} summarizes the new $\mathbf{Q}$ and $\mathbf{S}$ in response to such batch edge updates of all the cases.
When $\delta=1$, these batch update results in Table~\ref{tab:02} can be reduced to the unit update results of Theorems~\ref{thm:01}--\ref{thm:07}.

\begin{algorithm}[t]
\small
\DontPrintSemicolon
\SetKwInOut{Input}{Input}
\SetKwInOut{Output}{Output}
\Input{a directed graph $G=(V,E)$, \\
       a sequence of edge updates $\Delta G=\{(i,j,\op) \}$, \\
       the old similarities $\mathbf{S}$ in $G$, \\
       the damping factor $C$.}
\Output{the new similarities ${\mathbf{\tilde{S}}}$ in $G \oplus \Delta G$.}
\nl \label{ln:a06-01}  obtain the net update $\Delta G_{\textrm{net}}$ from $\Delta G$ via hashing ; \;
\nl \label{ln:a06-02}  split $\Delta G_{\textrm{net}} = \Delta G_{\textrm{net}}^{+} \cup  \Delta G_{\textrm{net}}^{-}$ according to {\op} ; \;
\nl \label{ln:a06-03}  partition $\Delta G_{\textrm{net}}^{+}$ and $\Delta G_{\textrm{net}}^{-}$ by ``similar sink edges'' ; \;
\nl \label{ln:a06-04}  \For {each block of $\Delta G_{\textrm{net}}^{+}$} {
\nl \label{ln:a06-05}  split all edges $\{(i,j)\}$ of each block further into (at most) two sub-blocks based on whether $i \in V$ \ \; }
\nl \label{ln:a06-06}  \For {each block of $\Delta G_{\textrm{net}}^{-}$} {
\nl \label{ln:a06-07}  delete all edges of each block and update ${\mathbf{\tilde{S}}}$ via Table~\ref{tab:02} ;}
\nl \label{ln:a06-08}  remove all singleton nodes in the graph; \;
\nl \label{ln:a06-09}  \For {each sub-block of $\Delta G_{\textrm{net}}^{+}$} {
\nl \label{ln:a06-10}  insert all edges of each sub-block and update ${\mathbf{\tilde{S}}}$  via Table~\ref{tab:02} ;}
\nl \label{ln:a06-11}  \Return ${\mathbf{\tilde{S}}}$ ; \;
\caption{\IncBSR~($G, (i,j), \mathbf{S}, C$)}  \label{alg:06}
\end{algorithm}

Algorithm~\ref{alg:06} presents an efficient batch updates algorithm, \IncBSR, for dynamical SimRank computation.
The actual computational time of {\IncBSR} depends on the input parameter $\Delta G$ since different update types in Table~\ref{tab:02} would result in different computational time.
However, we can readily show that {\IncBSR} is superior to the $|\Delta G|$ executions of the unit update algorithm,
because {\IncBSR} can process the ``similar sink updates'' of each block simultaneously and can cancel out redundant updates.
To clarify this, let us assume that $|\Delta G_{\textrm{net}}|$ can be partitioned into $|B|$ blocks,
with $\delta_t$ denoting the number of edge updates in $t$-th block.
In the worst case, we assume that all edge updates happen to be the most time-consuming case (C0) or (C2). 
Then, the total time for handling $|\Delta G|$ updates is bounded by
\[ \small
\begin{split}
     & O\bigg(\sum\nolimits_{t=1}^{|B|} \big(n\delta_t + \delta_t^2 +  K(nd + \delta_t +|\AFF|) \big)\bigg) \\
 \le & O\bigg( n |\Delta G_{\textrm{net}}| + |\Delta G_{\textrm{net}}| \sum\nolimits_{t=1}^{|B|}\delta_t + K \sum\nolimits_{t=1}^{|B|} (nd+\delta_t+|\AFF|) \bigg) \\
 \le & O\big(  (n +|\Delta G_{\textrm{net}}|) |\Delta G_{\textrm{net}}| +  K(|B|nd+|\Delta G_{\textrm{net}}|+|B||\AFF|) \big) \\
\end{split}
\]

Note that $|B| \le |\Delta G_{\textrm{net}}|$, in general $|B| \ll |\Delta G_{\textrm{net}}|$.
Thus, {\IncBSR} is typically much faster than the $|\Delta G|$ executions of the unit update algorithm that is bounded by $O\big( |\Delta G| K(nd+\Delta G+|\AFF|) \big)$. 
\begin{example}
Recall from Example~\ref{eg:09} that a sequence of edge updates $\Delta G$ to the graph $G=(V,E)$ in Fig.~\ref{fig:07}.
We want to compute new SimRank scores in $G \oplus \Delta G$.

First, we can use hashing method to obtain the net update $\Delta G_{\textrm{net}}$ from $\Delta G$, as shown in Example~\ref{eg:09}.

Next, by Example~\ref{eg:10}, we can partition $\Delta G_{\textrm{net}}$ into
\[ \small
\begin{split}
    \Delta G_{\textrm{net}}^{+} = \{& (q,i,+)\}  \cup \{ (j,i,+),  (k,i,+) \} \cup  \{ (p,f,+),   (r,f,+) \} \\
    \Delta G_{\textrm{net}}^{-} = \{& (f,b,-)\}
\end{split}
\]

Then, for each block, we can apply the formulae in Table~\ref{tab:02} to update all edges simultaneously in a batch fashion.
The results are partially depicted as follows:
 \[\scalebox{.9}{$
\begin{tabular}{c|c|c|c|c|c}
  \hline
Node    & $\textsf{sim}_\textsf{old}$   & \multirow{2}{*}{$(f,b,-)$}         & \multirow{2}{*}{$(q,i,+)$}      & $(j,i,+)$         &       $(p,f,+)$ \\
Pairs   & in $G$                              &                   &                & $(k,i,+)$         &       $(r,f,+)$ \\ \hline
$(a,b)$ & 0.0745   & 0.0809         & 0.0809      & 0.0809         &       0.0809 \\
$(a,i)$ & 0        & 0              & 0           & 0.0340         &       0.0340 \\
$(b,i)$ & 0        & 0              & 0           & 0.0340         &       0.0340 \\
$(f,i)$ & 0.2464   & 0.2464         & 0.1232      & 0.1032         &       0.0516 \\
$(f,j)$ & 0.2064   & 0.2064         & 0.2064      & 0.2064         &       0.1032 \\
$(g,h)$ & 0.128    & 0.128          & 0.128       & 0.128          &       0.128  \\
$(g,k)$ & 0.128    & 0.128          & 0.128       & 0.128          &       0.128  \\
$(h,k)$ & 0.288    & 0.288          & 0.288       & 0.288          &       0.288  \\
$(i,j)$ & 0.3104   & 0.3104         & 0.1552      & 0.1552         &       0.1552 \\
$(l,m)$ & 0.16     & 0.16           & 0.16        & 0.16           &       0.16   \\
$(l,n)$ & 0.16     & 0.16           & 0.16        & 0.16           &       0.16   \\
$(m,n)$ & 0.16     & 0.16           & 0.16        & 0.16           &       0.16   \\
  \hline
\end{tabular}
$}\]
The column `$(q,i,+)$' represents the updated SimRank scores after the edge $(q,i)$ is added to $G \oplus \{(f,b,-)\} $.
The last column is the new SimRanks in $G \oplus \Delta G$. \qed
\end{example}
{
\section{Memory Efficiency} \label{sec:07} 
\begin{table*}[t]
\centering
\scalebox{.85}{$ \renewcommand\arraystretch{1.1}
\begin{tabular}{l|l|l}
\hline
\textbf{Line} & \textbf{Description}                                                                                                                                                                       & \textbf{Required Elements from old $\mathbf{S}$}       \\ \hline
3                & $\mathbf{w} \leftarrow \mathbf{Q}\cdot \mathcolor{red}{{{[\mathbf{S}]}_{\star,i}}}$                                                                                                                                 & $i$-th column of $\mathbf{S}$                    \\
4                & $\lambda \leftarrow \mathcolor{red}{{[\mathbf{S}]}_{i,i}}+\tfrac{1}{C} \cdot \mathcolor{red}{{[\mathbf{S}]}_{j,j}}-2\cdot {[\mathbf{w}]}_{j} - \tfrac{1}{C} +1$                                                            & $(i,i)$- and $(j,j)$-th elements of $\mathbf{S}$ \\
6                & ${\bm \gamma} \leftarrow \mathbf{w} +\frac{1}{2}\mathcolor{red}{{[\mathbf{S}]}_{i,i}}\cdot {{\mathbf{e}}_{j}}$                                                                                                    & $(i,i)$-th element of $\mathbf{S}$               \\
9                & ${\bm\gamma} \leftarrow \tfrac{1}{({{d}_{j}}+1)} \big( \mathbf{w}-\frac{1}{C}\mathcolor{red}{{[\mathbf{S}]}_{\star,j}}+( \frac{\lambda }{2\left( {{d}_{j}}+1 \right)}+ \frac{1}{C}-1 ) {{\mathbf{e}}_{j}} \big)$ & $j$-th column of $\mathbf{S}$ \\
15               & $\mathcolor{red}{\tilde{\mathbf{S}}} \leftarrow \mathcolor{red}{\mathbf{S}} + \mathbf{M}_{K} + \mathbf{M}_{K}^T$ & all elements of old $\mathbf{S}$ and new $\tilde{\mathbf{S}}$ \\ \hline
\end{tabular}$}
\caption{{Lines of \IncUSRone~(in Appendix~\ref{app:04a}) that require to get elements from old $\mathbf{S}$ (highlighted in red color)}} \label{tab:04}
\end{table*}

In previous sections, our main focus was devoted to speeding up the computational time of incremental SimRank.
However, for updating all pairs of SimRank scores,
the memory requirement for Algorithms~\ref{alg:03}--\ref{alg:06} remains at $O(n^2)$ since they need to store all $(n^2)$ pairs of old SimRank $\mathbf{S}$ into memory,
which hinders its scalability on large graphs.
We call Algorithms~\ref{alg:03}--\ref{alg:06} \emph{in-memory algorithms}.
\begin{table}[t]
\centering
\scalebox{.85}{$ \renewcommand\arraystretch{1.1}
\begin{tabular}{l|l|l}
\hline
\textbf{Line} & \textbf{Description}                                                                                                                                                                       & \textbf{Storage of $\mathbf{M}_k$}       \\ \hline
10                & $\mathcolor{red}{{\mathbf{M}}_{0}} \leftarrow C \cdot \mathbf{e}_j \cdot {\bm\gamma}^T$                                                                                                                                 & all elements of $\mathbf{M}_0$                    \\
14                & $\mathcolor{red}{{\mathbf{M}}_{k+1}} \leftarrow {{\bm{\xi }}_{k+1}}\cdot \bm{\eta }_{k+1}^{T}+ \mathcolor{red}{{\mathbf{M}}_{k}}$                                                            & all elements of $\mathbf{M}_k \quad (\forall k)$ \\
15                & ${\tilde{\mathbf{S}}} \leftarrow {\mathbf{S}} + \mathcolor{red}{\mathbf{M}_{K}} + (\mathcolor{red}{\mathbf{M}_{K}})^T$                                                                                                    & all elements of $\mathbf{M}_K$              \\ \hline
\end{tabular}$}
\caption{{Lines of \IncUSRone~(in Appendix~\ref{app:04a}) that require to store $\mathbf{M}_k$ (highlighted in red color)}} \label{tab:05}
\end{table}

In this section, we propose a novel scalable method based on Algorithms~\ref{alg:03}--\ref{alg:06} for dynamical SimRank search,
which updates all pairs of SimRanks column by column using only $O(Kn+m)$ memory,
with no need to store all $(n^2)$ pairs of old SimRank $\mathbf{S}$ into memory, and with no loss of accuracy.

Let us first analyze the $O(n^2)$ memory requirement for Algorithms~\ref{alg:03}--\ref{alg:06} in Sections~\ref{sec:04}--\ref{sec:06}.
We notice that there are two factors dominating the original $O(n^2)$ memory:
(1) the storage of the entire $n \times n$ old SimRank matrix $\mathbf{S}$,
and (2) the computation of $\mathbf{M}_k$ from one outer product.
For example, in \IncUSRone~(in Appendix~\ref{app:04a}),
Lines 3, 4, 6, 9, 15 need to get elements from old $\mathbf{S}$ (see Table~\ref{tab:04});
Lines 10, 14, 15 require to store $ n \times n$ entries of matrix $\mathbf{M}_k$ (see Table~\ref{tab:05}).
Indeed, the storage of $\mathbf{S}$ and $\mathbf{M}_k$ are the main obstacles to the scalability of our in-memory algorithms on large graphs,
resulting in $O(n^2)$ memory space.
Apart from these lines, the memory required for the remaining steps of {\IncUSRone} is $O(m)$,
dominated by (a) the storage of sparse matrix $\mathbf{Q}$ and (b) sparse matrix-vector products.

To overcome the bottleneck of the $O(n^2)$ memory,
our main idea is to update all pairs of $\mathbf{S}$ in a column-by-column style, with no need to store the entire $\mathbf{S}$ and $\mathbf{M}_k$. 
Specifically,
we update $\mathbf{S}$ by updating each column $[\mathbf{{S}}]_{\star,x} \ (\forall x=1,2,\cdots)$ of $\mathbf{S}$ individually.
Let us rewrite Line 15 of Table~\ref{tab:04} into the column-wise style:
\begin{equation} \label{eq:70}
{[\tilde{\mathbf{S}}]}_{\star, x} = {[\mathbf{S}]}_{\star,x} + {[\mathbf{M}_{K}]}_{\star,x} + {[(\mathbf{M}_{K})^T]}_{\star,x} \qquad (\forall x)
\end{equation}
Applying the following facts
\[{[{\mathbf{\Delta S}}]}_{\star, x} = {[\tilde{\mathbf{S}}]}_{\star, x} - {[\mathbf{S}]}_{\star,x}$ and ${[(\mathbf{M}_{K})^T]}_{\star,x} =  ({[\mathbf{M}_{K}]}_{x,\star})^T\]
into Eq.\eqref{eq:70} produces
\begin{equation} \label{eq:71}
{[{\mathbf{\Delta S}}]}_{\star, x} = {[\mathbf{M}_{K}]}_{\star,x} + ({[\mathbf{M}_{K}]}_{x,\star})^T \qquad (\forall x)
\end{equation}
This implies that, to compute one column of ${\mathbf{\Delta S}}$, we only need prepare one row and one column of $\mathbf{M}_{K}$.
To compute only the $x$-th row and $x$-th column of $\mathbf{M}_{K}$, there are two challenges:
(1) From Line 10 of Table~\ref{tab:04},
we notice that $\mathbf{M}_{K}$ is derived from the auxiliary vector $\bm \gamma$, and $\bm \gamma$ depends on the $i$-th and $j$-th column of old ${\mathbf{S}}$ according to Lines 3, 4, 6, 9 of Table~\ref{tab:04}.
Since the update edge $(i,j)$ can be arbitrary,
it is hard to determine which columns of old ${\mathbf{S}}$ will be used in future.
Thus, all our in-memory algorithms in Section \ref{sec:06} prepare $n \times n$ elements of ${\mathbf{S}}$ into memory, leading to $O(n^2)$ memory.
(2) According to Lines 10, 14, 15 of Table~\ref{tab:05},
it also requires $O(n^2)$ memory to iteratively compute $\mathbf{M}_{K}$.
It is not easy to use just linear memory for iteratively computing only one row and one column of $\mathbf{M}_{K}$.
In the next two subsections, we will address these two challenges, respectively.
\subsection{Avoid storing $n \times n$ elements of old $\mathbf{S}$}
\begin{figure*}[t]
\begin{minipage}{.495\textwidth}
\removelatexerror
\begin{algorithm}[H]
\small
\DontPrintSemicolon
\LinesNumbered
\SetKwInOut{Input}{Input}
\SetKwInOut{Output}{Output}
\Input{an old digraph $G=(V,E)$, \\
       a collection of edges $\Delta G$ inserted into $G$, \\
       $x$-th column $[\mathbf{S}]_{\star, x}$ of old SimRank in $G$, \\
       number of iterations $K$, \ \
       damping factor $C$.}
\Output{$x$-th column $[\tilde{\mathbf{S}}]_{\star, x}$ of new SimRank in $G \cup \Delta G$}
\SetKwBlock{Begin}{\textbf{foreach} {edge $(i,j) \in \Delta G$}}{...~(Continue~on~right~side)}
initialize the transition matrix $\mathbf{Q}$ in $G$ ; \;
\lForEach {$v \in V$} {${{d}_{v}} \leftarrow$ in-degree of node $v$ in $G$ ; }
\SetAlgoLined
\Begin
 {  \SetAlgoVlined
    \lIf {$i \in V$} {
        $[\mathbf{S}]_{\star,i} \leftarrow \PartialSim (\mathbf{Q}, i, K, C) $
        }
    \lIf {$j \in V$} {
        $[\mathbf{S}]_{\star,j} \leftarrow \PartialSim (\mathbf{Q}, j, K, C) $
    }
    \uIf(\tcp*[f]{Case (C0)}) {$i \in V$ and $j \in V$} {
        $\mathbf{w} \leftarrow \mathbf{Q}\cdot {{[\mathbf{S}]}_{\star,i}}$; \;
        $\lambda \leftarrow {{[\mathbf{S}]}_{i,i}}+\tfrac{1}{C} \cdot {[\mathbf{S}]}_{j,j}-2\cdot {[\mathbf{w}]}_{j} - \tfrac{1}{C} +1$ ; \;
        \uIf {${{d}_{j}}=0$} {
            $\mathbf{u} \leftarrow \mathbf{e}_j, \ \mathbf{v} := \mathbf{e}_i, \ {\bm \gamma} :=  \mathbf{w} +\frac{1}{2}{{[\mathbf{S}]}_{i,i}}\cdot {{\mathbf{e}}_{j}}$; \;}
        \Else {
            $\mathbf{u} \leftarrow \tfrac{1}{d_j+1} \mathbf{e}_j, \quad \mathbf{v} := \mathbf{e}_i-{[\mathbf{Q}]}_{j,\star}^T$ ; \;
            ${\bm\gamma} \leftarrow \tfrac{1}{({{d}_{j}}+1)} \big( \mathbf{w}-\frac{1}{C}  {{[\mathbf{S}]}_{\star,j}}+( \frac{\lambda }{2\left( {{d}_{j}}+1 \right)}+ \frac{1-C}{C} )  {{\mathbf{e}}_{j}} \big)$;
        }
        initialize ${{\bm{\xi }}_{0}} \leftarrow C \cdot \mathbf{e}_j,\quad {{\bm{\eta }}_{0}} \leftarrow {\bm\gamma}$; \;
        ${{\mathbf{m}}} \leftarrow C \cdot [{\bm\gamma}]_x \cdot \mathbf{e}_j, \ \ {{\mathbf{n}}} \leftarrow C \cdot [\mathbf{e}_j]_x \cdot {\bm\gamma} $; \;
        \For {$k=0,1,\cdots, K-1$} {
            ${{\bm{\xi }}_{k+1}} \leftarrow C \cdot \mathbf{Q}\cdot {{\bm{\xi }}_{k}} + C \cdot (\mathbf{v}^T\cdot {{\bm{\xi }}_{k}}) \cdot \mathbf{u}$ ; \;
            ${{\bm{\eta }}_{k+1}} \leftarrow  \mathbf{Q}\cdot {{\bm{\eta }}_{k}} + (\mathbf{v}^T\cdot {{\bm{\eta }}_{k}}) \cdot \mathbf{u}$ ;\;
            ${{\mathbf{m}}} \leftarrow [\bm{\eta }_{k+1}]_x \cdot {{\bm{\xi }}_{k+1}} +{{\mathbf{m}}}$ ; \;
            ${{\mathbf{n}}} \leftarrow {[{\bm{\xi }}_{k+1}]}_x \cdot \bm{\eta }_{k+1}+{{\mathbf{n}}}$ ; \;
        }
        $[{\mathbf{S}}]_{\star, x} \leftarrow [\mathbf{S}]_{\star, x} + \mathbf{m} + \mathbf{n}$ ; \;
        $d_j \leftarrow d_j+1, \quad \mathbf{Q} \leftarrow \mathbf{Q}  + \mathbf{u} \cdot \mathbf{v}^T $ ; \;
    }
    \uElseIf(\tcp*[f]{Case (C1)}) {$i \in V$ and $j \notin V$} {
        $\mathbf{y} \leftarrow C \cdot \mathbf{Q}\cdot {{[\mathbf{S}]}_{\star,i}}$ ; \;
        \If {$x=j$} {
            $z \leftarrow C \cdot {{[\mathbf{S}]}_{i,i}}+(1-C)$ ; \;
            $[\mathbf{S}]_{\star,x} \leftarrow  \left[ \renewcommand\arraystretch{1.2} \begin{array}{c}
               \mathbf{y}  \\ \hline
               z  \\
            \end{array} \right]$ ; \;
        }
        \Else {
            $[\mathbf{S}]_{\star,x} \leftarrow \left[ \renewcommand\arraystretch{1.2} \begin{array}{c}
               [\mathbf{S}]_{\star,x}   \\ \hline
               {[{\mathbf{y}}]}_x  \\
            \end{array} \right]$ ; \;
        }
$d_j \leftarrow 0, \quad V \leftarrow V \cup \{j\}, \quad \mathbf{Q} \leftarrow \left[ \begin{array}{c|c}
               \mathbf{Q} & \mathbf{0}  \\ \hline
               {{\mathbf{e}}_{i}^{T}} & 0  \\
            \end{array} \right]$; \;
}}
\caption{\IncSRAllP~($G, \Delta G, [\mathbf{S}]_{\star,x}, K, C$)}  \label{alg:07}
\end{algorithm}
\end{minipage}
\begin{minipage}{.495\textwidth}
\removelatexerror
\setcounter{algocf}{4}
\begin{algorithm}[H]
\small
\SetAlgoVlined
\DontPrintSemicolon
%
\SetKwBlock{Begin}{...~(Continued)}{end}
\Begin {
    \everypar={\nl}
    \setcounter{AlgoLine}{30}
    \uElseIf(\tcp*[f]{Case (C2)}) {$i \notin V$ and $j \in V$} {
    \everypar={\nl}
        \uIf {$x=i$} {
            $[\mathbf{S}]_{\star,x} \leftarrow  \left[ \renewcommand\arraystretch{1.2} \begin{array}{c}
               \mathbf{0}  \\ \hline
               1-C  \\
            \end{array} \right]$ ; \; }
        \Else {
                $\mathbf{z} \leftarrow \big( \tfrac{1}{2C ( {{d}_{j}}+1 )}\big( {{[\mathbf{S}]}_{j,j}}-{{(1-C)}^{2}} \big)+\tfrac{1-C}{C} \big){{\mathbf{e}}_{j}}-\tfrac{1}{C}{{[\mathbf{S}]}_{\star,j}}$ ;\;
                initialize ${{\bm{\xi }}_{0}} \leftarrow \mathbf{e}_j,\quad {{\bm{\eta }}_{0}} \leftarrow {\mathbf{z}}$ ; \;
                ${{\mathbf{m}}} \leftarrow {[\mathbf{z}]}_x \cdot \mathbf{e}_j, \quad {{\mathbf{n}}} \leftarrow {[\mathbf{e}_j]}_x \cdot \mathbf{z}$ ; \;
                \For {$k \leftarrow 0,1,\cdots, K-1$} {
                    ${{\bm{\xi }}_{k+1}} \leftarrow  C \cdot \mathbf{Q} \cdot {{\bm{\xi  }}_{k}} - \tfrac{C}{{{d}_{j}}+1}({{[\mathbf{Q}]}_{j,\star}} \cdot \bm{\xi }_{k}) \cdot  {{\mathbf{e}}_{j}}$;\;
                    ${{\bm{\eta }}_{k+1}} \leftarrow \mathbf{Q}\cdot {{\bm{\eta  }}_{k}} - \tfrac{1}{{{d}_{j}}+1}({{[\mathbf{Q}]}_{j,\star}} \cdot \bm{\eta  }_{k}) \cdot {{\mathbf{e}}_{j}}$ ;\;
                    ${{\mathbf{m}}} \leftarrow [\bm{\eta }_{k+1}]_x \cdot {{\bm{\xi }}_{k+1}} +{{\mathbf{m}}}$ ; \;
                    ${{\mathbf{n}}} \leftarrow {[{\bm{\xi }}_{k+1}]}_x \cdot \bm{\eta }_{k+1}+{{\mathbf{n}}}$ ; \;
                }
            $[\mathbf{S}]_{\star,x} \leftarrow  \left[ \renewcommand\arraystretch{1.2} \begin{array}{c}
               [\mathbf{S}]_{\star,x}  + \frac{C}{d_j+1} \cdot (\mathbf{m} + \mathbf{n})  \\ \hline
               0  \\
            \end{array} \right]$ ; \;
        }
        $d_i \leftarrow 0, \quad d_j \leftarrow d_j + 1, \quad V \leftarrow V \cup \{i\}$ ; \;
        $\mathbf{Q} \leftarrow \left[ \begin{array}{c|c}
           {\mathbf{Q}-\tfrac{1}{{{d}_{j}}+1}{{\mathbf{e}}_{j}} {{[\mathbf{Q}]}_{j,\star}}} & \tfrac{1}{{{d}_{j}}+1}{{\mathbf{e}}_{j}}  \\ \hline
           \mathbf{0} & 0  \\
        \end{array} \right] $; \;
    }
    \everypar={\nl} \ElseIf(\tcp*[f]{Case (C3)}) {$i \notin V$ and $j \notin V$} {
    \everypar={\nl}
        \uIf {$x=i$} {
            $[\mathbf{S}]_{\star,x} \leftarrow  \left[ \begin{array}{c}
               \mathbf{0} \\  \hline
               1-C  \\
               0 \\
            \end{array}  \right] \begin{array}{c}
               \\
               (i) \\
               (j) \\
            \end{array}$ ; \;
        }
        \uElseIf {$x=j$} {
            $[\mathbf{S}]_{\star,x} \leftarrow  \left[ \begin{array}{c}
               \mathbf{0} \\  \hline
               0  \\
               1-C^2 \\
            \end{array}  \right] \begin{array}{c}
               \\
               (i) \\
               (j) \\
            \end{array}$ ; \;
        }
        \Else {
            $[\mathbf{S}]_{\star,x} \leftarrow  \left[ \begin{array}{c}
               [\mathbf{S}]_{\star,x}  \\  \hline
               0  \\
               0 \\
            \end{array}  \right] \begin{array}{c}
               \\
               (i) \\
               (j) \\
            \end{array}$ ; \;
        }
        $\mathbf{{Q}} \leftarrow \left[ \begin{array}{c|c}
               \mathbf{Q} & \mathbf{0}  \\  \hline
               \mathbf{0} & \left[ \begin{matrix}
               0 & 0  \\
               1 & 0  \\
            \end{matrix} \right]  \\
            \end{array}  \right] \begin{array}{c}
               \\
               (i) \\
               (j) \\
            \end{array} $ ; \;
        $d_i \leftarrow 0, \quad d_j \leftarrow 0, \quad V \leftarrow V \cup \{i, j \}$ ; \;
    }
    \everypar={\nl} $G \leftarrow G \cup \{(i,j)\}$ ; \;
}
\everypar={\nl}  \Return $[\tilde{\mathbf{S}}]_{\star,x} \leftarrow [\mathbf{S}]_{\star,x} $ ; \;
\caption{\textbf{(Continued)}~~\IncSRAllP}
\end{algorithm}
\end{minipage}
\end{figure*}
Our above analysis imply that, to compute each column ${[{\mathbf{\Delta S}}]}_{\star, x}$, we only need prepare two columns information ($i$-th and $j$-th) from old $\mathbf{S}$.
Since the update edge $(i,j)$ can be arbitrary, there are no prior knowledge which $i$-th and $j$-th columns in old $\mathbf{S}$ will be used.
As opposed to Algorithms~\ref{alg:03}--\ref{alg:06} that memoize all $(n^2)$ pairs of old $\mathbf{S}$,
we use the following scalable method to compute only the $i$-th and $j$-th columns of old $\mathbf{S}$ on demand in linear memory.
Specifically, based on our previous work \cite{Yu2015} on partial-pairs SimRank retrieval,
we can readily verify that the following iterations will yield $[\mathbf{S}]_{\star,i}$ and $[\mathbf{S}]_{\star,j}$ in just $O(Kn+m)$ memory.
\[  \renewcommand\arraystretch{1}
\scalebox{0.85}{$ \begin{tabular}{|l|l|}
\hline
\textrm{initialize} $\mathbf{x}_{0} \leftarrow \mathbf{e}_i$                                    & \textrm{initialize} $\mathbf{x}_{0} \leftarrow \mathbf{e}_j$  \\
\textbf{for} $t \leftarrow 1, 2, \cdots, K$                                                     & \textbf{for} $t \leftarrow 1, 2, \cdots, K$ \\
\qquad  $\mathbf{x}_{t+1} \leftarrow \mathbf{Q}^T \cdot \mathbf{x}_{t} $                        & \qquad  $\mathbf{x}_{t+1} \leftarrow \mathbf{Q}^T \cdot \mathbf{x}_{t} $ \\
\textrm{initialize} $\mathbf{y} \leftarrow \mathbf{x}_{K+1}$                                    & \textrm{initialize} $\mathbf{y} \leftarrow \mathbf{x}_{K+1}$  \\
\textbf{for} $t \leftarrow 1, 2, \cdots, K$                                                     & \textbf{for} $t \leftarrow 1, 2, \cdots, K$ \\
\qquad  $\mathbf{y} \leftarrow \mathbf{x}_{K+1-t} + C \cdot \mathbf{Q} \cdot \mathbf{y} $       & \qquad  $\mathbf{y} \leftarrow \mathbf{x}_{K+1-t} + C \cdot \mathbf{Q} \cdot \mathbf{y}  $ \\
$[\mathbf{S}]_{\star,i} \leftarrow (1-C) \cdot \mathbf{y} $                                     & $[\mathbf{S}]_{\star,j} \leftarrow (1-C) \cdot \mathbf{y} $\\ \hline
\end{tabular} $}
\]
Next, $[\mathbf{S}]_{i,i}$ is obtained from the $i$-th element of $[\mathbf{S}]_{\star,i}$, and $[\mathbf{S}]_{j,j}$ from the $j$-th element of $[\mathbf{S}]_{\star,j}$.
Having prepared $[\mathbf{S}]_{\star,i}, [\mathbf{S}]_{\star,j}, [\mathbf{S}]_{i,i}$, and $ [\mathbf{S}]_{j,j}$,
we follow Lines 3, 4, 6, 9 of Table~\ref{tab:04} to derive the vector $\bm \gamma$ in linear memory.
In addition, since Line 15 of Table~\ref{tab:04} can be computed column-wisely via Eq.\eqref{eq:71}.
Throughout all lines in Table~\ref{tab:04}, we do not need store $n^2$ pairs of old $\mathbf{S}$ in memory.
However, $O(n^2)$ memory is still required to store $\mathbf{M}_k$.
In the next subsection, we will show how to avoid $O(n^2)$ memory to compute $\mathbf{M}_k$.

\subsection{Compute ${[\mathbf{M}_K]}_{\star, x}$ and ${[\mathbf{M}_K]}_{x, \star}$ in linear memory}
Using $\bm \gamma$, we next devise our method based on Table~\ref{tab:05},
aiming to use linear memory to compute each column ${[\mathbf{M}_K]}_{\star, x}$ and each row ${[\mathbf{M}_K]}_{x, \star}$ for Eq.\eqref{eq:71}.
In Table~\ref{tab:05}, our key observation is that $\mathbf{M}_k$ is the summation of the outer product of two vectors.
Due to this structure, instead of using $O(n^2)$ memory to store $\mathbf{M}_k$,
we can use only $O(n)$ memory to compute ${[\mathbf{M}_K]}_{\star, x}$  and ${[\mathbf{M}_K]}_{x,\star}$.
Specifically, we can compute Lines~10 and 14 of Table~\ref{tab:05} in a column-wise style for ${[\mathbf{M}_K]}_{\star, x}$ as follows:
\[  \renewcommand\arraystretch{1}
\scalebox{1}{$ \begin{tabular}{|l|l|}
\hline
${[{\mathbf{M}}_{0}]}_{\star, x } \leftarrow C \cdot {[\bm\gamma]}_x \cdot  \mathbf{e}_j $                                                      \\
\textbf{for~} $ k \leftarrow 0, \cdots,K-1$                                                                                                     \\
\qquad ${[{\mathbf{M}}_{k+1}]}_{\star, x } \leftarrow [\bm{\eta }_{k+1}]_{x} \cdot {{\bm{\xi }}_{k+1}} + {[{\mathbf{M}}_{k}]}_{\star, x }$      \\
\hline
\end{tabular} $}
\]
and in a row-wise style for ${[\mathbf{M}_K]}_{x,\star}$ as follows:
\[  \renewcommand\arraystretch{1}
\scalebox{1}{$ \begin{tabular}{|l|l|}
\hline
${[{\mathbf{M}}_{0}]}_{x, \star} \leftarrow C \cdot {[\mathbf{e}_j]}_x \cdot \bm\gamma  $ \\
\textbf{for~} $ k \leftarrow 0, \cdots,K-1$ \\
\qquad ${[{\mathbf{M}}_{k+1}]}_{x, \star} \leftarrow [{{\bm{\xi }}_{k+1}}]_{x} \cdot \bm{\eta }_{k+1} + {[{\mathbf{M}}_{k}]}_{x, \star}$ \\
\hline
\end{tabular} $}
\]
Fig.~\ref{fig:04} pictorially visualizes the column-wise computation of ${[{\mathbf{M}}_{K}]}_{\star, x }$.
Having computed ${[{\mathbf{M}}_{K}]}_{\star, x }$ and ${[{\mathbf{M}}_{K}]}_{x, \star}$,
we can use Eq.\eqref{eq:71} to derive the column ${[{\mathbf{\Delta S}}]}_{\star, x}$ of ${\mathbf{\Delta S}}$.

The main advantage of our method is that, throughout the entire updating process,
we need not store $n \times n$ pairs of $\mathbf{M}_k$ and $\mathbf{S}$,
and thereby, significantly reduce the memory usage from $O(n^2)$ to $O(Kn+m)$.
In addition to the insertion case (C0), our memory-efficient methods are applicable to other insertion cases in Subsection~\ref{sec:04c}.
The complete algorithm, denoted as \IncSRAllP, is described in Algorithm~\ref{alg:07}.
{\IncSRAllP} is a memory-efficient version of Algorithms~\ref{alg:03}--\ref{alg:06}.
It includes a procedure {\PartialSim} that allows us to compute two columns information of old $\mathbf{S}$ on demand in linear memory,
rather than store $n^2$ pairs of old $\mathbf{S}$ in memory.
In response to each edge update $(i,j)$,
once the two old columns $\mathbf{S}_{\star,i}$ and $\mathbf{S}_{\star,j}$ are computed via {\PartialSim} for updating the $x$-th column $[\mathbf{\Delta S}]_{\star,x}$,
they can be memoized in only $O(n)$ memory and reused later to compute another $y$-th column $[\mathbf{\Delta S}]_{\star,y}$ in response to the edge update $(i,j)$.

\vspace{5pt} \noindent \textbf{Correctness.} \
{\IncSRAllP} correctly returns similarity.
It consists of four update cases:
lines 6--22 for Case (C0),
lines 23--30 for Case (C1),
lines 31--45 for Case (C2), and
lines 46--54 for Case (C3).
The correctness of each case can be verified by Theorems~\ref{thm:03}, \ref{thm:08}, \ref{thm:06}, and \ref{thm:07}, respectively.
For instance, to verify the correctness for Case (C0),
we apply successive substitution to \textsf{{for}-loop} in lines~14--21, which produces the following result:
\[
[\tilde{\mathbf{S}}]_{u,v} = [\mathbf{{S}}]_{u,v}+ \sum_{k=1}^K {[\bm{\xi }_{k}]_{u} \cdot [\bm{\eta }_{k}]_{v}} +   \sum_{k=1}^K {[\bm{\xi }_{k}]_{v} \cdot [\bm{\eta }_{k}]_{u}}
\]
This is consistent with Eq.\eqref{eq:70}, implying that our memory-efficient method does not compromise any accuracy for scalability.
\setcounter{algocf}{0}
\begin{algorithm}[!t]
\small
\SetAlgorithmName{Procedure}{}
\DontPrintSemicolon
\LinesNumbered
\SetKwInOut{Input}{Input}
\SetKwInOut{Output}{Output}
\Input{transition matrix $\mathbf{Q}$ in $G$, \\
       query node $q$, \\
       number of iterations $K$, \\
       damping factor $C$.}
\Output{$q$-th column $[{\mathbf{{S}}}]_{\star,q}$ of SimRank scores in $G$.}

initialize $\mathbf{x}_{0} \leftarrow \mathbf{e}_q$ \;      
\For {$t \leftarrow 1, 2, \cdots, K$} {
    $\mathbf{x}_{t+1} \leftarrow \mathbf{Q}^T \cdot \mathbf{x}_{t} $ \;
}
initialize $\mathbf{y} \leftarrow \mathbf{x}_{K+1}$ \;
\For {$t \leftarrow 1, 2, \cdots, K$} {
    $\mathbf{y} \leftarrow \mathbf{x}_{K+1-t} + C \cdot \mathbf{Q} \cdot \mathbf{y} $ \;
}
\Return $[\mathbf{S}]_{\star,q} \leftarrow (1-C) \cdot \mathbf{y} $ \;
\caption{\PartialSim $(\mathbf{Q}, q, K, C)$}  \label{alg:08}
\end{algorithm}

\begin{figure}[!t] \centering
  \includegraphics[width=.9\linewidth]{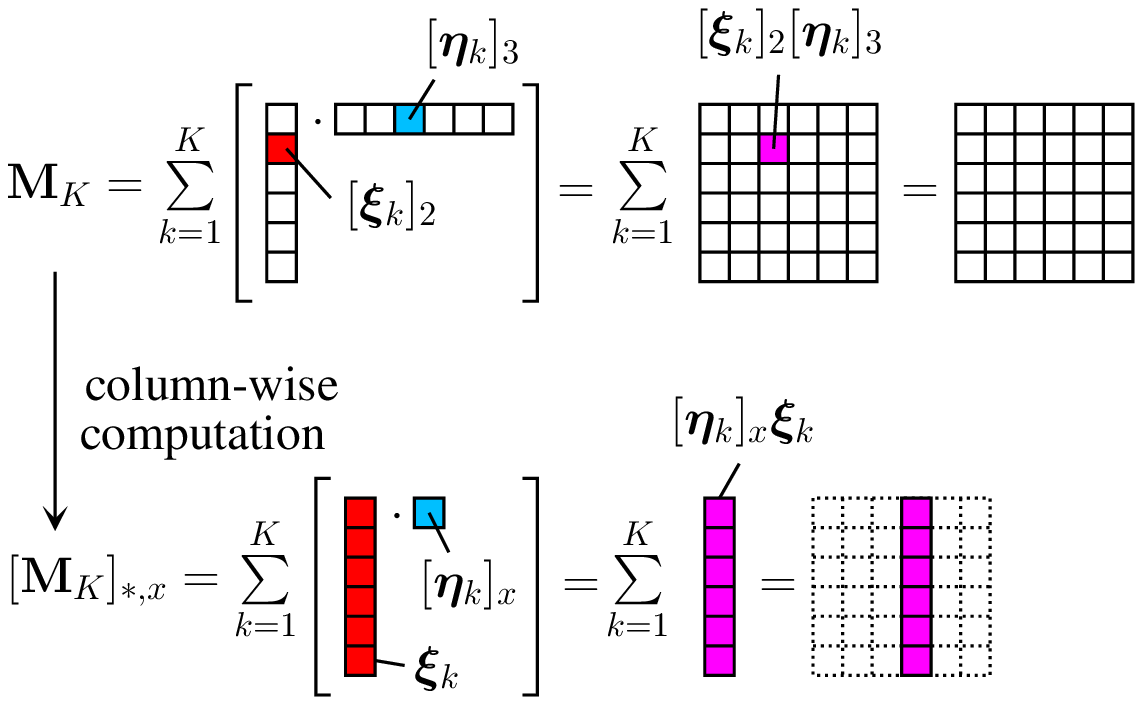}
  \caption{Memory usage reduction by partitioning $\mathbf{M}_K$ in a column-by-column style} \label{fig:04} 
\end{figure}
%
        It is worth mentioning that {\IncSRAllP} can be also combined with our batch updating method in Section~\ref{sec:08}.
        This will speed up the dynamical update of SimRank further, with $O(n(\max_{t=1}^{|B|}\delta_t) + m + Kn)$ memory.
        Here $O(n\delta_t)$ memory is needed to store $\delta_t$ columns of $\mathbf{S}$ when $[\mathbf{S}]_{\star,I}$ is required for processing the $t$-th block.
%
%
%
}
\section{Experimental Evaluation} \label{sec:09} 
In this section, we present a comprehensive experimental study on real and synthetic datasets,
to demonstrate
(i) the fast computational time of \IncSR~to incrementally update SimRanks on large time-varying networks, 
(ii) the pruning power of {\IncSR} that can discard unnecessary incremental updates outside ``affected areas'';
(iii) the exactness of \IncSR, as compared with \IncSVD;
(iv) the high efficiency of our complete scheme that integrates {\IncSR} with {\IncUSRtwo, \IncUSRthree, \IncUSRfour}
to support link updates that allow new node insertions;
(v) the fast computation time of our batch update algorithm {\IncBSR} against the unit update method {\IncSR};
(vi) the scalability of our memory-efficient algorithm {\IncSRAllP} in Section~\ref{sec:07} on million-scale large graphs for dynamical updates;
(vii) the performance comparison between {\IncSRAllP} and {\LTSF} in terms of computational time, memory space, and top-$k$ exactness;
(viii) the average updating time and memory usage of {\IncSRAllP} for each case of edge updates.
\begin{table*}[t]
\centering \scalebox{0.8}{$
\begin{tabular}{|l|ll|rr|rl|l|}
\hline
\multicolumn{3}{|c|}{\textbf{Datasets}}                & \textbf{$|V|$} & \textbf{$|E|$} & \multicolumn{2}{c|}{\textbf{\# of Pairs To Be Assessed}} & \multicolumn{1}{c|}{\textbf{Description}}   \\ \hline
\parbox[t]{2mm}{\multirow{2}{*}{\rotatebox[origin=c]{90}{Small}}}  & \DBLP  & (DBLP)            & 13,634         & 93,560         & 185,885,956                 & $(={|V|}^2) $                  & DBLP citation network                      \\
                        & \CITH  & (cit-HepPh)       & 34,546         & 421,578        & 1,193,426,116               & $(={|V|}^2) $                  & High Energy Physics citation network \\ \hline
\parbox[t]{2mm}{\multirow{3}{*}{\rotatebox[origin=c]{90}{Medium}}} & \YOUTU & (YouTube)         & 178,470        & 953,534        & 1,784,700,000               & $(= {10}^4 {|V|})$             & Social network of YouTube videos           \\
                        & \WEBB  & (web-BerkStan)    & 685,230        & 7,600,595      & 6,852,300,000               & $(= {10}^4 {|V|})$             & Web graph of Berkeley and Stanford         \\
                        & \WEBG  & (web-Google)      & 916,428        & 5,105,039      & 9,164,280,000               & $(= {10}^4 {|V|})$             & Web graph from Google                      \\ \hline
\parbox[t]{2mm}{\multirow{4}{*}{\rotatebox[origin=c]{90}{Large}}}  & \CITP  & (cit-Patents)     & 3,774,768      & 16,518,948     & 3,774,768,000               & $(= {10}^3 {|V|})$             & Citation network among US Patents          \\
                        & \SOCL  & (soc-LiveJournal) & 4,847,571      & 68,993,773     & 4,847,571,000               & $(= {10}^3 {|V|})$             & LiveJournal online social network          \\
                        & \UK    & (uk-2005)         & 39,459,925     & 936,364,282    & 39,459,925,000              & $(= {10}^3 {|V|})$             & Web graph from 2005 crawl of .uk domain    \\
                        & \IT    & (it-2004)         & 41,291,594     & 1,150,725,436  & 41,291,594,000              & $(= {10}^3 {|V|})$             & Web graph from 2004 crawl of .it domain \\ \hline
\end{tabular}
%
$}
  \caption{Description of Real-World Datasets} \label{tab:03}
\end{table*}
\subsection{Experimental Settings}
\noindent \textbf{Datasets.} \
We adopt both real and synthetic datasets.
The real datasets include small-scale ({\DBLP} and {\CITH}), medium-scale ({\YOUTU}, {\WEBB} and {\WEBG}), and large-scale graphs ({\CITP}, {\SOCL}, {\UK}, and {\IT}).
Table~\ref{tab:03} summarises the description of these datasets.

(Please refer to Appendix~\ref{app:05} for details.)

To generate synthetic graphs and updates, we adopt \textsf{GraphGen}\footnote{http://www.cse.ust.hk/graphgen/} generation engine.
The graphs are controlled by (a) the number of nodes $|V|$, and (b) the number of edges $|E|$.
We produce a sequence of graphs that follow the linkage generation model \cite{Garg2009}.
To control graph updates, we use two parameters simulating real evolution:
(a) update type (edge/node insertion or deletion), and (b) the size of updates $|\Delta G|$.

\noindent \textbf{Algorithms.} \
We implement the following algorithms:
(a) \IncSVD,
the SVD-based link-update algorithm \cite{Li2010}; 
(b) \IncUSR, our incremental method without pruning;
(c) \Batch, the batch SimRank method via fine-grained memoization \cite{Yu2013};
(d) \IncSR, our incremental method with pruning power but not supporting node insertions;
(e) \IncSRAll, our complete enhanced version of {\IncSR} that allows node insertions by incorporating {\IncUSRtwo, \IncUSRthree, and \IncUSRfour};
(f) \IncBSR, our batch incremental update version of {\IncSR};
(g) \IncSRAllP, our memory-efficient version of {\IncSRAll} that dynamically computes the SimRank matrix column by column without the need to store all pairs of old similarities;
(h) \LTSF, the log-based implementation of the existing competitor, TSF \cite{Shao2015}, which supports dynamic SimRank updates for top-$k$ querying.

\noindent \textbf{Parameters.} \
We set the damping factor $C=0.6$, as used in \cite{Jeh2002}.
By default, the total number of iterations is set to $K=15$ to guarantee accuracy ${C}^{K} \le 0.0005$~\cite{Lizorkin2008}.
On {\CITH} and {\YOUTU}, we set $K=10$;
On large graphs ({\CITP}, {\SOCL}, {\UK}, and {\IT}), we set $K=5$.
The target rank $r$ for {\IncSVD} is a speed-accuracy trade-off, we set $r=5$ in our time evaluation since,
as shown in the experiments of \cite{Li2010},
the highest speedup is achieved when $r=5$.
In our exactness evaluation, we shall tune this value.
For {\LTSF} algorithm, we set the number of one-way graphs $R_g = 100$, and the number of samples at query time $R_q=20$, as suggested in \cite{Shao2015}.

All the algorithms are implemented in Visual C++ and Matlab.
For small-scale graphs, we use a machine with an Intel Core 2.80GHz CPU and 8GB RAM.
For medium- and large-scale graphs, we use a processor with Intel Core i7-6700 3.40GHz CPU and 64GB RAM. 

%
\subsection{Experimental Results}
\subsubsection{Time Efficiency of {\IncSR} and {\IncUSR}}
\begin{figure*}
\centering
\begin{minipage}[t]{0.71\linewidth}
\centering
  \includegraphics[width=\linewidth]{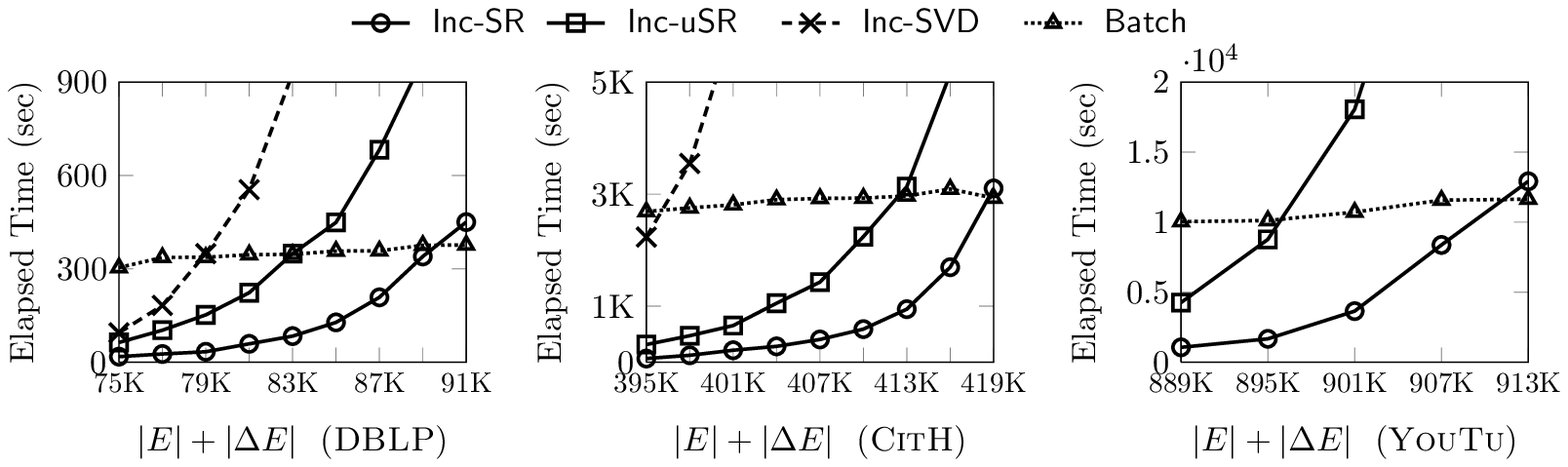} \\
  \caption{Time Efficiency on Real Data ($\Delta E$ does not accompany new nodes)}\label{fig:exp_01_02_03}
\end{minipage}
\begin{minipage}[t]{0.28\linewidth}
\centering
  \includegraphics[width=.95\linewidth]{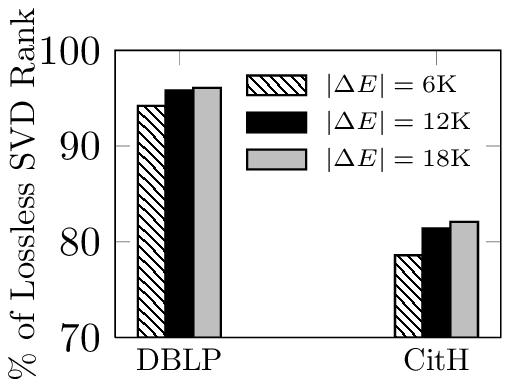}
  \caption{\% of Lossless SVD Rank}\label{fig:exp_04}
\end{minipage}
\end{figure*}
\begin{figure*}
\centering
\begin{minipage}[t]{.44\linewidth}
\centering
  \includegraphics[width=\linewidth]{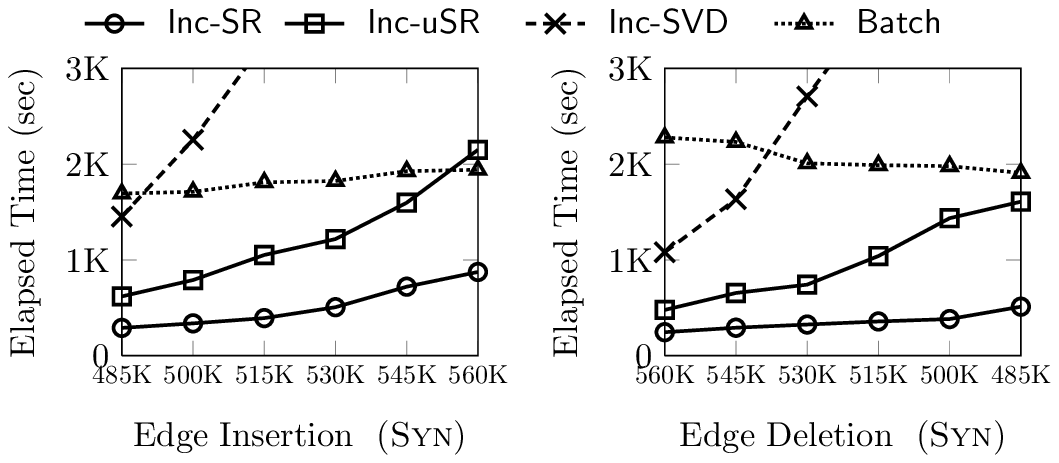} \\
  \caption{Time Efficiency on Synthetic Data}\label{fig:exp_05_06}
\end{minipage}
\begin{minipage}[t]{0.26\linewidth}
\centering
  \includegraphics[width=.95\linewidth]{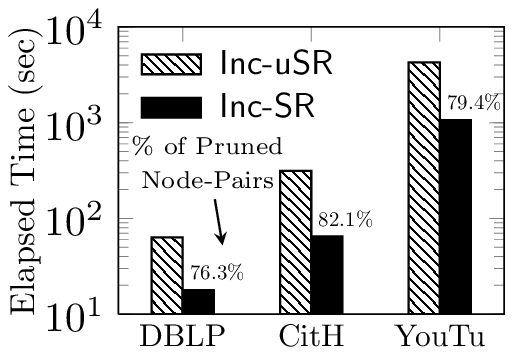}
  \caption{Pruning Power}\label{fig:exp_07}
\end{minipage}
\begin{minipage}[t]{0.26\linewidth}
\centering
  \includegraphics[width=.95\linewidth]{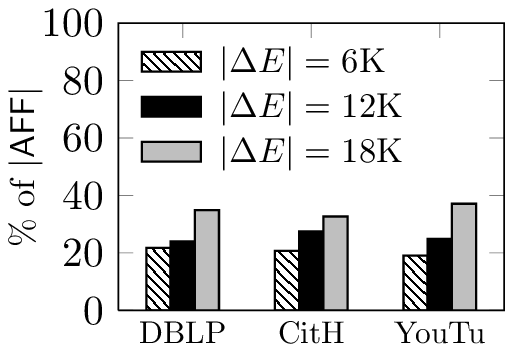}
  \caption{\% of Affected Areas}\label{fig:exp_08}
\end{minipage}
\end{figure*}
We first evaluate the computational time of {\IncSR} and {\IncUSR} against {\IncSVD} and {\Batch} on real datasets.

Note that, to favor {\IncSVD} that only works on small graphs (due to memory crash for high-dimension SVD $n>10^5$),
we just use {\IncSVD} on {\DBLP} and {\CITH}. 

Fig.\ref{fig:exp_01_02_03} depicts the results when edges are added to \DBLP, \CITH, \YOUTU, respectively.
For each dataset, we fix $|V|$ and increase $|E|$ by $|\Delta E|$, as shown in the $x$-axis.
Here, the edge updates are the differences between snapshots \wrt the ``year'' (\Resp ``video age'') attribute of \DBLP, \CITH~(\Resp \YOUTU),
reflecting their real-world evolution.
We observe the following.
(1) \IncSR~\emph{always} outperforms \IncSVD~and \IncUSR~when edges are increased.
For example, on \DBLP, when the edge changes are 10.7\%,
the time for \IncSR~(83.7s) is 11.2x faster than \IncSVD~(937.4s), and 4.2x faster than \IncUSR~(348.7s).
This is because \IncSR~employs a rank-one matrix method to update the similarities,
with an effective pruning strategy to skip unnecessary recomputations,
as opposed to \IncSVD~that entails rather expensive costs to incrementally update the SVD.
The results on \CITH~are more pronounced, \eg
\IncSR~is 30x better than \IncSVD~when $|E|$ is increased to 401K.
(2) \IncSR~is consistently better than \Batch~when the edge changes are fewer than 19.7\% on \DBLP, and 7.2\% on \CITH.
When link updates are 5.3\% on \DBLP~(\Resp 3.9\% on \CITH), \IncSR~improves \Batch~by 10.2x (\Resp 4.9x).
This is because (i) \IncSR~can exploit the sparse structure of $\mathbf{\Delta S}$ for incremental update,
and (ii) small link perturbations may keep $\mathbf{\Delta S}$ sparsity.
Hence, \IncSR~is highly efficient when link updates are small.
(3) The computational time of \IncSR, \IncUSR, \IncSVD, unlike \Batch, is sensitive to the edge updates $|\Delta E|$.
The reason is that \Batch~needs to reassess all similarities from scratch in response to link updates,
whereas \IncSR~and \IncUSR~can reuse the old information in SimRank for incremental updates.
In addition, \IncSVD~is too sensitive to $|\Delta E|$,
as it entails expensive tensor products to compute SimRank from the updated SVD matrices. 

Fig.\ref{fig:exp_04} shows the target rank $r$ required for the Li \etal\!\!'s {lossless} SVD approach \wrt the edge changes $|\Delta E|$ on \DBLP~and \CITH.
The $y$-axis is $\frac{r}{n} \times 100\%$. 
On each dataset, when increasing $|\Delta E|$ from 6K to 18K,
we see that $\frac{r}{n}$ is 95\% on \DBLP~(\Resp 80\% on \CITH),
Thus, $r$ is not negligibly smaller than $n$ in real graphs.
Due to the quartic time \wrt $r$,
\IncSVD~may be slow in practice to get a high accuracy.

On synthetic data, we fix $|V|=79,483$ and vary $|E|$ from 485K to 560K (\Resp 560K to 485K) in 15K increments (\Resp decrements).
The results are shown in Fig.\ref{fig:exp_05_06}. 
We can see that,
when 6.4\% edges are increased,
\IncSR~runs 8.4x faster than \IncSVD, 4.7x faster than \Batch, and 2.7x faster than \IncUSR.
When 8.8\% edges are deleted,
{\IncSR} outperforms {\IncSVD} by 10.4x, {\Batch} by 5.5x, and {\IncUSR} by 2.9x.
This justifies our complexity analysis of {\IncSR} and \IncUSR.
\subsubsection{Effectiveness of Pruning} 
Fig.\ref{fig:exp_07} shows the pruning power of {\IncSR} as compared with {\IncUSR} on \DBLP, \CITH, and \YOUTU,
in which
%
the percentage of the pruned node-pairs of each graph is depicted on the black bar.
The $y$-axis is in a log scale.
It can be discerned that, on every dataset,
\IncSR~constantly outperforms \IncUSR~by nearly 0.5 order of magnitude.
For instance, the running time of \IncSR~(64.9s) improves that of \IncUSR~(314.2s)~by 4.8x on \CITH,
with approximately 82.1\% node-pairs being pruned.
That is, our pruning strategy is powerful to discard unnecessary node-pairs on graphs with different link distributions.
\begin{figure*}
\centering
\begin{minipage}[b]{0.71\linewidth}
\centering
  \includegraphics[width=\linewidth]{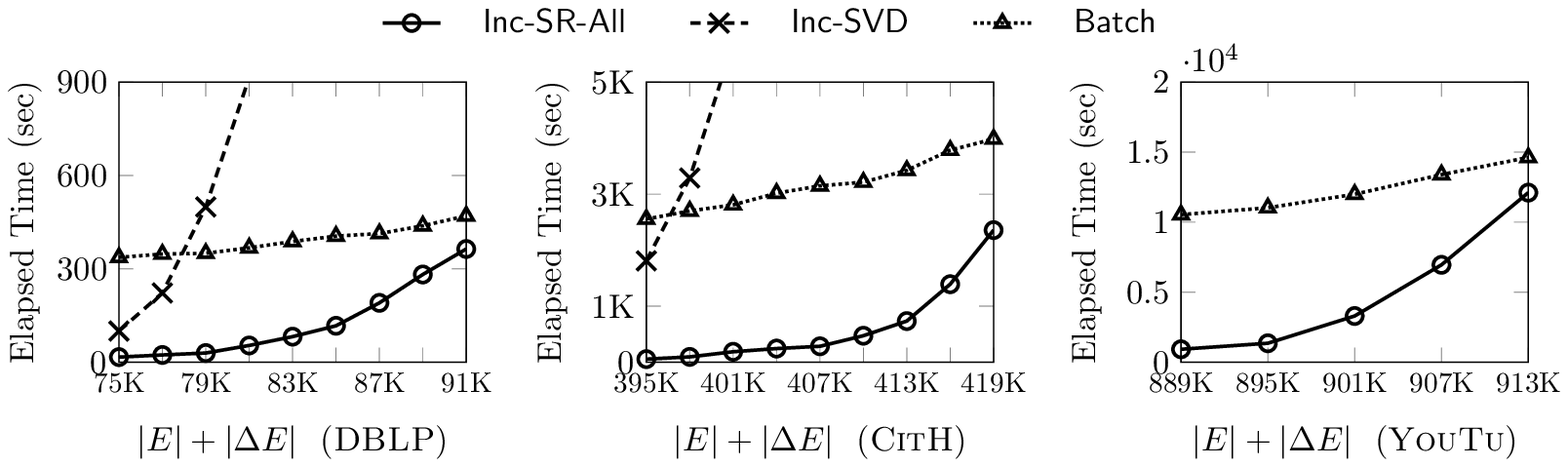} \\
  \caption{Time Efficiency on Real Data ($\Delta E$ accompanies new node insertions)}\label{fig:exp_11_12_13}
\end{minipage}
\begin{minipage}[b]{0.28\linewidth}
\centering
\scalebox{0.8}{$
  \begin{tabular}{c|r|rr|r}
\hline
\multicolumn{2}{c|}{Data ($|E|$)}    & \IncBSR  & \IncSRAll & (\%) \\ \hline
\parbox[t]{2mm}{\multirow{3}{*}{\rotatebox[origin=c]{90}{\DBLP}}}    & 75K  & 14.9    & 16.3     & 8.8  \\
                         & 83K  & 70.5    & 82.0     & 14.0 \\
                         & 91K  & 315.9   & 363.8    & 13.1 \\ \hline
\parbox[t]{2mm}{\multirow{3}{*}{\rotatebox[origin=c]{90}{\CITH}}}    & 395K & 50.5    & 54.5     & 7.3  \\
                         & 407K & 241.9   & 283.5    & 14.6 \\
                         & 419K & 1869.1  & 2357.4   & 20.7 \\ \hline
\parbox[t]{2mm}{\multirow{3}{*}{\rotatebox[origin=c]{90}{\YOUTU}}} & 889K & 876.6   & 921.9    & 4.9  \\
                         & 901K & 2756.8  & 3297.4   & 16.4 \\
                         & 913K & 10256.1 & 12109.2  & 15.3 \\ \hline
\end{tabular}$}
  \caption{Time for Batch Updates}\label{fig:exp_14}
\end{minipage}
\end{figure*}
%
%

Since our pruning strategy hinges mainly on the size of the ``affected areas'' of the SimRank update matrix,
Fig.\ref{fig:exp_08} illustrates the percentage of the ``affected areas'' of SimRank scores \wrt link updates $|\Delta E|$ on \DBLP, \CITH, and \YOUTU.
We find the following.
(1) When $|\Delta E|$ is varied from 6K to 18K on every real dataset,
the ``affected areas'' of SimRank scores are fairly small.
For instance, when $|\Delta E|=12$K,
the percentage of the ``affected areas'' is only 23.9\% on \DBLP, 27.5\% on \CITH, and 24.8\% on \YOUTU, respectively.
This highlights the effectiveness of our pruning method in real applications,
where a larger number of elements of the SimRank update matrix with zero scores can be discarded.
(2) For each dataset, the size of the ``affect areas'' mildly grows when $|\Delta E|$ is increased.
For example, on \YOUTU, the percentage of $|\AFF|$ increases from 19.0\% to 24.8\% when $|\Delta E|$ is changed from 6K to 12K.
This agrees with our time efficiency analysis,
where the speedup of {\IncSR} is more obvious for smaller $|\Delta E|$.
\begin{figure*}[t]
\centering
\begin{minipage}[b]{0.77\linewidth}
\centering
\scalebox{0.8}{$
\begin{tabular}{c|ccc|c|ccc}
\hline
\multirow{4}{*}{Datasets} & \multicolumn{3}{c|}{\IncSRAll}                                                                                                                                                                         & \IncBSR                                                                                  & \multicolumn{3}{c}{\IncSVD} \\ \cline{2-8}
                          & \begin{tabular}[c]{@{}c@{}}No \\ Optimization\end{tabular} & \begin{tabular}[c]{@{}c@{}}Turn on \\ Pruning\end{tabular} & \begin{tabular}[c]{@{}c@{}}Turn on Column-\\ wise Partitioning\end{tabular} & \begin{tabular}[c]{@{}c@{}}Turn on Pruning\\ \& Column-wise\\ Partitioning\end{tabular} & $r=5$  & $r=15$  & $r=25$  \\ \hline
\DBLP                     & 722.5M                                                     & 163.1M                                                     & 1.3M                                                                        & 15.0M                                                                                   & 1.36G  & 1.97G   & 3.86G   \\
\CITH                     & 1.64G                                                      & 413.9M                                                     & 4.2M                                                                        & 34.8M                                                                                   & 4.83G  & ---     & ---     \\
\YOUTU                    & ---                                                        & ---                                                        & 12.7M                                                                       & 186.2M                                                                                  & ---    & ---     & ---     \\ \hline
\end{tabular}$}
  \caption{Total Memory Efficiency on Real Data \ (``---'' means memory explosion)}\label{fig:exp_09}
\end{minipage}
\begin{minipage}[b]{0.22\linewidth}
\centering
  \includegraphics[height=2.6cm]{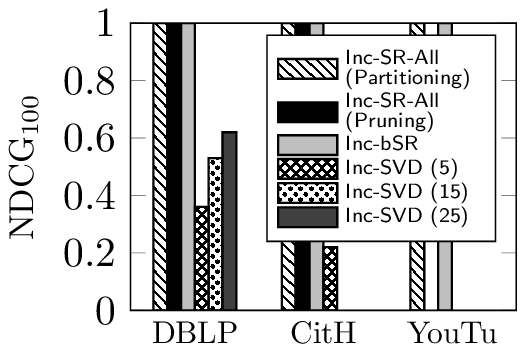}
  \caption{Exactness}\label{fig:exp_15}
\end{minipage}
\end{figure*}
\subsubsection{Time Efficiency of {\IncSRAll} and {\IncBSR}} 
We next compare the computational time of {\IncSRAll} with {\IncSVD} and {\Batch} on \DBLP, \CITH, and \YOUTU.
For each dataset, we increase $|E|$ by $|\Delta E|$ that might accompany new node insertions.
Note that {\IncSR} cannot deal with such incremental updates as $\mathbf{\Delta S}$ does not make any sense in such situations.
To enable {\IncSVD} to handle new node insertions, we view new inserted nodes as singleton nodes in the old graph $G$.
Fig.~\ref{fig:exp_11_12_13} depicts the results.
We can discern that
(1) on every dataset, {\IncSRAll} runs substantially faster than {\IncSVD} and {\Batch} when $|\Delta E|$ is small.
For example, as $|\Delta E|=6K$ on {\CITH}, {\IncSRAll} (186s) is 30.6x faster than {\IncSVD} (5692s) and 15.1x faster than {\Batch} (2809s).
The reason is that {\IncSRAll} can integrate the merits of {\IncSR} with {\IncUSRtwo, \IncUSRthree, \IncUSRfour}
to dynamically update SimRank scores in a rank-one style with no need to do costly matrix-matrix multiplications.
Moreover, the complete framework of {\IncSRAll} allows itself to support link updates that enables new node insertions.
(2) When $|\Delta E|$ grows larger on each dataset,
the time of {\IncSVD} increases significantly faster than {\IncSRAll}.
This larger increase is due to the SVD tensor products used by {\IncSVD}.
In contrast, {\IncSRAll} can effectively reuse the old SimRank scores to compute changes even if such changes may accompany new node insertions.

Fig.~\ref{fig:exp_14} compares the computational time of {\IncBSR} with {\IncSRAll}.
From the results, we can notice that, on each graph,
{\IncBSR} is consistently faster than {\IncSRAll}.
The last column ``(\%)'' denotes the percentage of {\IncBSR} improvement on speedup.
On each dataset,
the speedup of {\IncBSR} is more apparent when $|\Delta E|$ grows larger.
For example, on {\DBLP}, the improvement of {\IncBSR} over {\IncSRAll} is 8.8\% when $|E|=75$K, and 14.0\% when $|E|=83$K.
On {\CITH} (\Resp {\YOUTU}), the highest speedup of {\IncBSR} over {\IncSRAll} is 20.7\% for $|E|=419$K (\Resp 16.4\% for $|E|=901$K).
This is because the large size of $|\Delta E|$ may increase the number of the new inserted edges with one endpoint overlapped.
Hence, more edges can be handled simultaneously by {\IncBSR}, highlighting its high efficiency over {\IncSRAll}.
\subsubsection{Total Memory Usage}
%
Fig.~\ref{fig:exp_09} evaluates the total memory usage of {\IncSRAll} and {\IncBSR} against {\IncSVD} on real datasets.
Note that the total memory usage includes the storage of the old SimRanks required for all-pairs dynamical evaluation.
For {\IncSRAll}, we test its three versions:
(a) We first switch off our methods of ``pruning'' and ``column-wise partitioning'',
denoted as ``No Optimization''; (b) next turn on ``pruning'' only; and (c) finally turn on both.
For {\IncSVD}, we also tune the default target rank $r=5$ larger to see how the memory space is affected by $r$.

The results indicate that
(1) on each dataset when the memory of {\IncSVD} and {\IncBSR} does not explode,
the total spaces of {\IncSRAll} and {\IncBSR} are consistently much smaller {\IncSVD} whatever target rank $r$ is.
This is because, unlike {\IncSVD}, {\IncSRAll} and {\IncBSR} need not memorize the results of SVD tensor products.
(2) When the ``pruning'' switch is open, the space of {\IncSRAll} can be reduced by $\sim4$x further on real data,
due to our pruning method that discards many zeros in auxiliary vectors and final SimRanks.
(3) When the ``column-wise partitioning'' switch is open,
the space of {\IncSRAll} can be saved by $\sim100$x further.
The reason is that, as all pairs of SimRanks can be computed in a column-by-column style,
there is no need to memorize the entire old SimRank $\mathbf{S}$ and auxiliary $\mathbf{M}$.
This improvement agrees with our space analysis in Section~\ref{sec:07}.
(4) The space of {\IncBSR} is 8-11x larger than {\IncSRAll}, but is still acceptable.
This is because batch updates require more space to memoize several columns from the old $\mathbf{S}$ to handle a subset of edge updates simultaneously.
(5) For {\IncSVD}, when the target rank $r$ is varied from 5 to 25, its total space increases from 1.36G to 3.86G on \DBLP,
but crashes on \CITH~and \YOUTU.
This implies that $r$ has a huge impact on the space of \IncSVD,
and is not negligible in the big-$O$ analysis of \cite{Li2010}.
%
%
\begin{figure*}[t]
\centering
\begin{minipage}[b]{0.48\linewidth}
\centering
\scalebox{0.8}{$
  \begin{tabular}{l|r|r|rr}
  \hline
\multicolumn{1}{c|}{\multirow{2}{*}{Datasets}} & \multicolumn{1}{c|}{\multirow{2}{*}{\IncSRAllP}} & \multicolumn{3}{c}{\LTSF}                                                                 \\ \cline{3-5}
\multicolumn{1}{c|}{}                          & \multicolumn{1}{c|}{}                         & \multicolumn{1}{c|}{Total} & \multicolumn{1}{c}{Index (Merge)} & \multicolumn{1}{c}{Query} \\ \hline
\WEBB                                         & 0.453                                        & 4.764                     & 4.758                             & 0.006                     \\
\WEBG                                         & 1.440                                        & 6.883                     & 6.876                             & 0.007                     \\
\CITP                                         & 3.820                                        & 20.549                    & 20.536                            & 0.013                     \\
\SOCL                                         & 35.393                                       & 67.372                    & 67.322                            & 0.050                     \\
\UK                                           & 63.125                                       & 460.718                   & 460.360                           & 0.358                     \\
\IT                                           & 69.301                                       & 505.794                   & 505.400                           & 0.393 \\ \hline
\end{tabular}$}
  \caption{Avg Time (secs) for $\mathbf{S}_{\star,u}$ per Edge Update}\label{fig:exp_17}
\end{minipage} \qquad
\begin{minipage}[b]{0.42\linewidth}
\centering
  \includegraphics[height=2.9cm]{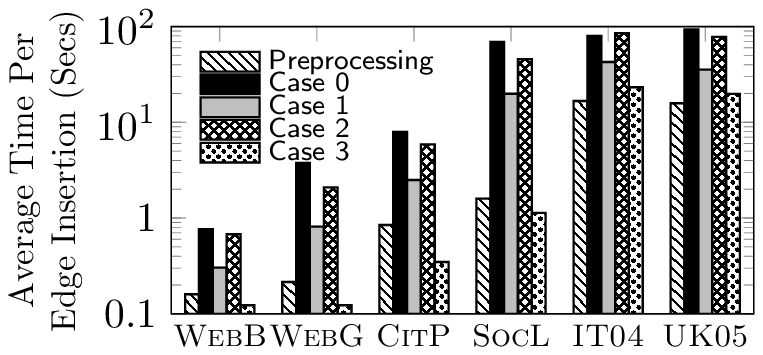}
  \caption{Avg Time for Each Insertion Case}\label{fig:exp_16}
\end{minipage}
\end{figure*}
\subsubsection{Exactness}
%
We next evaluate the exactness of {\IncSRAll}, {\IncBSR}, and {\IncSVD} on real datasets.
We leverage the NDCG metrics \cite{Li2010} to assess the top-100 most similar pairs. 
We adopt the results of the batch algorithm \cite{Fujiwara2013} on each dataset as the $\textrm{NDCG}_{100}$ baselines, due to its exactness.
For {\IncSRAll}, we evaluate its two enhanced versions: ``with column-wise partitioning'' and ``with pruning'';
for {\IncSVD}, we tune its target rank $r$ from 5 to 25.


Fig.~\ref{fig:exp_15} depicts the results, showing the following.
(1) On each dataset, the $\textrm{NDCG}_{100}$s of {\IncSRAll} and {\IncBSR} are 1, which are better than {\IncSVD} ($<0.62$).
This agrees with our observation that {\IncSVD} may loss eigen-information in real graphs.
In contrast, {\IncSRAll} and {\IncBSR} guarantee the exactness.
(2) The $\textrm{NDCG}_{100}$s for the two versions of {\IncSRAll} are exactly the same,
implying that both our pruning and column-wise partitioning methods are lossless while achieving high speedup.
%
%
%
%
%
\subsubsection{Scalability on Large Graphs}
To evaluate the scalability of our incremental techniques,
we run {\IncSRAllP}, a memory-efficient version of {\IncSR}, on six real graphs ({\WEBB}, {\WEBG}, {\CITP}, {\SOCL}, {\UK}, and {\IT}),
and compare its performance with {\LTSF}.
Both {\IncSRAllP} and {\LTSF} can compute any single column, $\mathbf{S}_{\star, u}$, of $\mathbf{S}$
with no need to memoize all $n^2$ pairs of the old $\mathbf{S}$.
To choose the query node $u$, we randomly pick up 10,000 queries from each medium-sized graph ({\WEBB} and {\WEBG}),
and 1,000 queries from each large-sized graph ({\CITP}, {\SOCL}, {\UK}, and {\IT}).
To ensure every selected $u$ can cover a board range of any possible queries,
for each dataset, we first sort all nodes in $V$ in descending order based on their importance that is measured by PageRank (PR),
and then split all nodes into 10 buckets: nodes with $\textrm{PR} \in [0.9, 1]$ are in the first bucket;
nodes with $\textrm{PR} \in [0.8, 0.9)$ the second, etc.
For every medium- (\Resp large-) sized graph,
we randomly select 1,000 (\Resp 100) queries from each bucket,
such that $u$ contains a wide range of various types of queries.
To generate dynamical updates, we follow the settings in \cite{Shao2015},
randomly choosing 1,000 edges, and considering 80\% of them as insertions and 20\% deletions.

Fig.~\ref{fig:exp_17} compares the average time of {\IncSRAllP} and {\LTSF} required to compute any column $\mathbf{S}_{\star, u}$ \wrt a given query $u$ for each edge update on six real graphs.
It can be discerned that, on each dataset, {\IncSRAllP} is scalable well over large graphs, and runs consistently 4--7x faster than log-based {\LTSF} per edge update.
On one-billion edge graphs (\IT), for every edge update,
the updating time of {\IncSRAllP} (69.301s) is 7.3x faster than that of {\LTSF} (505.794s).
This is because the time of {\LTSF} is dominated by its cost of merging $R_g=100$ one-way graphs' log buffers for updating the index.
For example, on large {\IT}, almost 99.92\% time required by {\LTSF} is due to its merge operations.
In comparison, our memory-efficient method for {\IncSRAllP} takes advantage of the rank-one Sylvester equation which computes the updates to $\mathbf{S}_{\star, u}$ in a column-by-column style on demand,
without the need to merge one-way graphs and memoize all pairs of old $\mathbf{S}$ in advance.

Fig.~\ref{fig:exp_16} shows the time complexities of {\IncSRAllP} for four cases of edge insertions on each real dataset.
For every graph, we randomly select 1,000 edges $\{(i,j)\}$ for insertion updates,
with nodes $i$ and $j$ respectively having the probability ${1}/{2}$ to be picked up from the old vertex set $V$.
Hence, each case of edge insertion occurs at ${1}/{4}$ probability.
For each insertion case, we sum all the time spent in this case, and divide it by the total number of edge insertions counted for this case.
Fig.~\ref{fig:exp_16} reports the average time per edge update for each case, together with the preprocessing time over each dataset (including the cost of loading the graph and preparing its transition matrix $\mathbf{Q}$).
From the results, we see that, on each dataset, the time spent for Cases (C0) and (C2) is moderately higher than that for Case (C1);
Case (C0) is slightly slower than Case (C2); Case (C3) entails the lowest time cost.
These results are consistent with our intuition and mathematical formulation of $\mathbf{\Delta S}$ for each case.
Case (C0) has the most expensive time cost as it needs to iteratively prepare vectors $\bm\xi_k$ and $\bm\eta_k$, and old similarities $\mathbf{S}_{\star,i}$ and $\mathbf{S}_{\star,j}$ via matrix-vector products.
In contrast, Case (C2) only requires to iteratively prepare $\bm\xi_k, \bm\eta_k$ and $\mathbf{S}_{\star,i}$;
Case (C1) just requires to perform one matrix-vector product to prepare one vector $\mathbf{y}$.
For Case (C3), the new inserted edge forms a new component of the graph.
There is no precomputation of any auxiliary vectors, and thus Case (C3) is the fastest.
\begin{figure*}[t]
\centering
\begin{minipage}[b]{0.25\linewidth}
\centering
  \includegraphics[height=2.9cm]{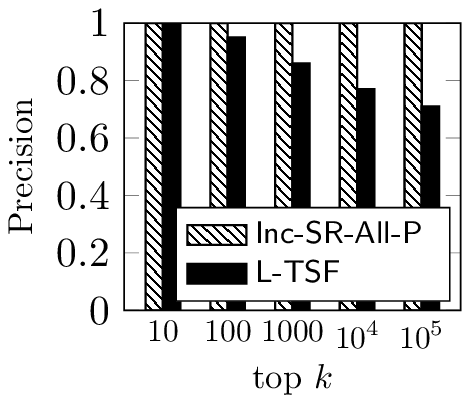}
  \caption{Precision on {\YOUTU}}\label{fig:exp_18}
\end{minipage}
\begin{minipage}[b]{0.38\linewidth}
\centering
  \includegraphics[height=2.9cm]{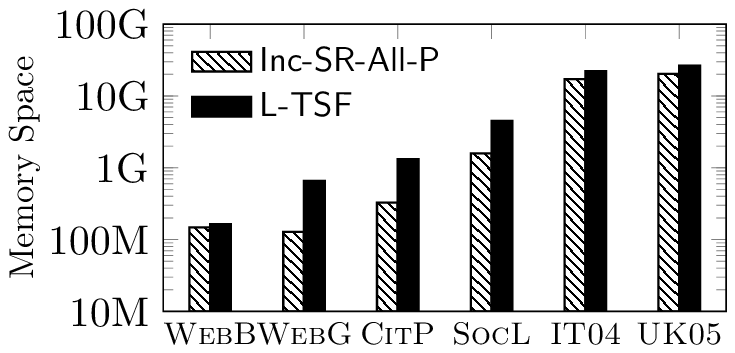}
  \caption{Memory of {\IncSRAllP} \& {\LTSF}}\label{fig:exp_19}
\end{minipage}
\begin{minipage}[b]{0.35\linewidth}
\centering
  \includegraphics[height=2.9cm]{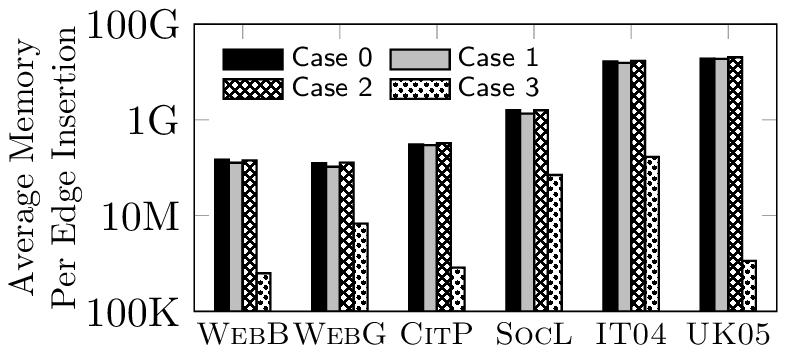}
  \caption{Memory for Each Insertion Case}\label{fig:exp_20}
\end{minipage}
\end{figure*}
\subsubsection{Precision}
To compare the precision of {\IncSRAllP} and {\LTSF},
we define the \emph{precision} measure  \cite{Jiang2017} for top-$k$ querying:
\[
\textrm{Precision}= \frac{|\textrm{approximate top-$k$ set} \cap \textrm{exact top-$k$ set}|}{k}
\]
The original batch algorithm in \cite{Jeh2002} (\Resp \cite{Li2010}) serves as the exact solution to obtain SimRank results for {\LTSF} (\Resp {\IncSRAllP}).
We evaluate the precision of both algorithms on several real datasets.
Fig.~\ref{fig:exp_18} reports the results on {\YOUTU};
the tendencies on other datasets are similar.
We see that, when top-$k$ varies from 10 to $10^5$,
the precision of {\LTSF} remains high $(>84\%)$ for small top-$k$ $(<1000)$,
but is lower (68\%--75\%) for large top-$k$ $(>{10}^4)$.
This is because the probabilistic guarantee for the error bound of {\LTSF} is based on the assumption that no cycle in the given graph has a length shorter than $K$ (the total number of steps).
Hence, {\LTSF} is highly efficient for top-$k$ single-source querying, where $k$ is not large.
In contrast, the precision of {\IncSRAllP} is stable at 1,
meaning that it produces the exact SimRank results of \cite{Li2010}, regardless of its top-$k$ size.
Thus, {\IncSRAllP} is better for non top-$k$ query.
\subsubsection{Memory of {\IncSRAllP}}
Fig.~\ref{fig:exp_19} evaluates the memory usage of {\IncSRAllP} and {\LTSF} over six real datasets.
We observe that both algorithms scale well on large graphs.
On {\WEBB}, {\IT}, and {\UK}, the memory space of {\IncSRAllP} is almost the same as {\LTSF};
On {\WEBG}, {\CITP}, and {\SOCL}, the memory usage of {\IncSRAllP} is 5--8x less than {\LTSF}.
This is because, unlike {\LTSF} that need load a one-way graph to memory,
{\IncSRAllP} only requires to prepare the vector information of $\bm\xi_k, \bm\eta_k$, old $\mathbf{S}_{\star,i}$, and old $\mathbf{S}_{\star,j}$ to assess the changes to each column of $\mathbf{S}$ in response to edge update $(i,j)$.
The memory space of these auxiliary vectors can sometimes be comparable to the size of the one-way graph, and sometimes be much smaller.
However, such memory space is linear to $n$ as we do not need $n^2$ space to store the entire old $\mathbf{S}$.
Note that the old $\mathbf{S}_{\star,j}$ and $\mathbf{S}_{\star,i}$ can be computed on demand with only linear memory by our partial-pairs SimRank approach \cite{Yu2015}.
Moreover, we see that, with the growing scale of the real datasets,
the memory space of {\IncSRAllP} is increasing linearly, highlighting its scalability on large graphs.

Fig.~\ref{fig:exp_20} depicts further the average memory usage of {\IncSRAllP} for each case of edge insertion.
We randomly pick up 1,000 edges $\{(i,j)\}$ for insertion updates on each dataset,
with nodes $i$ and $j$ respectively having the probability ${1}/{2}$ to be chosen from the old vertex set $V$.
The average memory space of {\IncSRAllP} for each case is reported in Fig.~\ref{fig:exp_20}.
We see that, on each dataset, the memory required for Cases (C0), (C1), and (C2) are similar,
whereas the memory space of Case (C3) is much smaller than the other cases.
The reason is that, for Cases (C0), (C1), and (C2),
{\IncSRAllP} needs linear memory to store some auxiliary vectors
 (\eg $\bm\xi_k, \bm\eta_k$, $\mathbf{y}$, old $\mathbf{S}_{\star,i}$, and old $\mathbf{S}_{\star,j}$) for updating SimRank scores,
whereas for Case (C3), no auxiliary vectors are required for precomputation, thus saving much memory space.
\section{Related Work} \label{sec:02}
%
%
        Recent results on SimRank can be distinguished into two categories:
        (i) dynamical SimRank \cite{He2010,Li2010,Yu2014,Shao2015,Jiang2017}, and
        (ii) static SimRank \cite{Kusumoto2014,Fujiwara2013,Li2010a,Li2015,Fogaras2007,Lee2012,Lizorkin2008,Yu2013}.

\subsection{Incremental SimRank}

%
Li \etal \cite{Li2010} devised an interesting matrix representation of SimRank,
and was the first to show a SVD method for incrementally updating all-pairs SimRanks, which requires $O(r^4n^2)$ time
and $O({r}^{2}n^2)$ memory.
However, their incremental techniques are \emph{inherently} inexact, with no guaranteed accuracy.

{Recently, Shao \etal \cite{Shao2015} provided an excellent exposition of a two-stage random sampling framework, TSF, for top-$k$ SimRank dynamic search \wrt one query $u$.
In the preprocessing stage, they sampled a collection of one-way graphs to index random walks in a scalable manner.
In the query stage, they retrieved similar nodes by pruning unqualified nodes based on the connectivity of one-way graph.
To retrieve \emph{top-$k$ nodes} with highest SimRank scores in \emph{a single column} $\mathbf{S}_{\star,u}$,
\cite{Shao2015} requires $O(K^2 R_q R_g)$ \emph{average} query time to retrieve $\mathbf{S}_{\star,u}$ along with $O(n \log k)$ time to return top-$k$ results from $\mathbf{S}_{\star,u}$.
The recent work of Jiang \etal \cite{Jiang2017} has argued that, to retrieve $\mathbf{S}_{\star,u}$, the querying time of \cite{Shao2015} is $O(K n R_q R_g)$.
The $n$ factor is due to the time to traverse the reversed one-way graph; in the worst case, all $n$ nodes are visited.
Moreover, Jiang \etal \cite{Jiang2017} observed that the probabilistic error guarantee of Shao \etal\!\!'s method is based on the assumption that no cycle in the given graph has a length shorter than $K$,
and they proposed READS, a new efficient indexing scheme that improves precision and indexing space for dynamic SimRank search.
The query time of READS is $O(r n)$ to retrieve one column $\mathbf{S}_{\star,u}$, where $r$ is the number of sets of random walks.
Hence, TSF and READS are highly efficient for \emph{top-$k$ single-source} SimRank search.
In comparison, our dynamical method focuses on \emph{all $(n^2)$-pairs} SimRank search in $O(K(m+|\AFF|))$ time.
Optimization methods in this work are based on a rank-one Sylvester matrix equation characterising changes to $n^2$ pairs of SimRank scores,
which is fundamentally different from \cite{Shao2015,Jiang2017}'s methods that maintain one-way graphs (or SA forests) updating.
It is important to note that, for large-scale graphs, our incremental methods do not need to memoize all $(n^2)$ pairs of old SimRank scores,
and can dynamically update $\mathbf{S}$ column-wisely in only $O(Kn+m)$ memory.
For updating each column of $\mathbf{S}$,
our experiments in Section~\ref{sec:09} verify that
our memory-efficient incremental method is scalable on large real graphs while running 4--7 times faster than the dynamical TSF \cite{Shao2015} per edge update,
due to the high cost of \cite{Shao2015} merging one-way graph's log buffers for TSF indexing.
}

There has also been a body of work on incremental computation of other graph-based relevance measures. 
Banhmani \etal \cite{Bahmani2010} utilized the Monte Carlo method for incrementally computing Personalized PageRank.
Desikan \etal \cite{Desikan2005} proposed an excellent incremental PageRank algorithm for node updating.
Their central idea revolves around the first-order Markov chain.
Sarma \etal \cite{Sarma2011} presented an excellent exposition of randomly sampling random walks of short length, and merging them together to estimate PageRank on graph streams.
%
%
\subsection{Batch SimRank}
        In comparison to incremental algorithms,
        the batch SimRank computation has been well-studied on static graphs.

For deterministic methods,
Jeh and Widom \cite{Jeh2002} were the first to propose an iterative paradigm for SimRank,
entailing $O(Kd^2n^2)$ time for $K$ iterations,
where $d$ is the average in-degree.
Later,
Lizorkin \etal \cite{Lizorkin2008} utilized the partial sums memoization to speed up SimRank computation to $O(Kdn^2)$.
Yu \etal \cite{Yu2013} have also improved SimRank computation to $O(Kd'n^2)$ time (with $d' \le d$)
via a fine-grained memoization to share the common parts among different partial sums.
Fujiwara \etal \cite{Fujiwara2013} exploited the matrix form of \cite{Li2010} to find the top-$k$ similar nodes in $O(n)$ time \wrt a given query node.
All these methods require $O(n^2)$ memory to output all pairs of SimRanks.
Recently, Kusumoto \etal \cite{Kusumoto2014} proposed a linearized method that requires only $O(dn)$ memory and $O(K^2 d n^2)$ time to compute all pairs of SimRanks.
The recent work of \cite{Yu2015} proposes an efficient aggregate method for computing partial pairs of SimRank scores.
The main ideas of partial-pairs SimRank search are also incorporated into the incremental model of our work,
achieving linear memory to update $n^2$-pairs similarities.

For parallel SimRank computing,
Li \etal~\cite{Li2015} proposed a highly parallelizable algorithm, called CloudWalker, for large-scale SimRank search on Spark with ten machines.
Their method consists of offline and online phases.
For offline processing, an indexing vector is derived by solving a linear system in parallel.
For online querying, similarities are computed instantly from the index vector.
Throughout, the Monte Carlo method is used to maximally reduce time and space.

The recent work of Zhang \etal \cite{Zhang2017} conducted extensive experiments and discussed in depth many existing SimRank algorithms in a unified environment using different metrics,
encompassing efficiency, effectiveness, robustness, and scalability.
The empirical study for 10 algorithms from 2002 to 2015 shows that, despite many recent research efforts,
the running time and precision of known algorithms have still much space for improvement.
This work makes a further step towards this goal.

Fogaras and R{\'a}cz \cite{Fogaras2007} proposed P-SimRank in linear time to estimate a single-pair SimRank $s(a,b)$ from the probability that two random surfers, starting from $a$ and $b$, will finally meet at a node.
Li \etal \cite{Li2010a} harnessed the random walks to compute local SimRank for a single node-pair.
Later, Lee \etal \cite{Lee2012} employed the Monte Carlo method to find top-$k$ SimRank node-pairs.
Tao \etal \cite{Tao2014} proposed an excellent two-stage way for the top-$k$ SimRank-based similarity join.

Recently, Tian and Xiao \cite{Tian2016} proposed SLING, an efficient index structure for static SimRank computation.
SLING requires $O(n/\epsilon)$ space and $O(m/\epsilon+n\log \tfrac{n}{\delta} /\epsilon)$ pre-computation time, and answers any single-pair (\textit{resp.} single-source) query in $O(1/\epsilon)$ (\Resp $O(n/\epsilon)$) time.

%
%
\section{Conclusions} \label{sec:11} 
In this article,
we study the problem of incrementally updating SimRank scores on time-varying graphs.
Our complete scheme, {\IncSRAll}, consists of five ingredients:
(1) For edge updates that do not accompany new node insertions,
we characterize the SimRank update matrix $\mathbf{\Delta S}$ via a rank-one Sylvester equation.
Based on this, a novel efficient algorithm is devised,
which reduces the incremental computation of SimRank from $O(r^4n^2)$ to $O(Kn^2)$ for each link update.
(2) To eliminate unnecessary SimRank updates further,
we also devise an effective pruning strategy that can improve the incremental computation of SimRank to $O(K(m+|\AFF|))$,
where $|\AFF|\ (\ll n^2)$ is the size of the ``affected areas'' in the SimRank update matrix.
(3) For edge updates that accompany new node insertions,
we consider three insertion cases, according to which end of the inserted edge is a new node.
For each case, we devise an efficient incremental SimRank algorithm that can support new node insertions and accurately update the affected similarities.
(4) For batch updates, we also propose efficient batch incremental methods that can handle ``similar sink edges'' simultaneously and eliminate redundant edge updates.
(5) To optimize the memory for all-pairs SimRank updates,
we also devise a column-wise memory-efficient technique that significantly reduces the storage from $O(n^2)$ to $O(Kn+m)$, without the need to memoize $n^2$ pairs of SimRank scores.
Our experimental evaluations on real and synthetic datasets demonstrate that
(a) our incremental scheme is consistently 5--10 times faster than Li \etal\!\!'s SVD based method;
(b) our pruning strategy can speed up the incremental SimRank further by 3--6 times;
(c) the batch update algorithm enables an extra 5--15\% speedup, with just a little compromise in memory;
(d) our memory-efficient incremental method is scalable on billion-edge graphs;
for every edge update, its computational time can be 4--7 times faster than {\LTSF} and its memory space can be 5--8 times less than {\LTSF};
(e) for different cases of edge updates, Cases (C0) and (C2) entail more time than Case (C1), and Case (C3) runs the fastest.
\bibliographystyle{abbrv} 
\bibliography{ref}

\begin{appendices}
\section{Limitation of Li \etal\!\!'s SVD \cite{Li2010}} \label{app:01}
%
%
We rigorously explain the reason why Li \etal\!\!'s incremental method may miss some eigen-information even if a lossless SVD is utilized for SimRank computation.


Let us first revisit the main idea of Li \etal\!\!'s incremental method~\cite{Li2010}.
Briefly, \cite{Li2010} characterizes SimRank matrix $\mathbf{S}$ in Eq.\eqref{eq:03a} in terms of three matrices $\mathbf{U},\mathbf{\Sigma},{\mathbf{V}}$,
%
%
where $\mathbf{U}, \mathbf{\Sigma},{\mathbf{V}}$ are derived by the SVD of $\mathbf{Q}$, \ie
\begin{equation} \label{eq:01}
  \mathbf{Q}=\mathbf{U} \cdot \mathbf{\Sigma} \cdot {\mathbf{V}}^T.
\end{equation}
Then, when links are changed,
\cite{Li2010} incrementally computes the new SimRank matrix $\tilde{\mathbf{S}}$
by updating the old matrices $\mathbf{U}, \mathbf{\Sigma}, {\mathbf{V}}$ respectively as
\begin{equation} \label{eq:02}
\tilde{\mathbf{U}}=\mathbf{U}\cdot \mathbf{U_C}, \quad
\tilde{\mathbf{\Sigma}}=\mathbf{\Sigma_C}, \quad
\tilde{{\mathbf{V}}}= {\mathbf{V}} \cdot {\mathbf{V_C}},
\footnote{In the sequel, we abuse a tilde to denote the updated version of a matrix,
\eg $\tilde{\mathbf{U}}$ is the updated matrix of old $\mathbf{U}$ after link updates.}
\end{equation}
where $\mathbf{U_C},\mathbf{\Sigma_C},{\mathbf{V_C}}$ are derived from the SVD of the auxiliary matrix $\mathbf{C} \triangleq \mathbf{\Sigma} + \mathbf{U}^T \cdot \mathbf{\Delta Q} \cdot \mathbf{V}$, \ie
\begin{equation} \label{eq:03}
\mathbf{C}= \mathbf{U_C} \cdot \mathbf{\Sigma_C} \cdot{\mathbf{V_C}}^T,
\end{equation}
and $\mathbf{\Delta Q}$ is the changes to $\mathbf{Q}$ in response to link updates.

However,
the main problem is that
the derivation of Eq.\eqref{eq:02} rests on the assumption that
\begin{equation} \label{eq:08}
\mathbf{U} \cdot \mathbf{U}^T = \mathbf{V} \cdot \mathbf{V}^T = \mathbf{I}_n.
\end{equation}
Unfortunately, Eq.\eqref{eq:08} does \emph{not} hold (unless $\mathbf{Q}$ is a full-rank matrix, \ie $\textrm{rank}(\mathbf{Q})=n$)
because in the case of $\textrm{rank}(\mathbf{Q})<n$,
even a ``perfect'' (lossless) SVD of $\mathbf{Q}$ via Eq.\eqref{eq:01} would produce $n \times \alpha$ \emph{rectangular} matrices $\mathbf{U}$ and $\mathbf{V}$ with $\alpha=\textrm{rank}(\mathbf{Q})<n$.
Thus,
\[\textrm{rank}(\mathbf{U} \cdot \mathbf{U}^T)=\alpha<n=\textrm{rank}(\mathbf{I}_n),\]
which implies that $\mathbf{U} \cdot \mathbf{U}^T \neq \mathbf{I}_n$.
Similarly, $\mathbf{V} \cdot \mathbf{V}^T \neq \mathbf{I}_n$ when $\textrm{rank}(\mathbf{Q})<n$.
Hence, Eq.\eqref{eq:08} is not always true,
as visualized in Fig.~\ref{fig:05}.
\begin{example} \label{eg:02}
Consider a graph with the matrix $\mathbf{Q}=\left[\begin{smallmatrix}0 & 1 \\0 & 0 \end{smallmatrix}\right]$,
and its lossless SVD:
\[\mathbf{Q}=\mathbf{U} \cdot \mathbf{\Sigma} \cdot {\mathbf{V}}^T \ \textrm{with} \ \mathbf{U}=\left[\begin{smallmatrix}1 \\0 \end{smallmatrix}\right], \ \mathbf{\Sigma}=[1], \ {\mathbf{V}}=\left[\begin{smallmatrix}0 \\1 \end{smallmatrix}\right].\]
One can readily verify that
\[
\mathbf{U} \cdot \mathbf{U}^T = \left[\begin{smallmatrix}1 \\0 \end{smallmatrix}\right] \cdot [\begin{smallmatrix}1 & 0 \end{smallmatrix}] = \left[\begin{smallmatrix}1 & 0\\0 & 0 \end{smallmatrix}\right] \neq \left[\begin{smallmatrix}1 & 0\\0 & 1 \end{smallmatrix}\right] = \mathbf{I}_n
\quad (n=2),
\]
whereas
\[
\mathbf{U}^T \cdot \mathbf{U} = [\begin{smallmatrix}1 & 0 \end{smallmatrix}] \cdot  \left[\begin{smallmatrix}1 \\0 \end{smallmatrix}\right]  = 1 = \mathbf{I}_\alpha
\footnote{The notation $\mathbf{I}_{\alpha}$ denotes the $\alpha \times \alpha$ identity matrix.}
\quad  (\alpha=\textrm{rank}(\mathbf{Q})=1).
\]
Thus, Eq.\eqref{eq:08} does not hold when $\mathbf{Q}$ is not full-rank.
\qed
\end{example}
To clarify why Eq.\eqref{eq:08} gets involved in the derivation of Eq.\eqref{eq:02},
let us briefly recall from \cite{Li2010} the four steps of obtaining Eq.\eqref{eq:02},
and the problem lies in the last step.
\begin{figure}[t] \centering
  \includegraphics[width=\linewidth]{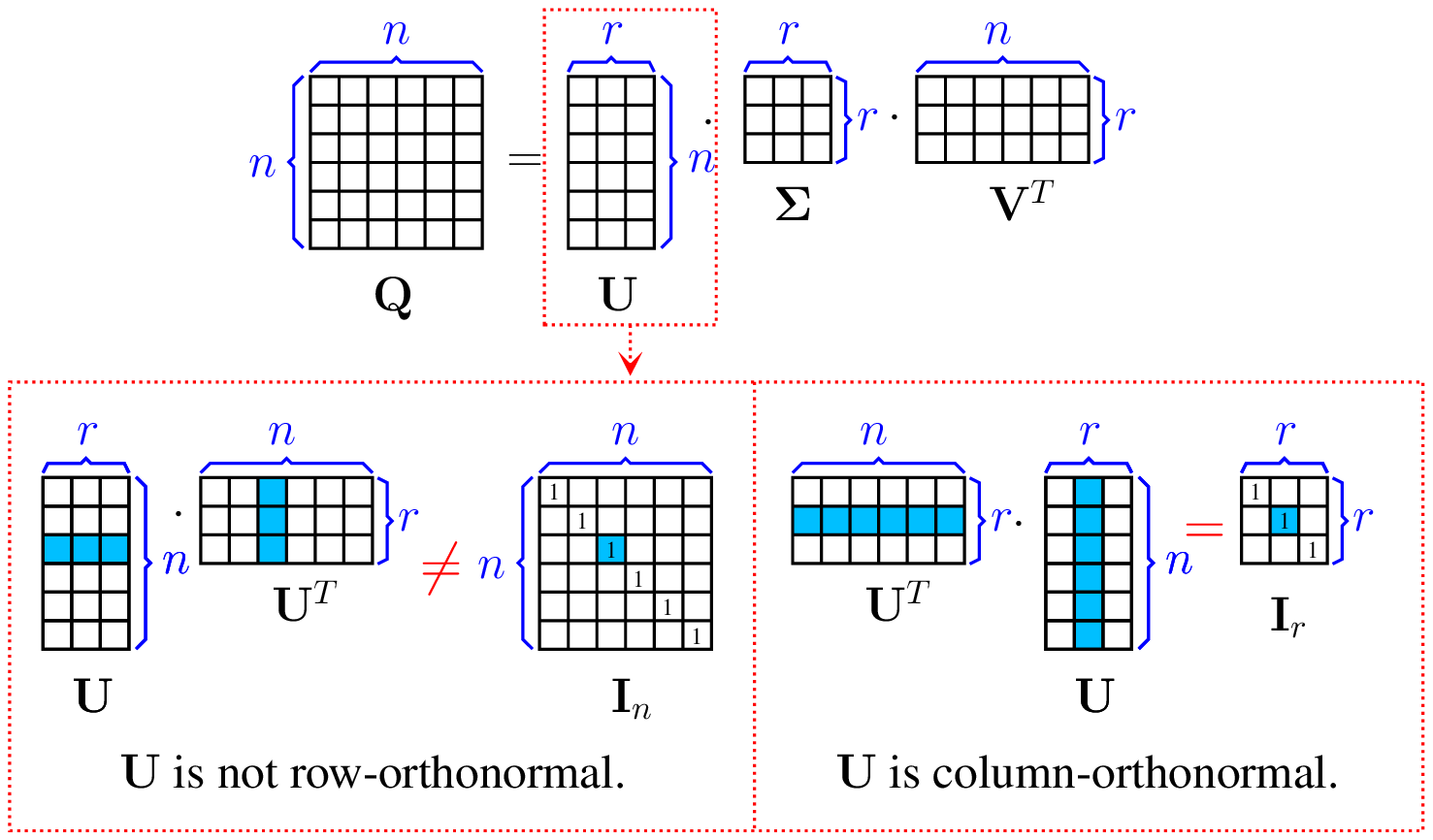}
  \caption{$\mathbf{U}\cdot \mathbf{U}^T \neq \mathbf{I}_n$ whenever $\textrm{rank}(\mathbf{Q})=r<n$} \label{fig:05} 
\end{figure}

\textsc{Step 1}. \
Initially, when links are changed,
the old $\mathbf{Q}$ is updated to new $\tilde{\mathbf{Q}}  = \mathbf{Q} + \mathbf{\Delta Q}$.
By replacing $\mathbf{Q}$ with Eq.\eqref{eq:01}, it follows that
\begin{equation} \label{eq:05}
\tilde{\mathbf{Q}} = \mathbf{U} \cdot \mathbf{\Sigma} \cdot {\mathbf{V}}^T + \mathbf{\Delta Q}.
\end{equation}

\textsc{Step 2}. \
Premultiply by $\mathbf{U}^T$ and postmultiply by $\mathbf{V}$ on both sides of Eq.\eqref{eq:05},
and then apply the property $\mathbf{U}^T \cdot \mathbf{U} = \mathbf{V}^T \cdot \mathbf{V} = \mathbf{I}_{\alpha}$.
It follows that
\begin{equation} \label{eq:06}
\mathbf{U}^T \cdot \tilde{\mathbf{Q}} \cdot {\mathbf{V}}=  \mathbf{\Sigma} +\mathbf{U}^T \cdot \mathbf{\Delta Q} \cdot {\mathbf{V}}.
\end{equation}

\textsc{Step 3}. \
Let $\mathbf{C}$ be the right-hand side of Eq.\eqref{eq:06}. Applying Eq.\eqref{eq:03} to Eq.\eqref{eq:06} yields
\begin{equation} \label{eq:07}
\mathbf{U}^T \cdot \tilde{\mathbf{Q}} \cdot {\mathbf{V}}= \mathbf{U_C} \cdot \mathbf{\Sigma_C} \cdot{\mathbf{V_C}}^T.
\end{equation}

\textsc{Step 4}. \
Li \etal \cite{Li2010} attempted to premultiply by $\mathbf{U}$ and postmultiply by $\mathbf{V}^T$ on both sides of Eq.\eqref{eq:07} first,
and then rested on the assumption of Eq.\eqref{eq:08} to obtain
\begin{equation} \label{eq:09}
\underbrace{\mathbf{U} \cdot \mathbf{U}^T}_{\hspace{10pt}\scalebox{1.2}{$? \hspace{-10pt}=\mathbf{I}_n$}} \cdot \tilde{\mathbf{Q}} \cdot \underbrace{\mathbf{V} \cdot \mathbf{V}^T}_{\hspace{10pt}\scalebox{1.2}{$? \hspace{-10pt}=\mathbf{I}_n$}}  =\underbrace{(\mathbf{U} \cdot  \mathbf{U_C})}_{\triangleq \tilde{\mathbf{U}}} \cdot \underbrace{\mathbf{\Sigma_C}}_{\triangleq \tilde{\mathbf{\Sigma}}} \cdot \underbrace{({\mathbf{V_C}} \cdot {\mathbf{V}})^T}_{\triangleq {\tilde{\mathbf{V}}}^T},
\end{equation}
which is the result of Eq.\eqref{eq:02}.

However, the problem lies in \textsc{Step} 4.
As mentioned before, Eq.\eqref{eq:08} does not hold when $\textrm{rank}(\mathbf{Q})<n$,
which means that $\tilde{\mathbf{Q}} \neq \tilde{\mathbf{U}} \cdot \tilde{\mathbf{\Sigma}} \cdot \tilde{\mathbf{V}}^T$ in Eq.\eqref{eq:09}.
Consequently, updating the old ${\mathbf{U}},{\mathbf{\Sigma}},{\mathbf{V}}$ via Eq.\eqref{eq:02}
may produce an error (up to $\|\mathbf{I}_n - \mathbf{U} \cdot \mathbf{U}^T  \|_2=1$, which is not practically small) for incrementally ``approximating'' ${\mathbf{S}}$.

\begin{example} \label{eg:03}
Recall the old $\mathbf{Q}$ and its SVD in Example \ref{eg:02}.
Suppose there is a new edge insertion,
associated with $\mathbf{\Delta Q}=\left[\begin{smallmatrix}0 & 0 \\ 1 & 0 \end{smallmatrix}\right]$.
\cite{Li2010} first computes auxiliary matrix $\mathbf{C}$ as
\setlength\arraycolsep{1pt}
\[
\mathbf{C} \triangleq  \mathbf{\Sigma} + \mathbf{U}^T \cdot \mathbf{\Delta Q} \cdot \mathbf{V} = [1] + \left[\begin{smallmatrix}1 & 0 \end{smallmatrix}\right] \cdot \left[\begin{smallmatrix}0 & 0 \\ 1 & 0 \end{smallmatrix}\right] \cdot \left[\begin{smallmatrix}0 \\1 \end{smallmatrix}\right] =[1].
\]
Then, the matrix $\mathbf{C}$ is decomposed via Eq.\eqref{eq:03} into
\[
\mathbf{C}= \mathbf{U_C} \cdot \mathbf{\Sigma_C} \cdot{\mathbf{V_C}}^T \textrm{ with } \mathbf{U_C}=\mathbf{\Sigma_C}=\mathbf{V_C}=[1].
\]
Finally, \cite{Li2010} updates the new SVD of $\tilde{\mathbf{Q}}$ via Eq.\eqref{eq:02} as
\[
\tilde{\mathbf{U}}=\mathbf{U}\cdot \mathbf{U_C}=\left[\begin{smallmatrix}1 \\0 \end{smallmatrix}\right], \quad
\tilde{\mathbf{\Sigma}}=\mathbf{\Sigma_C}=[1], \quad
\tilde{{\mathbf{V}}}= {\mathbf{V}} \cdot {\mathbf{V_C}}=\left[\begin{smallmatrix}0 \\1 \end{smallmatrix}\right].
\]

However, one can readily verify that
\[
\tilde{\mathbf{U}} \cdot \tilde{\mathbf{\Sigma}} \cdot {\tilde{{\mathbf{V}}}}^T = \left[\begin{smallmatrix}0 & 1 \\0 & 0 \end{smallmatrix}\right] \neq  \left[\begin{smallmatrix}0 & 1 \\ 1 & 0 \end{smallmatrix}\right] = \mathbf{Q} + \mathbf{\Delta} \mathbf{Q} = \tilde{\mathbf{Q}}.
\]
In comparison, a ``true'' SVD of $\tilde{\mathbf{Q}}$ should be
\[
\tilde{\mathbf{Q}} = \hat{\mathbf{U}} \cdot \hat{\mathbf{\Sigma}} \cdot{\hat{\mathbf{V}}}^T  \textrm{ with } \hat{\mathbf{U}} = \left[\begin{smallmatrix}0 & 1 \\1 & 0\end{smallmatrix}\right], \ \hat{\mathbf{\Sigma}}=\hat{\mathbf{V}}= \left[\begin{smallmatrix}1 & 0 \\0 & 1\end{smallmatrix}\right].
\]
%
Besides, the approximation error is not small in practice
\[{\|\tilde{\mathbf{Q}} - \tilde{\mathbf{U}} \cdot \tilde{\mathbf{\Sigma}} \cdot {\tilde{{\mathbf{V}}}}^T\|}_{2} = {\| \left[\begin{smallmatrix}0 & 1 \\ 1 & 0 \end{smallmatrix}\right] - \left[\begin{smallmatrix}0 & 1 \\0 & 0 \end{smallmatrix}\right] \|}_{2}=1. \quad \qed \]
\end{example}

%
Our analysis suggests that, only when
(i) $\mathbf{Q}$ is full-rank, and
(ii) the SVD of $\mathbf{Q}$ is lossless $(n=\textrm{rank}(\mathbf{Q})=\alpha)$,
Li \etal\!\!'s incremental way \cite{Li2010} can produce the \emph{exact} $\mathbf{S}$, 
but the time complexity of \cite{Li2010}, $O(r^4n^2)$, would become $O(n^6)$,
which is prohibitively expensive.
In practice,
as evidenced by our statistical experiments in Fig.\ref{fig:exp_04} on Stanford Large Network Datasets (SNAP), 
most real graphs are not full-rank,
highlighting our need to devise an efficient method for dynamic SimRank computation.
%
%
\section{Proofs \& Intuitions of Theorems} \label{app:02}
\subsection{Proof of Theorem~\ref{thm:01}} \label{app:02a}
%
%
%
\begin{proof}
We show this by considering the two cases below:
%

(i) If ${{d}_{j}}=0$, then ${{[\mathbf{Q}]}_{j,\star}}=\mathbf{0}$, and the inserted edge $(i,j)$ will update ${{[\mathbf{Q}]}_{j,i}}$ from 0 to 1,
\ie $\mathbf{\Delta Q}={{\mathbf{e}}_{j}}\mathbf{e}_{i}^{T}$.

(ii) If ${{d}_{j}}>0$, then all nonzeros in old ${{[\mathbf{Q}]}_{j,\star}}$ are $\tfrac{1}{d_j}$.
The inserted edge $(i,j)$ will update ${{[\mathbf{Q}]}_{j,\star}}$ via 2 steps:
first, all nonzeros in ${{[\mathbf{Q}]}_{j,\star}}$ are changed from $\tfrac{1}{d_j}$ to $\tfrac{1}{d_j+1}$;
then, the entry ${{[\mathbf{Q}]}_{j,i}}$ is changed from 0 to $\tfrac{1}{d_j+1}$.
\[
{{[\mathbf{\tilde{Q}}]}_{j,\star}}
= \tfrac{{{d}_{j}}}{{{d}_{j}}+1}  {{[\mathbf{Q}]}_{j,\star}}  + \tfrac{1}{{{d}_{j}}+1}  \mathbf{e}_{i}^{T}
=  {{[\mathbf{Q}]}_{j,\star}} + \tfrac{1}{{{d}_{j}}+1} ( \mathbf{e}_{i}^{T} - {{[\mathbf{Q}]}_{j,\star}} )
\]
Since only the $j$-th row of $\mathbf{Q}$ is affected, it follows that
\[
\mathbf{\tilde{Q}} - \mathbf{Q}
= \underbrace{\tfrac{1}{{{d}_{j}}+1} \mathbf{e}_{j}}_{:= \mathbf{u}} \underbrace{( \mathbf{e}_{i}^{T} - {{[\mathbf{Q}]}_{j,\star}} )}_{:= \mathbf{v}^T}
= \mathbf{u}\cdot {\mathbf{v}}^T 
\]

Finally, combining (i) and (ii), Eq.\eqref{eq:18} holds.  \qed
\end{proof}
\subsection{Proof of Theorem~\ref{thm:02}} \label{app:02b}
\begin{proof}
We show this by following the two steps:

(a) We first formulate $\mathbf{\Delta S}$ recursively.
%
%
To describe $\mathbf{\Delta S}$ in terms of the old $\mathbf{Q}$ and $\mathbf{S}$,
we subtract Eq.\eqref{eq:03a} from Eq.\eqref{eq:15},
and apply $\mathbf{\Delta S}=\mathbf{\tilde{S}}-\mathbf{S}$, yielding
\begin{equation}  \label{eq:23}
\scalebox{0.88}{$
\mathbf{\Delta S} = C\cdot \mathbf{\tilde{Q}}\cdot \mathbf{S}\cdot {{\mathbf{\tilde{Q}}}^{T}} + C\cdot \mathbf{\tilde{Q}}\cdot \mathbf{\Delta S}\cdot {{\mathbf{\tilde{Q}}}^{T}}
-C\cdot \mathbf{Q}\cdot \mathbf{S}\cdot {{\mathbf{Q}}^{T}}. $}
\end{equation}
If there are two vectors $\mathbf{u}$ and $\mathbf{v}$ such that
\begin{equation}\label{eq:24}
\mathbf{\tilde{Q}}=\mathbf{Q}+\mathbf{\Delta Q}=\mathbf{Q}+\mathbf{u} \cdot {{\mathbf{v}}^{T}},
\end{equation}
then we can plug Eq.\eqref{eq:24} into the term $C\cdot \mathbf{\tilde{Q}}\cdot \mathbf{S}\cdot {{\mathbf{\tilde{Q}}}^{T}}$ of Eq.\eqref{eq:23},
and simplify the result into
\begin{equation}\label{eq:25}
\mathbf{\Delta S}=C\cdot \mathbf{\tilde{Q}}\cdot \mathbf{\Delta S}\cdot {{\mathbf{\tilde{Q}}}^{T}}+C\cdot \mathbf{T}
\end{equation}
\begin{equation}\label{eq:26}
\textrm { with }\mathbf{T}=\mathbf{u}{{(\mathbf{QSv})}^{T}}+(\mathbf{QSv}){{\mathbf{u}}^{T}}+({{\mathbf{v}}^{{T}}}\mathbf{Sv})\mathbf{u}{{\mathbf{u}}^{T}}.
\end{equation}
%
We can verify that $\mathbf{T}$ is a symmetric matrix ($\mathbf{T}={{\mathbf{T}}^{T}}$).
Moreover, we note that $\mathbf{T}$ is the sum of two rank-one matrices.
This can be verified by letting
\[\mathbf{z}\triangleq \mathbf{S}\cdot \mathbf{v},\ \mathbf{y}\triangleq \mathbf{Q}\cdot \mathbf{z},\ \lambda \triangleq {{\mathbf{v}}^{T}}\cdot \mathbf{z}.\]
Then, using the auxiliary vectors $\mathbf{z}, \mathbf{y}$ and the scalar $\lambda$,
we can simplify Eq.\eqref{eq:26} into
\begin{eqnarray}
  \mathbf{T} 
 &=&  \mathbf{u}\cdot {{\mathbf{w}}^{T}}+\mathbf{w}\cdot {{\mathbf{u}}^{T}}, \ \ \textrm{ with } \mathbf{w}= \mathbf{y}+\tfrac{\lambda }{2}\mathbf{u}.  \label{eq:27}
\end{eqnarray}

(b) We next convert the recursive form of $\mathbf{\Delta S}$ into the series form.
One can readily verify that 
\begin{equation}\label{eq:28}
\mathbf{X}=\mathbf{A}\cdot \mathbf{X}\cdot \mathbf{B}+\mathbf{C}\quad \Leftrightarrow \quad \mathbf{X}=\sum\nolimits_{k=0}^{\infty }{{{\mathbf{A}}^{k}}\cdot \mathbf{C}\cdot {{\mathbf{B}}^{k}}}
\end{equation}
Thus, based on Eq.\eqref{eq:28},
the recursive definition of $\mathbf{\Delta S}$ in Eq.\eqref{eq:25} naturally leads itself to the series form:
\[\mathbf{\Delta S}=\sum\nolimits_{k=0}^{\infty }{{{C}^{k+1}}\cdot {{{\mathbf{\tilde{Q}}}}^{k}}\cdot \mathbf{T}\cdot {{({{{\mathbf{\tilde{Q}}}}^{T}})}^{k}}}.\]
Combining this with Eq.\eqref{eq:27} yields
\begin{eqnarray*}
\mathbf{\Delta S} &=& \sum\nolimits_{k=0}^{\infty }{{{C}^{k+1}}\cdot {{{\mathbf{\tilde{Q}}}}^{k}}\cdot \left( \mathbf{u}\cdot {{\mathbf{w}}^{T}}+\mathbf{w}\cdot {{\mathbf{u}}^{T}} \right)\cdot {{({{{\mathbf{\tilde{Q}}}}^{T}})}^{k}}} \\
 &=&  \mathbf{M}+{{\mathbf{M}}^{T}}  \ \textrm{ with } \mathbf{M}\textrm{ being defined in Eq.}\eqref{eq:16}.
\end{eqnarray*}
By Eq.\eqref{eq:28},
the series form of $\mathbf{M}$ in Eq.\eqref{eq:16} satisfies the rank-one Sylvester recursive form of Eq.\eqref{eq:14}. \qed
\end{proof}
\subsection{Proof of Theorem~\ref{thm:03}} \label{app:02c}
\begin{proof}
We divide the proof into the following two cases:

(i) When ${{d}_{j}}=0$,
according to Eq.\eqref{eq:18} in Theorem~\ref{thm:01},
$\mathbf{v}={{\mathbf{e}}_{i}}, \ \mathbf{u}={{\mathbf{e}}_{j}}$.
Plugging them into Eq.\eqref{eq:20} gets
\[
\mathbf{z} ={{[\mathbf{S}]}_{\star,i}}, \quad
\mathbf{y} =\mathbf{Q}\cdot {{[\mathbf{S}]}_{\star,i}}, \quad
\lambda = {{[\mathbf{S}]}_{i,i}}.
\]
Thus, applying $\mathbf{w}=\mathbf{y}+\tfrac{\lambda }{2}\mathbf{u}$ in Theorem \ref{thm:02},
we have
\[
\mathbf{w}=\mathbf{Q}\cdot {{[\mathbf{S}]}_{\star,i}}+\tfrac{1}{2}{{[\mathbf{S}]}_{i,i}}\cdot {{\mathbf{e}}_{j}}.
\]
Coupling this with Eq.\eqref{eq:16}, $\mathbf{u}={{\mathbf{e}}_{j}}$, and Theorem~\ref{thm:02} completes the proof of the case ${{d}_{j}}=0$ for Eq.\eqref{eq:29aa}.

(ii) When ${{d}_{j}}>0$, Eq.\eqref{eq:18} in Theorem \ref{thm:01} implies that
\begin{equation}\label{eq:32}
\mathbf{v}={{\mathbf{e}}_{i}}-[\mathbf{Q}]_{j,\star}^{T}, \quad \mathbf{u}=\tfrac{1}{{{d}_{j}}+1}\cdot {{\mathbf{e}}_{j}}.
\end{equation}
Substituting these back into Eq.\eqref{eq:20} yields
\begin{eqnarray*}
\mathbf{z}&=&
{{[\mathbf{S}]}_{\star,i}}-\mathbf{S}\cdot [\mathbf{Q}]_{j,\star}^{T}, \quad
\mathbf{y}  =  \mathbf{Q}\cdot {{[\mathbf{S}]}_{\star,i}}-\mathbf{Q}\cdot \mathbf{S}\cdot [\mathbf{Q}]_{j,\star}^{T}, \\[3pt]
\lambda  
 &=& {{[\mathbf{S}]}_{i,i}}-2\cdot {{[\mathbf{Q}]}_{j,\star}}\cdot {{[\mathbf{S}]}_{\star,i}}+{{[\mathbf{Q}]}_{j,\star}}\cdot \mathbf{S}\cdot [\mathbf{Q}]_{j,\star}^{T}.
\end{eqnarray*}
To simplify $\mathbf{Q}\cdot \mathbf{S}\cdot [\mathbf{Q}]_{j,\star}^{T}$ in $\mathbf{y}$,
and ${{[\mathbf{Q}]}_{j,\star}}\cdot \mathbf{S}\cdot [\mathbf{Q}]_{j,\star}^{T}$ in $\lambda$,
we postmultiply both sides of Eq.\eqref{eq:03a} by ${{\mathbf{e}}_{j}}$ to obtain
\begin{equation} \label{eq:33}
\mathbf{Q}\cdot \mathbf{S}\cdot [\mathbf{Q}]_{j,\star}^{T}=\tfrac{1}{C}\cdot ( {{[\mathbf{S}]}_{\star,j}}-(1-C)\cdot {{\mathbf{e}}_{j}} ).
\end{equation}
We also premultiply both sides of Eq.\eqref{eq:33} by $\mathbf{e}_j^T$ to get
\begin{equation} \label{eq:34}
{[\mathbf{Q}]}_{j,\star} \cdot \mathbf{S}\cdot [\mathbf{Q}]_{j,\star}^{T}=\tfrac{1}{C}\cdot ( {{[\mathbf{S}]}_{j,j}}-1) + 1.
\end{equation}
Plugging Eqs.\eqref{eq:33} and \eqref{eq:34} into $\mathbf{y}$ and $\lambda$, respectively,
and then putting $\mathbf{y}$ and $\lambda$ into $\mathbf{w}=\mathbf{y}+\tfrac{\lambda }{2}\mathbf{u}$ produce
%
\begin{equation*}
\mathbf{w}=  \mathbf{Q}\cdot {{[\mathbf{S}]}_{\star,i}}-\tfrac{1}{C}\cdot {{[\mathbf{S}]}_{\star,j}}+ ( \tfrac{1}{C}+\tfrac{\lambda }{2 ( {{d}_{j}}+1 )}-1 )\cdot {{\mathbf{e}}_{j}},
\end{equation*}
where
$\lambda = {{[\mathbf{S}]}_{i,i}}+\tfrac{1}{C} \cdot {[\mathbf{S}]}_{j,j}-2\cdot {{[\mathbf{Q}]}_{j,\star}}\cdot {{[\mathbf{S}]}_{\star,i}} - \tfrac{1}{C} +1$.

Combining this with Eqs.\eqref{eq:16} and \eqref{eq:32} shows the case ${{d}_{j}}>0$ for Eq.\eqref{eq:29bb}.

Finally, taking (i) and (ii)
together with Theorem~\ref{thm:02}
completes the entire proof. \qed
\end{proof}
\subsection{Proof of Theorem~\ref{thm:09}} \label{app:02d}
\begin{proof}
We prove this by considering two cases:

  (i) If ${{d}_{j}}=1$, then after the edge $(i,j)$ is deleted, ${{[\mathbf{Q}]}_{j,i}}$ will change from 1 to 0,
\ie
\[
\mathbf{\Delta Q}=\mathbf{u} \cdot \mathbf{v}^T \quad \textrm{ with }  \mathbf{u} = {{\mathbf{e}}_{j}} \textrm{ and } \mathbf{v} = - \mathbf{e}_{i}.
\]
According to Eq.\eqref{eq:20} in Theorem~\ref{thm:02}, we have
\[
\mathbf{z} =-{{[\mathbf{S}]}_{\star,i}}, \quad
\mathbf{y} =-\mathbf{Q}\cdot {{[\mathbf{S}]}_{\star,i}}, \quad
\lambda = {{[\mathbf{S}]}_{i,i}}.
\]
Thus, plugging them into $\mathbf{w}=\mathbf{y}+\tfrac{\lambda }{2}\mathbf{u}$ produces
\[
\mathbf{w}=-\mathbf{Q}\cdot {{[\mathbf{S}]}_{\star,i}}+\tfrac{1}{2}{{[\mathbf{S}]}_{i,i}}\cdot {{\mathbf{e}}_{j}}.
\]
Combining this with Theorem~\ref{thm:02} completes the proof of the case when $d_j= 1$.

(ii) If ${{d}_{j}}>1$, then all nonzeros in old ${{[\mathbf{Q}]}_{j,\star}}$ are $\tfrac{1}{d_j}$.
The deleted edge $(i,j)$ will update ${{[\mathbf{Q}]}_{j,\star}}$ via 2 steps:
first, all nonzeros in ${{[\mathbf{Q}]}_{j,\star}}$ are changed from $\tfrac{1}{d_j}$ to $\tfrac{1}{d_j-1}$;
then, the entry ${{[\mathbf{Q}]}_{j,i}}$ is changed from $\tfrac{1}{d_j}$ to 0.
\[
{{[\mathbf{\tilde{Q}}]}_{j,\star}}
= \tfrac{{{d}_{j}}}{{{d}_{j}}-1} ( {{[\mathbf{Q}]}_{j,\star}}  - \tfrac{1}{{{d}_{j}}}  \mathbf{e}_{i}^{T} )
=  {{[\mathbf{Q}]}_{j,\star}} + \tfrac{1}{{{d}_{j}}-1} ( {{[\mathbf{Q}]}_{j,\star}} - \mathbf{e}_{i}^{T} )
\]
Since only the $j$-th row of $\mathbf{Q}$ is affected, it follows that
\[
\mathbf{\tilde{Q}} - \mathbf{Q}
= \underbrace{\tfrac{1}{{{d}_{j}}-1} \mathbf{e}_{j}}_{:= \mathbf{u}} \underbrace{( {{[\mathbf{Q}]}_{j,\star}} - \mathbf{e}_{i}^{T} )}_{:= \mathbf{v}^T}
= \mathbf{u}\cdot {\mathbf{v}}^T 
\]
By virtue of Eq.\eqref{eq:20} in Theorem~\ref{thm:02}, we have
\begin{eqnarray*}
\mathbf{z} &=& \mathbf{S}\cdot \mathbf{v}=\mathbf{S}\cdot \left( [\mathbf{Q}]_{j,\star}^{T} - {{\mathbf{e}}_{i}} \right)
= \mathbf{S}\cdot [\mathbf{Q}]_{j,\star}^{T} - {{[\mathbf{S}]}_{\star,i}}, \\
\mathbf{y} &=& \mathbf{Q}\cdot \mathbf{z} = \mathbf{Q}\cdot \mathbf{S}\cdot [\mathbf{Q}]_{j,\star}^{T} - \mathbf{Q}\cdot {{[\mathbf{S}]}_{\star,i}}
= \textrm{\{using Eq.\eqref{eq:33}\}} \\
&=& \tfrac{1}{C}\cdot ( {{[\mathbf{S}]}_{\star,j}}-(1-C)\cdot {{\mathbf{e}}_{j}} ) - \mathbf{Q}\cdot {{[\mathbf{S}]}_{\star,i}}. \\
\lambda &=& {{\mathbf{v}}^{T}}\cdot \mathbf{z} = {{( [\mathbf{Q}]_{j,\star} - {{\mathbf{e}}_{i}^{T}} )} }\cdot ( \mathbf{S}\cdot [\mathbf{Q}]_{j,\star}^{T} - {{[\mathbf{S}]}_{\star,i}} ) \\
 &=& {{[\mathbf{S}]}_{i,i}}-2\cdot {{[\mathbf{Q}]}_{j,\star}}\cdot {{[\mathbf{S}]}_{\star,i}}+{{[\mathbf{Q}]}_{j,\star}}\cdot \mathbf{S}\cdot [\mathbf{Q}]_{j,\star}^{T}. \\
 &=& \textrm{\{using Eq.\eqref{eq:34}\}}  \\
 &=& {{[\mathbf{S}]}_{i,i}}-2\cdot {{[\mathbf{Q}]}_{j,\star}}\cdot {{[\mathbf{S}]}_{\star,i}}+\tfrac{1}{C}\cdot ( {{[\mathbf{S}]}_{j,j}}-1) + 1.
\end{eqnarray*}
Hence, substituting them into $\mathbf{w}=\mathbf{y}+\tfrac{\lambda }{2}\mathbf{u}$ yields
\[
\mathbf{w}=  -\mathbf{Q}\cdot {{[\mathbf{S}]}_{\star,i}}+\tfrac{1}{C}\cdot {{[\mathbf{S}]}_{\star,j}}+ ( 1-\tfrac{1}{C}+\tfrac{\lambda }{2 ( {{d}_{j}}-1 )} )\cdot {{\mathbf{e}}_{j}}.
\]
Combining this with Theorem~\ref{thm:02} completes the proof of the case when $d_j> 1$.

Finally, coupling (i) and (ii) proves Theorem~\ref{thm:09}.  \qed
\end{proof}
\subsection{Proof and Intuition of Theorem~\ref{thm:04}} \label{app:02e}
\begin{proof}
We only show the edge insertion case ${{d}_{j}}>0$, due to space limits.
The proofs of other cases are similar.

For $k=0$,
it follows from Eq.\eqref{eq:29c} that ${{[{{\mathbf{M}}_{0}}]}_{a,b}}={{[{{\mathbf{e}}_{j}}]}_{a}} {{[\bm{\gamma }]}_{b}}$.
Thus, $\forall (a,b)\notin {{\mathsf{\mathcal{A}}}_{0}}\times {{\mathsf{\mathcal{B}}}_{0}}$,
there are two cases:
(i) $a\ne j$, or
(ii) $a=j$, $b\in {{\mathsf{\mathcal{F}}}_{1}}^{C}\cap {{\mathsf{\mathcal{F}}}_{2}}^{C}$, and $b\ne j$.

For case (i), ${{[{{\mathbf{e}}_{j}}]}_{a}}=0$ for $a\ne j$.
Thus, ${{[{{\mathbf{M}}_{0}}]}_{a,b}}=0$.
For case (ii),
${{[{{\mathbf{e}}_{j}}]}_{a}}=1$ for $a=j$.
Thus, ${{[{{\mathbf{M}}_{0}}]}_{a,b}}={{[\bm{\gamma }]}_{b}}$,
where ${{[\bm{\gamma }]}_{b}}$ is the linear combinations of the 3 terms:
${{[\mathbf{Q}]}_{b,\star}}\cdot {{[\mathbf{S}]}_{\star,i}}$, ${{[\mathbf{S}]}_{b,j}}$, and ${{[{{\mathbf{e}}_{j}}]}_{b}}$,
according to the case of ${{d}_{j}}>0$ in Eq.\eqref{eq:29bb}.

Next, our goal is to show the 3 terms are all 0s.
(a) For $b\notin {{\mathsf{\mathcal{F}}}_{1}}$,
by definition in Eq.\eqref{eq:39},
$b\in \mathsf{\mathcal{O}}(y)$ for $\forall y$,
we have ${{[\mathbf{S}]}_{i,y}}=0$.
Due to symmetry,
$b\in \mathsf{\mathcal{O}}(y)\Leftrightarrow y\in \mathsf{\mathcal{I}}(b)$,
which implies that ${{[\mathbf{S}]}_{i,y}}=0$ for $\forall y\in \mathsf{\mathcal{I}}(b)$.
\footnote{Herein, we denote by $\mathcal{I}(a)$ the in-neighbor set of node $a$.}
Thus, ${{[\mathbf{Q}]}_{b,\star}}\cdot {{[\mathbf{S}]}_{\star,i}}=\frac{1}{\mathsf{\mathcal{I}}(b)}\sum_{x\in \mathsf{\mathcal{I}}(b)}^{{}}{{{[\mathbf{S}]}_{x,i}}}=0$.
(b) For $b\notin {{\mathsf{\mathcal{F}}}_{2}}$,
it follows from the case ${{d}_{j}}>0$ in Eq.\eqref{eq:40} that ${{[\mathbf{S}]}_{j,b}}=0$.
Hence, by $\mathbf{S}$ symmetry,
${{[\mathbf{S}]}_{b,j}}={{[\mathbf{S}]}_{j,b}}=0$.
(c) ${{[{{\mathbf{e}}_{j}}]}_{b}}=0$ since $b\ne j$.

Taking (a)--(c) together,
it follows that ${{[{{\mathbf{M}}_{0}}]}_{a,b}}=0$,
which completes the proof for the case $k=0$.

For $k>0$, one can readily prove that the $k$-th iterative ${{\mathbf{M}}_{k}}$ in Line~\ref{ln:a01-17} of Algorithm~\ref{alg:01} is the first $k$-th partial sum of $\mathbf{M}$ in Eq.\eqref{eq:29c}.
Thus, ${{\mathbf{M}}_{k+1}}$ can be derived from ${{\mathbf{M}}_{k}}$ as follows:
\[
\mathbf{M}_{k} = C \cdot \tilde{\mathbf{Q}} \cdot \mathbf{M}_{k-1} \cdot \tilde{\mathbf{Q}}^T + C \cdot \mathbf{e}_j \cdot \bm{\gamma}^T.
\]
Thus, the $(a,b)$-entry form of the above equation is
\[ \scalebox{0.92}{$
{[\mathbf{M}_{k}]}_{a,b} = \tfrac{C}{|\tilde{{\cal I}}(a)||\tilde{{\cal I}}(b)|}   \sum\nolimits_{x \in \tilde{{\cal I}}(a)}  \sum\nolimits_{y \in \tilde{{\cal I}}(b)} {[\mathbf{M}_{k-1}]}_{x,y} + C \cdot {[\mathbf{e}_j]}_{a} \cdot {[\bm{\gamma}]}_{b}.
$}\]
To show that ${[\mathbf{M}_{k}]}_{a,b}=0$ for $(a,b)\notin {{\mathsf{\mathcal{A}}}_{0}}\times {{\mathsf{\mathcal{B}}}_{0}} \cup {{\mathsf{\mathcal{A}}}_{k}}\times {{\mathsf{\mathcal{B}}}_{k}}$,
we follow the 2 steps:
(i) For $(a,b)\notin {{\mathsf{\mathcal{A}}}_{0}}\times {{\mathsf{\mathcal{B}}}_{0}}$,
as proved in the case $k=0$, the term $C \cdot {{[{{\mathbf{e}}_{j}}]}_{a}} {{[\bm{\gamma }]}_{b}}$ in the above equation is obviously 0.
(ii) For $(a,b)\notin {{\mathsf{\mathcal{A}}}_{k}}\times {{\mathsf{\mathcal{B}}}_{k}}$,
by virtue of Eq.\eqref{eq:41},
$a \in \tilde{\cal O}(x), b \in \tilde{\cal O}(y)$, for $\forall x,y$,
we have ${[\mathbf{M}_{k-1}]}_{x,y}=0$.
Hence, by symmetry, it follows that
$x \in \tilde{\cal I}(a), y \in \tilde{\cal I}(b)$, ${[\mathbf{M}_{k-1}]}_{x,y}=0$.

Taking (i) and (ii) together,
we can conclude that
${[\mathbf{M}_{k}]}_{a,b} = 0$
for $(a,b)\notin {{\mathsf{\mathcal{A}}}_{0}}\times {{\mathsf{\mathcal{B}}}_{0}} \cup {{\mathsf{\mathcal{A}}}_{k}}\times {{\mathsf{\mathcal{B}}}_{k}}$. \qed
\end{proof}

Intuitively, ${\cal F}_1$ in Eq.\eqref{eq:39} captures the nodes ``$\blacktriangle$'' in \eqref{eq:39a}.
To be specific,
${\cal F}_1$ can be obtained via 2 phases:
(i) For the given node $i$,
we first build an intermediate set ${\cal T}:=\{y  |  {{[\mathbf{S}]}_{i,y}}\ne 0\}$,
which consists of nodes ``$\star$'' in \eqref{eq:39a}.
(ii) For each node $x \in {\cal T}$,
we then find all out-neighbors of $x$ in $G$, which produces ${\cal F}_1$,
\ie ${\cal F}_1 = \bigcup_{x\in {\cal T}}{{\cal O}(x)}$.
Analogously, the set ${\cal F}_2$ in Eq.\eqref{eq:40},
in the case of $d_j>0$, consists of the nodes ``$\star$'' depicted in~\eqref{eq:39b}.
When $d_j=0$, ${\cal F}_2 = \varnothing $ as the term ${{[\mathbf{S}]}_{\star,i}}$ is not in the expression of $\bm{\gamma }$ in Eq.\eqref{eq:29aa}, 
in contrast to the case $d_j>0$.

After obtaining ${\cal F}_1$ and ${\cal F}_2$,
we can readily find ${{\mathsf{\mathcal{A}}}_{0}}\times {{\mathsf{\mathcal{B}}}_{0}}$, according to Eq.\eqref{eq:41}.
For $k>0$, to iteratively derive the node-pair set ${{\mathsf{\mathcal{A}}}_{k}}\times {{\mathsf{\mathcal{B}}}_{k}}$,
we take the following two steps:
(i) we first construct a node-pair set ${\cal T}_1 \times {\cal T}_2 :=\{(x,y) | {[\mathbf{M}_{k-1}]}_{x,y} \neq 0\}$.
(ii) For every node $x \in {\cal T}_1$ (\Resp $y \in {\cal T}_2$),
we then find all out-neighbors of $x$ (\Resp $y$) in $G \cup \{(i,j)\}$, which yields ${\cal A}_k$ (\Resp ${\cal B}_k$),
\ie ${\cal A}_k = \bigcup_{x\in {\cal T}_1}{\tilde{{\cal O}}(x)}$ and $ {\cal B}_k = \bigcup_{y\in {\cal T}_2}{\tilde{{\cal O}}(y)}$.

The node selectivity of Theorem \ref{thm:04} hinges on $\mathbf{\Delta S}$ sparsity.
Since real graphs are constantly updated with \emph{minor} changes,
$\mathbf{\Delta S}$ is often \emph{sparse} in general.
Hence, many node-pairs with zero scores in $\mathbf{\Delta S}$ can be discarded. 
As demonstrated by our experiments in Fig.\ref{fig:exp_07},
76.3\% paper-pairs on \DBLP~can be pruned,
significantly reducing unnecessary similarity recomputations. 
\section{Examples} \label{app:03}
\subsection{Li \etal\!\!'s SVD incremental approach} \label{app:03a}
\begin{example} \label{eg:01}
Figure~\ref{fig:01} depicts a citation graph $G$, a tiny fraction of \DBLP,
where each node is a paper, and an edge represents a reference from one paper to another.
Suppose $G$ is updated by adding an edge $(i,j)$, denoted by $\Delta G$ (see the dash arrow).
Using the damping factor $C=0.8$,
we would like to compute SimRank scores in the new graph $G \cup \Delta G$.

The results are compared in the table of Figure~\ref{fig:01},
where Column `$\textsf{sim}_\textsf{Li et al.}$' denotes the approximation of SimRank scores returned by Li \etal\!\!'s Algorithm~3~\cite{Li2010},
and Column `$\textsf{sim}_\textsf{true}$' denotes the ``true'' SimRank scores returned by a batch algorithm \cite{Fujiwara2013} that runs in $G \cup \Delta G$ from scratch.
%
%
It can be noticed that for some node-pairs (not highlighted in gray),
the similarities obtained by Li \etal\!\!'s incremental method are different from the ``true'' SimRank scores
even if lossless SVD is used
\footnote{A \emph{rank-$\alpha$ SVD} of the matrix $\mathbf{X} \in \mathbb{R}^{n \times n}$ is a factorization of the form
$\mathbf{X}_{\alpha} = \mathbf{U} \cdot \mathbf{\Sigma} \cdot {\mathbf{V}}^T$,
where $\mathbf{U},\mathbf{V} \in \mathbb{R}^{n \times \alpha}$ are column-orthonormal matrices, and
$\mathbf{\Sigma} \in \mathbb{R}^{\alpha \times \alpha}$ is a diagonal matrix,
$\alpha$ is called \emph{the target rank} of the SVD, as specified by the user.

If $\alpha=\textrm{rank}(\mathbf{X})$,
then $\mathbf{X}_{\alpha}=\mathbf{X}$,
and we call it the \emph{lossless SVD}.

If $\alpha<\textrm{rank}(\mathbf{X})$,
then ${\|\mathbf{X}-\mathbf{X}_{\alpha}\|}_{2}$ gives the least square estimate error,
and we call it the \emph{low-rank SVD}.
}
during the process of updating ${\mathbf{U}},{\mathbf{\Sigma}},{{\mathbf{V}}}^T$.
This suggests that Li~\etal\!\!'s incremental approach~\cite{Li2010} is inherently \emph{approximate}.
In fact, 
their incremental strategy would neglect some useful eigen-information whenever $\textrm{rank}(\mathbf{Q})<n$.

We also notice that the target rank $r$ for the SVD of the matrix $\mathbf{C}$
\footnote{As defined in \cite{Li2010}, $r$ is the target rank for the SVD of the auxiliary matrix $\mathbf{C} \triangleq \mathbf{\Sigma} + \mathbf{U}^T \cdot \mathbf{\Delta Q} \cdot \mathbf{V}$,
where $\mathbf{\Delta Q}$ is the changes to $\mathbf{Q}$ for link updates.}
is not always negligibly smaller than~$n$.
For example, in Column `$\textsf{sim}_\textsf{Li et al.}$' of Figure~\ref{fig:01},
$r$ is chosen to be $ \textrm{rank}(\mathbf{C})=9$ to get a \emph{lossless} SVD of $\mathbf{C}$.
Although $r=9$ appears not negligibly smaller than $n=15$,
the accuracy of `$\textsf{sim}_\textsf{Li et al.}$' is still undesirable as compared with `$\textsf{sim}_\textsf{true}$',
not to mention using $r<9$. \qed
\end{example}

Example~\ref{eg:01} implies that Li \etal\!\!'s incremental approach~\cite{Li2010} is approximate and may produce high computational overheads
since $r$ is not always much smaller.
%
%
\subsection{Example of Theorem~\ref{thm:01}} \label{app:03b}
\begin{example} \label{eg:04a}
Recall the digraph $G$ in Fig.~\ref{fig:01},
and the edge $(i,j)$ to be inserted into $G$.
Notice that, in the old $G$, $d_j=2>0$ and
\vspace{-15pt} \[{[\mathbf{Q}]}_{j,\star}=\kbordermatrix{
 \hspace*{-0.5em} &   &        &   &         (h)   &     &   &  (k)            &   &        &   \\
 \hspace*{-0.5em} & 0 & \cdots & 0 & \tfrac{1}{2} &   0    & 0    &  \tfrac{1}{2}  & 0 & \cdots & 0 \\
} \in \mathbb{R}^{1 \times 15}.\]
According to Theorem \ref{thm:01},
the change $\mathbf{\Delta Q}$ is a $15 \times 15$ rank-one matrix,
and can be decomposed as
$\mathbf{u}\cdot \mathbf{v}^T$  with

\scalebox{0.85}{$  \mathbf{u}= \tfrac{1}{{{d}_{j}}+1}{{\mathbf{e}}_{j}}=\tfrac{1}{3}{{\mathbf{e}}_{j}}
=\kbordermatrix{
 \hspace*{-0.5em} &   &        &   &     (j)      &   &        &   \\
 \hspace*{-0.5em} & 0 & \cdots & 0 & \tfrac{1}{3} & 0 & \cdots & 0 \\
}{}^T \in \mathbb{R}^{15 \times 1}, $} \\[3pt]
\scalebox{0.85}{$  \mathbf{v}= {{\mathbf{e}}_{i}}-{{[\mathbf{Q}]}_{j,\star}^T}
=\kbordermatrix{
 \hspace*{-0.5em} &   &        &   &         (h)   &  (i)   & (j)  &  (k)            &   &        &   \\
 \hspace*{-0.5em} & 0 & \cdots & 0 & -\tfrac{1}{2} &   1    & 0    &  -\tfrac{1}{2}  & 0 & \cdots & 0 \\
}{}^T \in \mathbb{R}^{15 \times 1}. $}  \qed
\end{example}
\subsection{Example of Algorithm~\ref{alg:01}} \label{app:03c}
\begin{example} \label{eg:04}
Consider the old digraph $G$ and $\mathbf{S}$ in Fig.~\ref{fig:01}.
When the new edge $(i,j)$ is inserted to $G$,
{\IncUSRone} computes the new $\tilde{\mathbf{S}}$ as follows,
whose results are partially depicted in Column `$\textsf{sim}_\textsf{true}$' of Fig.~\ref{fig:01}.

Given the following information from the old $\mathbf{S}$:

\scalebox{0.8}{$  {[\mathbf{S}]}_{\star,i}
=\kbordermatrix{
 \hspace*{-0.5em} &   &        &   &  (f)   &  (g)   & (h)  &  (i)    &   (j)  &   &        &   \\
 \hspace*{-0.5em} & 0, & \cdots, & 0, &  0.246, &   0,    & 0,    &  0.590,  &  0.310, & 0, & \cdots, & 0 \\
}{}^T \in \mathbb{R}^{15 \times 1}, $}

\scalebox{0.8}{$  {[\mathbf{S}]}_{\star,j}
=\kbordermatrix{
 \hspace*{-0.5em} &   &        &   &  (f)   &  (g)   & (h)  &  (i)    &   (j)  &   &        &   \\
 \hspace*{-0.5em} & 0, & \cdots, & 0, &  0.246, &   0,    & 0,    &  0.310,  &  0.510, & 0, & \cdots, & 0 \\
}{}^T \in \mathbb{R}^{15 \times 1}, $} \\[1pt]

\IncUSRone~first computes $\mathbf{w}$ and $\lambda$ via lines \ref{ln:a01-03}--\ref{ln:a01-04}:\\[-15pt]
\begin{eqnarray*}
 {\mathbf{w}}
&=& \kbordermatrix{
 \hspace*{-0.5em} &  (a)   &  (b)   &    &         &   \\
 \hspace*{-0.5em} & 0.104, & 0.139, & 0, & \cdots, & 0 \\
}{}^T \in \mathbb{R}^{15 \times 1}, \\
\lambda &=& 0.590+\tfrac{1}{0.8} \times 0.510 -2 \times 0 - \tfrac{1}{0.8} +1 = 0.978.
\end{eqnarray*}

Since $d_j=2$,
the vectors $\mathbf{u}$ and $\mathbf{v}$ for the rank-one decomposition of $\mathbf{\Delta Q}$ can be computed via line~\ref{ln:a01-07a}.
Their results are depicted in Example \ref{eg:04a}.

Next, $\bm \gamma$ can be obtained from $\mathbf{w}$ and $\lambda$ via line~\ref{ln:a01-08}:
\begin{eqnarray*}
&&{\bm\gamma} = \tfrac{1}{(2+1)} \big( \mathbf{w}-\tfrac{1}{0.8}  {{[\mathbf{S}]}_{\star,j}}+( \tfrac{\lambda }{2 \times ( 2+1 )}+ \tfrac{1}{0.8}-1 )  {{\mathbf{e}}_{j}} \big) \\
&=& \scalebox{0.72}{$ \kbordermatrix{
 \hspace*{-0.5em} &  (a)   &  (b)   &    &    &    &   (f)   &    &     &  (i)    &  (j)    &    &         &   \\
 \hspace*{-0.5em} & 0.035, & 0.046, & 0, & 0, & 0, & -0.086  & 0, &  0, & -0.129, & -0.075, & 0, & \cdots, & 0 \\
}{}^T \in \mathbb{R}^{15 \times 1}$}
\end{eqnarray*}

In light of $\bm \gamma$, $\mathbf{M}_k$ can be computed via lines \ref{ln:a01-13}--\ref{ln:a01-17}.
After $K=10$ iterations, $\mathbf{M}_{K}$ can be derived as follows:
\[
\scalebox{0.6}{$
\begin{blockarray}{ccccc|cccc|c|c}
       & (a)     & (b)      & (c)& (d)      & (e)       & (f)    & \cdots      & (i)          & (j)     & (k)\cdots(o)  \\[3pt]
 \begin{block}{c(cccc|cccc|c|c)}
   (a) & -0.005  & -0.009   &  0 & 0.009    &           &        &             &              & -0.009  &               \\
   (b) & -0.004  & -0.006   &  0 & 0.006    &           &        & \Big{$0$}   &     & -0.007  & \Big{$0$}     \\
   (c) & 0       &  0       &  0 & 0        &           &        &             &              &     0   &               \\
   (d) & -0.002  & -0.002   &  0 &  -0.005  &           &        &             &              &     0   &               \\\cline{1-11}
\vdots &         & \Big{$0$}&    &          &           &        & \Big{$0$}   &              &\Big{$0$}& \Big{$0$}     \\
   (i) &         &          &    &          &           &        &             &              &         &               \\\cline{1-11}
   (j) &  0.028  &  0.037   &  0 &  0       &           & -0.068 &             & -0.104       & -0.060  &               \\\cline{1-11}
\vdots &         & \Big{$0$}&    &          &           &        & \Big{$0$}   &              &\Big{$0$}& \Big{$0$}     \\
   (o) &         &          &    &          &           &        &             &              &         &               \\
 \end{block}
\end{blockarray}$}\]

\vspace{-12pt} Finally, using $\mathbf{M}_{K}$ and the old $\mathbf{S}$, the new $\tilde{\mathbf{S}}$ is obtained via line \ref{ln:a01-18},
as partly shown in Column `$\textsf{sim}_\textsf{true}$' of Fig.~\ref{fig:01}. \qed
\end{example}
\subsection{Example of Theorem~\ref{thm:04}} \label{app:03d}
\begin{example} \label{eg:05}
Recall Example~\ref{eg:04} and the old graph $G$ in Fig.~\ref{fig:01}.
When edge $(i,j)$ is inserted to $G$,
according to Theorem~\ref{thm:04},
${\cal F}_1 =\{a,b\}, \ {\cal F}_2=\{f,i,j\}, \ {\cal A}_0 \times {\cal B}_0=\{j\} \times \{a,b,f,i,j\}$.
Hence, instead of computing the entire vector $\bm \gamma$ in Eqs.\eqref{eq:29aa} and \eqref{eq:29bb},
we only need to compute part of its entries ${[\bm \gamma]}_{x}$ for $\forall x \in {\cal B}_0$.

For the first iteration, since ${\cal A}_1 \times {\cal B}_1=\{a,b\} \times \{a,b,d,j\}$,
then we only need to compute $18 \ (=3 \times 6)$ entries ${[\mathbf{M}_1]}_{x,y}$ for $\forall (x,y)\in \{a,b,j\}\times \{a,b,d,f,i,j\}$,
skipping the computations of $207 \ (={15}^{2} - 18)$ remaining entries in $\mathbf{M}_1$.
After $K=10$ iterations, many unnecessary node-pairs are pruned,
as in part highlighted in the gray rows of the table in Fig.~\ref{fig:01}. \qed
\end{example}
\section{Algorithms \& Analysis} \label{app:04}
\subsection{{\IncUSRone} Algorithm} \label{app:04a}
\setcounter{algocf}{5}
\begin{algorithm}[t]
\small
\DontPrintSemicolon
\SetKwInOut{Input}{Input}
\SetKwInOut{Output}{Output}
\Input{a directed graph $G=(V,E)$, \\
       a new edge $(i,j)_{i \in V, \ j \in V}$ inserted to $G$, \\
       the old similarities $\mathbf{S}$ in $G$, \\
       the number of iterations $K$, \\
       the damping factor $C$.}
\Output{the new similarities ${\mathbf{\tilde{S}}}$ in $G \cup\{(i,j)\}$.}
\nl \label{ln:a01-01} initialize the transition matrix $\mathbf{Q}$ in $G$ ; \;
\nl \label{ln:a01-02}  ${{d}_{j}}:=$ in-degree of node $j$ in $G$ ; \;
\nl \label{ln:a01-03} memoize $\mathbf{w} := \mathbf{Q}\cdot {{[\mathbf{S}]}_{\star,i}}$ ; \;
\nl \label{ln:a01-04} compute $\lambda := {{[\mathbf{S}]}_{i,i}}+\tfrac{1}{C} \cdot {[\mathbf{S}]}_{j,j}-2\cdot {[\mathbf{w}]}_{j} - \tfrac{1}{C} +1$ ; \;
\nl \label{ln:a01-06}         \uIf {${{d}_{j}}=0$} {
\nl \label{ln:a01-06a}              $\mathbf{u} := \mathbf{e}_j, \ \mathbf{v} := \mathbf{e}_i, \ {\bm \gamma} :=  \mathbf{w} +\frac{1}{2}{{[\mathbf{S}]}_{i,i}}\cdot {{\mathbf{e}}_{j}}$; \;}
\nl \label{ln:a01-07}         \Else {
\nl \label{ln:a01-07a}           $\mathbf{u}:= \tfrac{1}{d_j+1} \mathbf{e}_j, \quad \mathbf{v} := \mathbf{e}_i-{[\mathbf{Q}]}_{j,\star}^T$ ; \;
\nl \label{ln:a01-08}  ${\bm\gamma} := \tfrac{1}{({{d}_{j}}+1)} \big( \mathbf{w}-\frac{1}{C}  {{[\mathbf{S}]}_{\star,j}}+( \frac{\lambda }{2\left( {{d}_{j}}+1 \right)}+ \frac{1}{C}-1 )  {{\mathbf{e}}_{j}} \big)$; }
\nl \label{ln:a01-13}      initialize ${{\bm{\xi }}_{0}} := C \cdot \mathbf{e}_j,\quad {{\bm{\eta }}_{0}} := {\bm\gamma},\quad {{\mathbf{M}}_{0}} := C \cdot \mathbf{e}_j \cdot {\bm\gamma}^T$ ; \;
\nl \label{ln:a01-14}  \For {$k=0,1,\cdots, K-1$} {
\nl \label{ln:a01-15}   ${{\bm{\xi }}_{k+1}} := C \cdot \mathbf{Q}\cdot {{\bm{\xi }}_{k}} + C \cdot (\mathbf{v}^T\cdot {{\bm{\xi }}_{k}}) \cdot \mathbf{u}$ ; \;
\nl \label{ln:a01-16}   ${{\bm{\eta }}_{k+1}} := \mathbf{Q}\cdot {{\bm{\eta }}_{k}} + (\mathbf{v}^T\cdot {{\bm{\eta }}_{k}}) \cdot \mathbf{u}$ ;\;
\nl \label{ln:a01-17}   ${{\mathbf{M}}_{k+1}} := {{\bm{\xi }}_{k+1}}\cdot \bm{\eta }_{k+1}^{T}+{{\mathbf{M}}_{k}}$ ; \;
}
\nl \label{ln:a01-18} \Return $\tilde{\mathbf{S}} := \mathbf{S} + \mathbf{M}_{K} + \mathbf{M}_{K}^T$ ; \;
\caption{\IncUSRone~($G, (i,j), \mathbf{S}, K, C$)}  \label{alg:01}
\end{algorithm}

Algorithm~\ref{alg:01} illustrates the pseudo code of \IncUSRone.

Given an old graph $G=(V,E)$, a new edge $(i,j)$ with $i \in V$ and $j \in V$ to be inserted to $G$, the old similarities $\mathbf{S}$ in $G$,
and the damping factor $C$,
{\IncUSRone} incrementally computes $\tilde{\mathbf{S}}$ in $G \cup \{(i,j)\}$ as follows:

First, it initializes the transition matrix $\mathbf{Q}$ and in-degree $d_j$ of node $j$ in $G$ (lines \ref{ln:a01-01}--\ref{ln:a01-02}).
Using $\mathbf{Q}$ and $\mathbf{S}$,
it precomputes the auxiliary vector $\mathbf{w}$ and scalar $\lambda$ (lines \ref{ln:a01-03}--\ref{ln:a01-04}).
Once computed, both $\mathbf{w}$ and $\lambda$ are memoized for precomputing
(i) the vectors $\mathbf{u}$ and $\mathbf{v}$ for a rank-one factorization of $\mathbf{\Delta Q}$, and
(ii) the initial vector ${\bm \gamma}$  for subsequent $\mathbf{M}_{k}$ iterations (lines \ref{ln:a01-06}--\ref{ln:a01-08}).
Then, the algorithm maintains two auxiliary vectors ${\bm{\xi }}_{k}$ and ${\bm{\eta }}_{k}$ to iteratively compute matrix $\mathbf{M}_{k}$ (lines \ref{ln:a01-13}--\ref{ln:a01-17}).
The process continues until the number of iterations reaches a given $K$.
Finally, the new $\tilde{\mathbf{S}}$ is obtained by $\mathbf{M}_{K}$\footnote{We can show ${\|\mathbf{M}_{K}-\mathbf{M}\|}_{\max}\le C^{K+1}$ with $\mathbf{M}$ in Eq.\eqref{eq:29c}.} (line \ref{ln:a01-18}).

\noindent \textbf{Correctness.} \
\IncUSRone~can \emph{correctly} compute new SimRanks for edge update that does not accompany new node insertions,
as verified by Theorems \ref{thm:01}--\ref{thm:03}. \

\noindent \textbf{Complexity.} \
The total complexity of \IncUSRone~is bounded by $O(Kn^2)$ time and $O(n^2)$ memory in the worst case for updating \emph{all} similarities of $n^2$ node-pairs.
Precisely,
\IncUSRone~runs in two phases:
preprocessing (lines \ref{ln:a01-01}--\ref{ln:a01-08}),
and incremental iterations (lines \ref{ln:a01-13}--\ref{ln:a01-18}):

(a) For the preprocessing,
it requires $O(m)$ time in total ($m$ is the number of edges in the old $G$),
which is dominated by computing $\mathbf{w}$ (lines \ref{ln:a01-03}), involving the matrix-vector multiplication $\mathbf{Q}\cdot {{[\mathbf{S}]}_{\star,i}}$.
The time for computing vectors $\mathbf{u}, \mathbf{v}, {\bm \gamma}$ is bounded by $O(n)$,
which includes only vector scaling and additions, \ie SAXPY.

(b) For the incremental iterative phase,
computing ${{\bm{\xi }}_{k+1}}$ and ${{\bm{\eta }}_{k+1}}$ needs $O(m+n)$ time for each iteration (lines \ref{ln:a01-15}--\ref{ln:a01-16}).
Computing ${{\mathbf{M}}_{k+1}}$ entails $O(n^2)$ time for performing one outer product of two vectors and one matrix addition (lines \ref{ln:a01-17}).
Thus, the cost of this phase is $O(Kn^2)$ time for $K$ iterations.

Collecting (a) and (b), all $n^2$ node-pair similarities can be incrementally computed in $O(Kn^2)$ total time.
%
\subsection{{\IncSR} Algorithm with Pruning} \label{app:04b}
\begin{algorithm}[t]
\small
\DontPrintSemicolon
\SetKwInOut{Input}{Input / Output}
\Input{the same as Algorithm~\ref{alg:01}.}
\lnlset{ln:a02-01}{1-2} the same as Algorithm~\ref{alg:01} ;\;
\lnlset{ln:a02-02}{3} find ${{\mathsf{\mathcal{B}}}_{0}}$ via Eq.\eqref{eq:41} ; \;
\lnlset{ln:a02-03}{} memoize ${[\mathbf{w}]}_{b} := {[\mathbf{Q}]}_{b,\star}\cdot {{[\mathbf{S}]}_{\star,i}}$,  for all $b \in {{\mathsf{\mathcal{B}}}_{0}}$ ; \;
\lnlset{ln:a02-04}{4-12} almost the same as Algorithm~\ref{alg:01} except that the computations of the entire vector $\bm \gamma$ in Lines $6,8,10,12$ are replaced by the computations of only parts of entries in $\bm \gamma$, respectively,
\eg  in Line~6 of Algorithm~\ref{alg:01}, ``${\bm \gamma}:=\mathbf{w}+\tfrac{1}{2}{[\mathbf{S}]}_{i,i}\cdot \mathbf{e}_j$''
are replaced by ``${[{\bm \gamma}]}_{b}:={[\mathbf{w}]}_{b}+\tfrac{1}{2}{[\mathbf{S}]}_{i,i}\cdot {[\mathbf{e}_j]}_{b}$, for all $b \in {\cal B}_{0}$'' ; \;
\lnlset{ln:a02-13}{13} ${[{{\bm{\xi }}_{0}}]}_j := C ,\ {[{\bm{\eta }}_{0}]}_{b} := {[{\bm\gamma}]}_{b}, \ {[{\mathbf{M}}_{0}]}_{j,b} := C \cdot {[{\bm\gamma}]}_{b}, \forall b \in {\cal B}_0$;\;
\lnlset{ln:a02-14}{14} \For {$k=1,\cdots, K$} {
\lnlset{ln:a02-15}{15} find ${{\mathsf{\mathcal{A}}}_{k}}\times {{\mathsf{\mathcal{B}}}_{k}}$ via Eq.\eqref{eq:41} ; \;
\lnlset{ln:a02-16}{16}  memoize $\sigma_1 := C \cdot (\mathbf{v}^T\cdot {{\bm{\xi }}_{k-1}}), \ \sigma_2 :=\mathbf{v}^T\cdot {{\bm{\eta }}_{k-1}} $ ; \;
\lnlset{ln:a02-17}{17} ${[{\bm{\xi }}_{k}]}_{a} := C \cdot {[\mathbf{Q}]}_{a, \star}\cdot {{\bm{\xi }}_{k-1}} +  \sigma_1 \cdot {[\mathbf{u}]}_{a}$, for all $a \in {\cal A}_{k}$ ; \;
\lnlset{ln:a02-18}{18} ${[{\bm{\eta }}_{k}]}_{b} := {[\mathbf{Q}]}_{b,\star}\cdot {{\bm{\eta }}_{k-1}} + \sigma_2 \cdot {[\mathbf{u}]}_{b}$, for all $b \in {\cal B}_{k}$ ;\;
\lnlset{ln:a02-19}{19} \scalebox{0.92}{${[{\mathbf{M}}_{k}]}_{a,b} := {[{\bm{\xi }}_{k}]}_{a}\cdot {[\bm{\eta }_{k}]}_{b}+{[{\mathbf{M}}_{k-1}]}_{a,b}, \ \forall (a,b) \in {{\mathsf{\mathcal{A}}}_{k}}\times {{\mathsf{\mathcal{B}}}_{k}}$;}\;
}
\lnlset{ln:a02-20}{20} \scalebox{0.95}{${[\tilde{\mathbf{S}}]}_{a,b} := {[\mathbf{S}]}_{a,b} + {[\mathbf{M}_{K}]}_{a,b} + {[\mathbf{M}_{K}]}_{b,a}, \ \forall (a,b) \in {{\mathsf{\mathcal{A}}}_{K}}\times {{\mathsf{\mathcal{B}}}_{K}}$;}\;
\lnlset{ln:a02-21}{21} \Return $\tilde{\mathbf{S}}$ ; \;
\caption{\IncSR~($G, \mathbf{S}, K, (i,j), C$)}  \label{alg:02}
\end{algorithm}

Algorithm~\ref{alg:02} illustrates the pseudo code of {\IncSR}.

\noindent \textbf{Correctness.} \
\IncSR~can \emph{correctly} prune the node-pairs with a-priori zero scores in $\mathbf{\Delta S}$,
which is verified by Theorem~\ref{thm:04}.
It also \emph{correctly} returns the new similarities,
as evidenced by Theorems \ref{thm:01}--\ref{thm:03}.

\noindent \textbf{Complexity.} \
The total time of \IncSR~is $O(K(m+|\AFF|))$ for $K$ iterations,
where $|\AFF|:= \textrm{avg}_{k \in[0,K]} ( |{\cal A}_k| \cdot |{\cal B}_k|)$
with ${\cal A}_k, {\cal B}_k$ in Eq.\eqref{eq:41},
being the average size of ``affected areas'' in $\mathbf{M}_k$ for $K$ iterations.
More concretely, (a) for the preprocessing,
finding ${{\mathsf{\mathcal{B}}}_{0}}$ (line~\ref{ln:a02-02}) needs $O(dn)$ time.
Utilizing ${{\mathsf{\mathcal{B}}}_{0}}$,
computing ${[\mathbf{w}]}_{b}$ reduces from $O(m)$ to $O(d|{\cal B}_0|)$ time,
with $|{\cal B}_0| \ll n$.
Analogously, $\bm \gamma$ in lines 6,8,10,12 of Algorithm~\ref{alg:01} needs only $O(|{\cal B}_0|)$ time.
(b) For each iteration,
finding ${\cal A}_k \times {{\mathsf{\mathcal{B}}}_{k}}$ (line \ref{ln:a02-15}) entails $O(dn)$ time.
Memoizing $\sigma_1, \sigma_2$ needs $O(n)$ time (line \ref{ln:a02-16}).
Computing ${\bm \xi}$ (\Resp $\bm \eta$) reduces from $O(m)$ to $O(d|{\cal A}_k|)$ (\Resp $O(d|{\cal B}_k|)$) time (lines \ref{ln:a02-17}--\ref{ln:a02-18}).
Computing ${[{\mathbf{M}}_{k}]}_{a,b}$ reduces from $O(n^2)$ to $O(|{\cal A}_k||{\cal B}_k|)$ time (line \ref{ln:a02-19}).
Thus,
the total time complexity can be bounded by $O(K(m+|\AFF|))$ for $K$ iterations.

It is worth mentioning that \IncSR, in the worst case, has the same complexity bound as \IncUSR.
However, in practice, $|\AFF| \ll n^2$, as demonstrated by our experimental study in Fig.\ref{fig:exp_08}.

%
\section{Description of Real Datasets} \label{app:05}
The description of the real datasets is as follows:

(1) \underline{\DBLP}\footnote{http://dblp.uni-trier.de/\~{}ley/db/} is a co-citation graph,
where each node is a paper with attributes (\eg publication year), and edges are citations.
By virtue of the publication year,
we extract several snapshots. 

(2) \underline{\CITH}\footnote{http://snap.stanford.edu/data/} is a reference network (cit-HepPh) from e-Arxiv.
If a paper $u$ references $v$,
there is a link $u \to v$.

(3) \underline{\YOUTU}\footnote{http://netsg.cs.sfu.ca/youtubedata/} is a YouTube network,  
where a video $u$ (node) is linked to $v$ if $v$ is in the relevant video list of $u$.
We extract snapshots based on the age of videos.

(4) \underline{{\WEBB}} is a Berkeley-Stanford web graph, where nodes are pages from \textsf{berkely.edu} and \textsf{stanford.edu} domains,
and edges are hyperlinks.

(5) \underline{{\WEBG}} is a Google web graph, where nodes are web pages, and edges are hyperlinks.

(6) \underline{{\CITP}} is a patent citation network among US, where a node is a granted patent, and a link a citation.

(7) \underline{{\SOCL}} is a LiveJournal friendship social network, where a node is a user, and a link denotes friendship.

(8) \underline{{\UK}}\footnote{http://law.di.unimi.it/datasets.php} is a web graph obtained from a 2005 crawl of the \textsf{.uk} domain, where an edge is a link from one web page (node) to another.

(9) \underline{{\IT}} is a web graph crawled from the \textsf{.it} domain, where an edge is a link from a page (node) to another.

\end{appendices}
\end{document}